\documentclass[11pt]{article}
\usepackage[T1]{fontenc}
\usepackage[utf8]{inputenc}
\usepackage{lmodern,amsmath,amssymb,amsthm,mathtools,microtype}
\usepackage[letterpaper,margin=1in]{geometry}
\usepackage{booktabs,array}
\usepackage[dvipsnames]{xcolor}
\usepackage{tikz}
\usetikzlibrary{arrows.meta,calc,decorations.pathreplacing,positioning,shapes.geometric,fit,backgrounds}
\usepackage{pgfplots}
\pgfplotsset{compat=1.18}
\usepackage{caption}
\usepackage{enumitem}
\usepackage[hidelinks,hypertexnames=false]{hyperref}
\usepackage[capitalize,nameinlink,noabbrev]{cleveref}
\usepackage{thmtools}
\declaretheorem[name=Theorem,numberwithin=section]{theorem}
\declaretheorem[name=Lemma,sibling=theorem]{lemma}
\declaretheorem[name=Proposition,sibling=theorem]{proposition}
\declaretheorem[name=Corollary,sibling=theorem]{corollary}

\declaretheorem[name=Definition,sibling=theorem]{definition}
\numberwithin{equation}{section}
\crefname{appendix}{Appendix}{Appendices}
\Crefname{appendix}{Appendix}{Appendices}

\newcommand{\er}{\operatorname{er}}                 
\newcommand{\crit}{\operatorname{crit}}                
\newcommand{\wt}{\operatorname{wt}}           
\newcommand{\wvar}[1]{\mathsf{#1}}             
\newcommand{\ctr}{\kappa}                     

\newcommand{\nint}[1]{\lVert #1\rVert}        
\newcommand{\Lab}{\operatorname{CR}}         
\newcommand{\pre}{\operatorname{pre}}
\newcommand{\Labeling}{\textsc{Labeling}}
\newcommand{\GreedyCover}{\textup{\textsc{GreedyAllocation}}}
\newcommand{\FGA}{\operatorname{MinGA}}
\newcommand{\TER}{\textup{\textsc{Tree-Edge-Ranking}}}
\newcommand{\AMtwoCVP}{\textup{\textsc{AM2CVP}}}

\newcommand{\sigH}{\mathsf H}                 
\newcommand{\sigL}{\mathsf L}                 
\newcommand{\Pc}{\mathrm{P}}
\newcommand{\NC}{\mathrm{NC}}

\def\DEBUGFLAG{}
\ifdefined\DEBUGFLAG

  \def\rem#1{{\marginpar{\raggedright\scriptsize #1}}}
  \newcommand{\mck}[1]{\rem{\textcolor{red}{$\bullet$ #1}}}
  \newcommand{\mjg}[1]{\rem{\textcolor{blue}{$\bullet$ #1}}}
  \newcommand{\mclaudecode}[1]{\rem{\textcolor{teal}{$\bullet$ #1}}}
  \newcommand{\mcodex}[1]{\rem{\textcolor{Plum}{$\bullet$ #1}}}

\else

  \newcommand{\mck}[1]{}
  \newcommand{\mjg}[1]{}
  \newcommand{\mclaudecode}[1]{}
  \newcommand{\mcodex}[1]{}
\fi

\colorlet{heavycol}{NavyBlue}
\colorlet{attcol}{BrickRed}
\colorlet{grpA}{NavyBlue}
\colorlet{grpB}{BurntOrange}
\colorlet{gapcol}{ForestGreen}
\colorlet{hicol}{NavyBlue}
\colorlet{locol}{BurntOrange}
\colorlet{visc}{BrickRed}
\colorlet{forwardcol}{NavyBlue}
\colorlet{delivercol}{ForestGreen}
\tikzset{
  terpforward/.style={-{Stealth[length=2mm]},thick,solid,draw=forwardcol},
  terpdeliver/.style={-{Stealth[length=2mm]},thick,densely dashed,draw=delivercol}
}

\title{Parallel Edge Ranking of Trees}
\author{Jeff Giliberti \quad MohammadTaghi Hajiaghayi \quad Changki Yun\\[0.3em]
\normalsize University of Maryland, College Park\\
\small\texttt{\{jeffgili,hajiagha,ckyun\}@umd.edu}}
\date{}

\begin{document}
\maketitle\vspace{-2.5em}

\begin{abstract}
In this work, we prove that computing the edge ranking of a tree in parallel is P-Complete, and complement this hardness with new algorithmic results for both PRAM and the MPC model. An optimal tree edge ranking assigns positive integer ranks to the edges such that any two edges with the same rank are separated by an edge of higher rank, while minimizing the highest rank. Tree edge ranking abstracts several classical problems such as edge-query binary search on trees, minimum-height dendrograms, reversible pebble game, and parallel assembly in manufacturing. Its parallel complexity has remained open for almost thirty-five years, since the seminal work of de la Torre, Greenlaw, and Sch{\"a}ffer~[SODA'93], and was listed as an open problem in the book \textit{Limits to Parallel Computation} by Greenlaw, Hoover, and Ruzzo~(1995). 
Specifically, we show that deciding whether a tree has edge ranking at most $K$ is P-complete, already for trees of diameter at most six. Our reduction encodes monotone circuit evaluation into a greedy allocation procedure underlying known sequential algorithms. Using a rooted product of trees, we further prove that computing an edge ranking within constant additive error remains $\Pc$-hard.
 
 On the algorithmic side, we break the long-standing 2-approximation barrier by designing a deterministic NC algorithm that computes an $(1+o(1))$-approximate edge ranking. Second, we show that P-completeness can be bypassed in Massively Parallel Computation (MPC) with strongly sublinear local memory: we give a deterministic algorithm that computes an optimal edge ranking of an $n$-vertex, diameter-$D$ tree in $O(\log D + \frac{n^{3/4}D^{1/4}}{S})$ rounds with $S = \Theta(n^{3/4+\delta})$ local memory, for $\delta \in (0, 1/4)$, requiring only $O(1)$ rounds on the hard diameter-six instances. Our MPC algorithm circumvents the $\Omega(n)$-memory bottleneck of previous methods by compressing the information required to produce an optimal ranking from subtree merges. Overall, this result reinforces the separation between $\mathsf{NC}$ and what can be computed efficiently in MPC.
\end{abstract}

\setcounter{tocdepth}{1}
\tableofcontents
\newpage
\section{Introduction}\label{sec:intro}

An \emph{edge ranking} of a tree assigns positive integer ranks to its edges so that the path between any two distinct edges of equal rank contains an edge of higher rank. The tree edge ranking problem asks for a ranking that minimizes the maximum rank used.
Computing an optimal edge ranking on trees is a fundamental combinatorial problem, with numerous applications, that has received considerable attention since the early nineties~\cite{IyerRV91}, and even earlier~\cite{linial1985searching, bennett1989time} in slightly different forms, later shown to be closely related \cite{D08, KomarathSS18}. Beyond its structural interest, tree edge ranking encompasses a broad variety of problems.
In hierarchical clustering, it defines a minimum-height dendrogram of connected subtrees~\cite{IyerRV91, NWD89}. 
In trees and tree-like partially ordered sets, it determines the minimum worst-case number of edge queries required to locate a target element~\cite{linial1985searching, BFN99,OnakP06,MozesOW08,D08}. In combinatorial games, it is equivalent to finding the reversible pebble game number of a tree~\cite{bennett1989time,Chan13,KomarathSS18}. In practical scenarios, it models assembling a product from its components in the minimum number of stages~\cite{IyerRV91}, evaluating parallel database queries~\cite{MakinoUI01,DereniowskiK06}, and synchronizing distributed file systems~\cite{BFN99}.

Tree edge ranking was initially suspected to be NP-hard, owing to its similarity to certain NP-complete edge-labeling problems \cite{perl1982complexity} (see also \cite{delaTorreGS95}). Further, Iyer, Ratliff, and Vijayan~\cite{IyerRV91} gave a polynomial-time 2-approximation algorithm and posed the existence of an exact polynomial-time algorithm as an open question. In a breakthrough result, de~la~Torre, Greenlaw, and Sch{\"a}ffer~\cite{delaTorreGS93,delaTorreGS95} placed the problem in P. Their runtime was later improved in a series of works \cite{ZhouNishizeki1994, ZhouNishizeki1995}, culminating in the optimal linear-time and space algorithm of Lam and Yue~\cite{LamY98, LamY01}.

\paragraph{Hardness of edge ranking.} In the parallel setting, progress on this problem remained elusive. In particular, its $\NC$ membership (whether an optimal tree edge ranking can be found in polylogarithmic time with polynomially many processors) has remained wide open. An $\NC$ algorithm that finds a ranking with additive error $O(\log n)$ is implicit in the work of~\cite{IyerRV91}. A parallel algorithm for exact ranking with an exponential dependency on the maximum degree was developed in~\cite{delaTorreGS95}. In contrast, an algorithmic routine, \textsc{FillSlots}, used by the sequential algorithm of \cite{delaTorreGS95} was shown to be $\Pc$-complete, ruling out a direct parallelization of their approach, but without settling the complexity of tree edge ranking itself. Accordingly, the book \emph{Limits to Parallel Computation} by Greenlaw, Hoover, and Ruzzo, regarded as the P-completeness analog of Garey and Johnson's NP-completeness book~\cite{garey2002computers}, lists tree edge ranking as an open problem \cite[Problem B.1.2]{GreenlawHR95}. While edge ranking on \textit{general} graphs was subsequently shown to be NP-hard \cite{LamY98hard}, determining whether tree edge ranking lies in $\NC$ or is $\Pc$-complete has stood unresolved.

We resolve this almost 35-year-old open question, proving that tree edge ranking is $\Pc$-complete:

\begin{theorem}\label{thm:pcomplete} Given a tree $T$ and an integer $K \ge 0$, deciding whether $T$ has edge ranking at most $K$ is $\Pc$-complete under $\NC^1$ many-one reductions, already for trees of diameter at most six.
\end{theorem}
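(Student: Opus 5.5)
Membership in $\Pc$ follows from the linear-time sequential algorithm of Lam and Yue~\cite{LamY98,LamY01} (or the polynomial algorithm of~\cite{delaTorreGS95}): compute an optimal ranking and compare its maximum rank with $K$. The substance is $\Pc$-hardness. I would reduce from the alternating monotone two-input circuit value problem \AMtwoCVP{}, which is $\Pc$-complete under $\NC^1$ reductions: layered circuits whose levels alternate AND/OR, with fan-in and fan-out two, and the question whether the output is $1$.

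The plan has two stages. \emph{Stage 1} isolates a combinatorial core of the sequential algorithms, a greedy allocation procedure (\GreedyCover{}). I read it as follows: at a vertex whose children have already-ranked subtrees, each child edge must receive a rank in a set of ``free slots'' determined by its critical list, and the children are processed greedily by descending criticality, each taking the smallest feasible slot. I would first prove a structural lemma for stars of stars and depth-three trees: for a depth-$\le 3$ tree around a centre $r$, the optimal rank is at most $K$ if and only if \GreedyCover{} succeeds on the visible-rank sets of the subtrees hanging off $r$. Here a subtree's visible ranks can be prescribed by choosing leg lengths and branching. Attaching such gadgets on both sides of a central edge gives diameter at most six.

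\emph{Stage 2} encodes the circuit into one greedy allocation instance, in the spirit of the $\Pc$-completeness of \textsc{FillSlots}. Each gate gets a block of consecutive ranks, with later circuit levels getting lower ranks, since greedy processes high ranks first. Gate values correspond to whether a designated ``signal'' slot in the block is consumed ($\sigH$ versus $\sigL$). Each consumer subtree claims either its preferred slot or a fallback slot, depending on which slots earlier items took.
\begin{itemize}
\item An OR gate is a single item that can use either input's freed slot.
\item An AND gate needs both input signals, using a chain of two items.
\item Fan-out two is handled by a duplicator item.
\end{itemize}
Monotonicity keeps the encoding consistent without negation. The rank budget $K$ is set so that the root item fits exactly when the output gate is $1$. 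Every gadget is a constant-size pattern of path lengths determined by gate index and level. This makes the construction computable in $\NC^1$: each vertex and edge is a simple arithmetic function of local circuit data.

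The main obstacle is Stage 1 combined with the correctness of the gadgets: showing that the optimal ranking cannot "cheat" by deviating from greedy. Specifically, we must rule out an optimal ranking that lowers a subtree's visible set, rearranging its internal ranking to free a different slot, and thereby decouples the signals. I would attack this with an exchange argument. Each gadget subtree must be rigid: its minimal achievable critical lists form a chain with exactly two options (signal or fallback), and any other configuration costs a rank above $K$. Proving rigidity with depth bounded so that the diameter stays at most six is where I expect most of the effort, including careful accounting of rank-block separations so that gadgets of different gates do not interact.
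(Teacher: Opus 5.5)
Your outer architecture matches the paper: membership via Lam--Yue, then \AMtwoCVP{} $\to$ an allocation problem at a single centre $\to$ \TER{}, with later circuit nodes on lower ranks. But the step that makes such a reduction sound is missing, and the mechanism you propose cannot supply it.

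Simulating a circuit by a greedy rule says nothing about $\er(T)$ unless the rule provably decides $\er(T)\le K$. This is exactly why the $\Pc$-completeness of \textsc{FillSlots} left the problem open. Your reading of the greedy (children in descending criticality, each taking the smallest feasible slot) is not exact. Take a root whose three child subtrees have critical sets $\{3\}$, $\{1,2\}$, $\{2\}$: a $3$-edge star hung by a leaf, a $2$-edge star hung by its centre, and a $2$-edge path hung by an end. Your rule gives the root edges ranks $1,4,5$ in that order. Ranks $4,3,1$ instead expose the disjoint sets $\{4\},\{3\},\{1,2\}$, a valid ranking of maximum rank $4$.

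Rigidity cannot be built into gadgets either. When $\crit(B)\not\subseteq X$, take a critical ranking of a branch $B$ and raise its top edge to $\max(\crit(B)\triangle X)$. This serves \emph{any} rank set $X$ with $\wt(X)\ge\wt(\crit(B))$ (\cref{prop:branch-requirements}), so a branch's options are limited only by the global budget.

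The missing idea is to exploit this flexibility rather than fight it:
\begin{itemize}
\item Each branch collapses to one integer requirement $\wvar{q}_i=\wt(\crit(B_i))$, and $\er(T)\le K$ holds iff $\{2^{K-1},\dots,2^0\}$ splits into parts of weights at least $\wvar{q}_i$.
\item This is decided exactly by the high-to-low greedy (\cref{lem:greedy}, by an exchange argument). At clock $e$ it subtracts $2^e$ from the unique residual $\ge 2^e$, or else covers the largest residual outright. It rejects if some residual is $\ge 2^{e+1}$ or two are $\ge 2^e$.
\item Since only weights matter, every integer $\wvar{q}$ is realized by a height-three branch. Hang one star of $a_j+1$ edges by a leaf for each $1$-bit $2^{a_j}$ of $\wvar{q}-1$, and add a top edge, which adds $1$.
\item Identifying all gadget roots at one vertex gives diameter six. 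Depth-three gadgets on both sides of a central edge would give seven.
\end{itemize}

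Stage~2, where the reduction actually lives, is only named, and its primitives do not match the true dynamics. A requirement may absorb several weights. Each weight either strips a residual's leading bit or erases the residual together with its low-order tail. So consumption is destructive, two incoming messages never add, and ``an item taking a preferred or fallback slot'' is not a primitive of this process.

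What is needed are concrete requirement families with fully verified runs, as in the paper:
\begin{itemize}
\item signals $\sigH_e=\{2^e\}$ and $\sigL_e=\{2^{e-9}+2^{e-10}+1,\,2^{e-9}+2^{e-10}\}$;
\item a $15$-clock selector whose one-clock subtract-versus-cover shift exposes either six or four embedded message tails, which is how fan-out two is handled;
\item a $3$-clock AND gadget in which the left input decides whether the right input's leading bits are subtracted or covered;
\item $18$-clock blocks so that nodes never interact;
\item a final test requirement forcing rejection exactly when the output is $0$.
\end{itemize}
The paper even avoids designing an OR gadget. Its selectors deliver complemented bits, and the AND/OR alternation with De Morgan's law gives $z_i=(1-z_j)\wedge(1-z_k)$.

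Your single-item OR, two-item AND, and duplicator are neither specified as requirements nor checked against the exact greedy. That verification is the bulk of the hardness proof.
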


Moreover, we show $\Pc$-hardness even for computing an edge ranking within constant additive error. These results strongly suggest hardness of efficient parallelization, meaning that this problem has an inherently sequential structure. In particular, they rule out $\NC$ algorithms in shared-memory models such as PRAM, the standard abstraction for shared-memory parallel architectures~\cite{FortuneW78, KarpR90, Jaja92}. However, they neither rule out $\NC$ algorithms with super-constant, sublogarithmic additive error, nor transfer directly to other parallel computational settings, such as the model of Massively Parallel Computation (MPC)~\cite{KarloffSV10, GoodrichSZ11, BeameKS17, AndoniNOY14}, where coarse-grained parallelism is interleaved with sequential, local computation across machines with limited memory.

\paragraph{Parallel algorithms.} We make considerable progress in both directions, giving an algorithmic result for both PRAM and MPC. First, we show that tree edge ranking admits better-than-$O(\log n)$ additive approximation:
\begin{theorem} \label{thm:approx-nc-main}
    Given an $n$-vertex tree, there is a deterministic CRCW PRAM algorithm that finds an edge ranking with additive error at most $\frac{\log_2 n}{\log_2\log_2 n}$, in $O(\log^4 n)$ time with polynomial work.
\end{theorem}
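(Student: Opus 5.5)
We obtain the approximation by hybridizing an exact dynamic program, run only on small pieces, with a coarse recursive decomposition along a centroid-type separator; the $\log n/\log\log n$ additive error comes from losing $O(1)$ per level of a shallow recursion. Concretely, set $b=\lceil\log_2 n\rceil$ and $k\approx\log_2 n/\log_2\log_2 n$. We use the standard fact (implicit in the $O(\log n)$-additive NC algorithm of \cite{IyerRV91}) that edge ranking is monotone under taking subtrees and subadditive over separators. If $F\subseteq E(T)$ is a set of edges whose removal leaves components $T_1,\dots,T_m$, then
\[
\er(T)\le \max_i \er(T_i)+\er(T/\!\!\setminus F),
\]
where the second term is the ranking needed for $F$ viewed as the tree obtained by contracting every component. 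We give $F$ ranks above all ranks used inside the components, which is valid because any two equal-rank edges in different components are separated by an edge of $F$.

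The plan has three steps.

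\textbf{Step 1 (exact base case).} Every tree of size at most $2^{b^{1/2}}$, or more generally with optimal rank at most $c\log\log n$, can be ranked optimally in NC. We process such trees with the known sequential linear-time algorithm \cite{LamY01}, run in parallel over the many disjoint small pieces. Alternatively, the degree-exponential parallel algorithm of \cite{delaTorreGS95} applies after replacing high-degree vertices by gadgets. Each piece has polylogarithmic size, so running the sequential algorithm on it takes polylogarithmic time.

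\textbf{Step 2 (coarsening).} We cut $T$ into pieces of size at most $s=2^{\Theta(\log n/k)}$ by removing a set $F$ whose contracted tree has depth, and hence rank, at most $k+O(1)$. For this we use the parallel tree-contraction/centroid hierarchy in $O(\log^2 n)$ time. We take the top $\log_2(n/s)\approx\log n-\log n/k$ levels of a centroid decomposition, grouped into $k$ blocks. Within each block we rank by an exact computation on the contracted block tree, which is small and so can be handled by Step 1.

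\textbf{Step 3 (comparison to OPT).} This is the main obstacle. We must show that at each of the $k$ block boundaries we lose only $O(1)$ relative to $\er(T)$, rather than losing the full depth of the block. The plan is a charging argument on the optimal ranking: the top $r$ ranks of an optimal ranking split $T$ into components, each of which has rank at most $\er(T)-r$. We then use the known characterization of optimal rankings by critical lists / visible ranks \cite{delaTorreGS95}. It shows that the block-wise exact ranking, which is computed on each block tree with subtrees summarized by their labels, is off by at most one per block from a ranking that uses optimal labels. Summing over the $k$ blocks gives additive error at most $k\le \log_2 n/\log_2\log_2 n$.

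The time bound is $O(\log^2 n)$ per level of centroid and contraction work. Over $O(\log n)$ total levels this gives the stated $O(\log^4 n)$ once the polylogarithmic-size exact subroutines and the label merges are included, all with polynomial work. If the one-per-block loss in Step 3 fails, the fallback is to increase the block size so that $O(1)$ loss per block still yields $o(\log n)$ total error, and to tune $k$ accordingly.
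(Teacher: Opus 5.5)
Your skeleton of roughly $\log_2 n/\log_2\log_2 n$ levels matches the paper's: each level shrinks the pieces by a factor of about $\log_2 n$, loses one rank, and does an exact computation in between. However, both load-bearing steps are missing, so there is a genuine gap.

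First, the exact computation on the ``contracted block tree'' cannot be handed to Step~1. That tree is not small, and in general it is as hard as the original problem. A centroid can have polynomially many child components, each large enough to contain next-level centroids of the same block. On the diameter-six instances of \cref{thm:pcomplete}, for example, the root is the centroid and the decomposition ends after a constant number of levels, so your first block is essentially the whole tree. Ranking such a block exactly means merging polynomially many branches with nontrivial critical sets at one vertex, which is the $\Pc$-complete \GreedyCover{} problem. The paper instead uses a size-threshold decomposition with $b=\lceil\log_2 n\rceil$. Its retained tree $T_0$ has at most $b$ leaves, so every vertex has at most $b$ children in $T_0$ and all its other children are single leaf edges. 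Even then $T_0$ can have $\Theta(n)$ vertices, so the paper needs two genuinely new $\NC$ routines:
\begin{itemize}
\item a shortest-path DP over states (rank $h$, next leaf edge, set of unfinished branches), which is polynomial only because $2^b=n^{O(1)}$ and is correct because an unfinished branch must still need exactly $\wvar{w}(B_i)\bmod 2^h$;
\item a path DP for paths of length $\Omega(n)$, which guesses the largest path-edge rank, finds feasible entries by doubling reachability, and then evaluates them by shortest paths.
\end{itemize}

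Second, Step~3 asserts the one-rank loss per block but does not prove it. The subadditive bound you state loses $\er(T/F)$ every time. Summarizing subtrees ``by their labels'' means either full critical sets, which are the $\Pc$-hard object, or a single number. The paper makes the single-number summary work through the pendant structure of its decomposition. Each attached subtree $T_i$ hangs from $T_0$ by exactly one edge, is ranked recursively with maximum rank $h_i$, and is contracted into a leaf edge with lower bound $h_i+1$. Project an optimal ranking of $T$ onto the contracted tree, giving each contracted leaf edge the maximum rank of its branch, and shift all ranks by $E+1$. This shows that an optimal lower-bounded ranking loses at most one rank beyond the error $E$ of the pieces (\cref{lem:approx-rounding}). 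Components of a centroid decomposition are in general attached through two or more separator edges, so they admit no such one-number summary.

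Smaller issues:
\begin{itemize}
\item Depth does not bound edge rank: a star has constant depth and rank $n-1$.
\item Pieces of size $2^{\sqrt{\log_2 n}}$ are not polylogarithmic, so running the sequential algorithm on them does not take polylogarithmic time.
\item The suggested gadget replacement of high-degree vertices is unjustified.
\end{itemize}
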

Our result for general graphs is the first improvement in thirty-five years since the work of~\cite{IyerRV91}. Moreover, as the maximum rank is at least $\log_2 n$ for any tree, our algorithm has multiplicative approximation factor $(1 + o(1))$. See \Cref{tab:parallel-comparison} for a comparison of known results.

\begin{table}[htbp]
\centering
\small
\renewcommand{\arraystretch}{1.25}
\setlength{\tabcolsep}{10pt}
\begin{tabular}{@{}llll@{}}
\toprule
Result & Tree Class & Maximum Rank & Time (polynomial work) \\
\midrule
\cite{IyerRV91}       & Arbitrary       & $\le \er(T) + \log_2 n$                                     & $O(\log^2 n)$ \\
\cite{delaTorreGS95}  & Bounded-degree  & Optimal                                                     & $O(\log^2 n \log \log n)$ \\
\cite{KashemZN98}     & Bounded-degree  & Optimal                                                     & $O(\log n)$ \\
\midrule
This work             & Arbitrary       & $\displaystyle\le \er(T) + \frac{\log_2 n}{\log_2 \log_2 n}$ & $O(\log^4 n)$ \\
\bottomrule
\end{tabular}
\caption{Parallel algorithms for edge ranking on $n$-vertex trees; $\er(T)$ denotes
    optimal maximum rank.}
\label{tab:parallel-comparison}
\end{table}

\textbf{MPC algorithm.}
Our second direction investigates whether the barrier of $\Pc$-completeness can be bypassed in modern settings of parallel computations. Over the past two decades, Massively Parallel Computation (MPC) has emerged as the standard abstraction for large-scale parallel computation, capturing prominent frameworks such as MapReduce~\cite{DeanG04,DeanG08}, Spark~\cite{ZahariaCFSS10,ZahariaCDRRSS12}, Hadoop~\cite{white2012hadoop}, and Dryad~\cite{IsardBYBF07}. In this model, the input of size $N$ is evenly distributed across $M$ machines, each with local memory $S$ such that $M \cdot S = \widetilde O(N)$.\footnote{We use the notation $\widetilde O(\cdot)$ to suppress polylogarithmic factors.}
Computation proceeds in synchronous rounds where machines perform local computation and all-to-all communication, subject to the constraint that the total data sent and received by any machine per round is $O(S)$. The primary objective is to minimize the number of  rounds. Although local computation is not charged to the round complexity, the degree of parallelism is significantly restricted by the local memory constraint. 
Indeed, with a strongly sublinear local memory, no single machine can see the entire graph and global decisions must be made using limited, local information.

A primary motif of MPC research has been obtaining faster algorithms than their PRAM counterparts, with the gold standard of a sublogarithmic or constant number of rounds where PRAM requires $\Omega(\log n)$ time. Conversely, under conditional barriers such as the widely believed 1-vs-2 cycle conjecture~\cite{RoughgardenVW18, NanongkaiS22}, basic problems like connectivity resist sublogarithmic MPC rounds \textit{while} admitting logarithmic-time PRAM algorithms~\cite{ShiloachV82,LTZ20}. This suggests that certain problems \textit{cannot} gain substantial MPC speedups.

The reverse direction of whether problems that are \emph{intractable} in classical parallel models, i.e., P-Complete, can be parallelized efficiently in MPC (with strongly sublinear local memory) has remained largely unexplored. 
$\Pc$-completeness reductions based on circuits neither rule out polylogarithmic time MPC algorithms nor translate to efficient MPC algorithms, meaning that problems equivalent under P-Complete reductions may not share the same MPC complexity class. 
To our knowledge, the only known instance where a $\Pc$-complete problem admits an efficient MPC algorithm is the work of~\cite{DhulipalaDKPSS20} on adaptive batch-dynamic connectivity where, however, batch updates are constrained in size by the machine's local memory. It is still poorly understood which $\Pc$-complete problems exhibit a sufficiently \emph{local} structure to admit efficient MPC parallelization.

Our work shows that exact tree edge ranking exhibits an interesting structure: despite being $\Pc$-complete under fine-grained parallelization, its combinatorial structure can be compressed to yield an $O(1)$-round MPC algorithm with strongly sublinear local memory\footnote{We say $f(n)$ is strongly sublinear if $f(n) = O(n^{1-\varepsilon})$ for some fixed $\varepsilon > 0$; this is slightly different from the notion of fully-scalable strongly-sublinear MPC algorithms where $\varepsilon > 0$ is a free parameter.} for trees of constant diameter, that is, the very hard instances used to prove P-completeness in \cref{thm:pcomplete}. Our result is more general in that we show a trade-off between local memory and round complexity based on the tree diameter.

\begin{theorem}\label{thm:mpc}
For any fixed $\delta > 0$, an optimal edge ranking of an $n$-vertex tree of diameter $D$ can be computed deterministically in MPC using $S = \Theta(n^{3/4+\delta})$ local memory, $O(n)$ total memory, and
\[
    O\!\left(\frac{n^{3/4} D^{1/4}}{S} + \log D\right) \text{ rounds.}
\]
In particular, $O(n^{3/4+\delta})$ local memory yields $O(1)$ rounds on trees of constant diameter.
\end{theorem}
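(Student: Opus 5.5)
We follow the sequential framework of de la Torre--Greenlaw--Sch\"affer and Lam--Yue, which computes an optimal ranking bottom-up. Each rooted subtree $T_v$ is summarized by its \emph{critical list}, the set of ranks still ``visible'' from the root edge. Merging the children of $v$ is a greedy allocation: for each child $c$ we choose the smallest rank for the edge $(v,c)$ that keeps the combined list consistent. Two facts are standard from that literature. The optimal rank satisfies $\er(T)=O(\log n)$. The critical list of $T_v$ is a subset of $\{1,\dots,O(\log n)\}$, so it fits in $O(\log n)$ bits, i.e.\ $O(1)$ words. The only obstacle to MPC is that a high-degree vertex, or a long sequence of merges along a path, must be processed by one machine. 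Our plan therefore has three parts. First, we compress each subtree into its critical list. Second, we process the tree by layers of depth so that the number of sequential phases is $O(\log D)$. Third, we handle vertices of degree larger than $S$ by distributed aggregation.

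\textbf{Step 1 (rooting and depth).} We root $T$ at a center, so the height is $h\le\lceil D/2\rceil$. Depths, parents and subtree sizes are computed by pointer doubling (Euler tour plus prefix sums) in $O(\log D)$ rounds with $O(n)$ total memory.

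\textbf{Step 2 (heavy/light compression).} We call a subtree \emph{small} if it has at most $S$ vertices. Every maximal small subtree is gathered onto one machine and solved locally by Lam--Yue. Only its root critical list is sent onward. The remaining \emph{large} part $T'$ consists of vertices with $|T_v|>S$. It has at most $n/S$ leaves, so it decomposes into at most $n/S$ chains plus branching vertices. For a chain, the influence of each hanging small subtree on the chain edges is captured entirely by its critical list. Moreover, the critical list of the bottom of the chain is transformed by a function $f:2^{[K]}\to 2^{[K]}$ with $K=O(\log n)$, and these functions compose. We therefore represent each chain step by a compact description of $f$. The crucial claim is that $f$ depends only on the monotone ``first free slot above a threshold'' structure, so it is describable in $\mathrm{poly}(\log n)$ words. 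Composing such descriptions by doubling along chains of length at most $D$ takes $O(\log D)$ rounds.

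\textbf{Step 3 (high-degree vertices and the memory term).} The greedy allocation at a vertex with many children is sequential in the children's list values. However, it depends only on the multiset of lists, and the bottleneck is exactly what drives P-hardness. We plan to bound the total description size: we charge distinct critical lists at each branching vertex against the subtree sizes. A counting argument of the form $\sum_v \min(\deg v, \cdot)$, balanced with the thresholds, should show that the total compressed information needed along the large part is $O(n^{3/4}D^{1/4})$ words. This information is then moved to and processed on machines in $O(n^{3/4}D^{1/4}/S)$ rounds of pipelined sequential merging.

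\textbf{Step 4 (top-down assignment).} After the root list is known, we rerun the same passes top-down to fix the actual rank of every edge. Each machine recomputes its local instance given the rank constraints passed down from above. Memory stays $O(n)$ in total and $O(S)$ per machine.

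The main obstacle is Step 3 together with the compactness claim in Step 2. We must prove that the compositional summary of a merge is polylogarithmic-sized, and we must derive the precise $n^{3/4}D^{1/4}$ tradeoff. We would first try to pick a size threshold $\tau=(n/D)^{1/4}$-type parameter. Vertices with more than $\tau$ large children are then rare (at most $n/\tau\cdot$ stuff), and their merges are done with all their data on one machine.
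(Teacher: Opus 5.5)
\textbf{There is a genuine gap.} Your plan rests on a false premise: it is not true that $\er(T)=O(\log n)$, nor that a critical set fits in $O(1)$ words. All edges at a vertex must receive distinct ranks, so $\er(T)$ is at least the maximum degree. A star with $n-1$ leaves rooted at its center has $\crit(T)=[n-1]$. The $O(\log n)$ bound you recall holds for node ranking, or for edge ranking of bounded-degree trees. Large rank sets are exactly what the hardness exploits: the instances of \cref{lem:am2-to-ga} use $K=18m$ ranks. If every critical set lived in $[O(\log n)]$, the greedy procedure of \cref{def:greedy} would need only $O(\log n)$ clocks, and the diameter-six instances of \cref{thm:pcomplete} would be decidable in $\NC$. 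Once $K=\Theta(n)$, the ``crucial claim'' of Step~2 has no support. That claim says the chain map $f\colon 2^{[K]}\to 2^{[K]}$ has a $\mathrm{poly}(\log n)$-word description, so the $O(\log D)$ doubling along chains is unfounded. Step~3 is where the $n^{3/4}D^{1/4}/S$ term must come from, and there it is only asserted (``should show'').

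\textbf{What is missing is a way to compress rank sets of size up to $\Theta(n)$.} The paper uses three ingredients.

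First, store a critical set by its maximal runs of consecutive ranks. By \cref{lem:seg}, an $s$-vertex subtree has fewer than $\sqrt{s}$ runs.

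Second, charging \emph{distinct} lists against subtree sizes does not give the bound on explicit sets that the space estimate needs. Take a root with $a$ leaf children and one child that is the center of $\mu-1$ further leaves. Its critical set is $[a]\cup\{\mu\}$ for $0\le a<\mu$, and it has $a+\mu+1$ vertices. Attaching all of these (for every $\mu\le B$) to one vertex gives $\Omega(N^{2/3})$ distinct child sets in total size $N$, far more than $\sqrt{2N}$. The key lemma is the $\mu$-star replacement (\cref{lem:star_rep}). Among children whose critical sets share the maximum $\mu$, at most one can take a rank at most $\mu$. An exchange argument lets that one be a minimum-weight child, and every other child can be replaced by the set $[\mu]$ and kept only as a count. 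Since a set with maximum $\mu$ needs $\mu$ edges, a vertex has at most $\sqrt{2N_v}$ such groups. Applying Cauchy--Schwarz twice then gives $O(N^{3/4}k^{1/4})$ words for a retained tree with $k$ vertices (\cref{lem:count}).

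Third, Lam--Yue's \Labeling{} must be run directly on the counts without unpacking them (\cref{lem:batched}).

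The round bound then comes from staging with \cref{lem:size-threshold}, where $k\le HN/B$. When $H\le n^{2\delta}$, thresholds $B_{j+1}=S(B_j/H)^{1/4}$ keep each stage within $O(S)$. For larger $H$, doubling thresholds force each retained tree to be a path. That path is cut into $O(1+B_j^{3/4}H^{1/4}/S)$ subpaths solved bottom-up, and these sum geometrically to $O(n^{3/4}D^{1/4}/S+\log D)$ rounds.

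Finally, your top-down Step~4 is unnecessary. By \cref{lem:dom}(c), critical child rankings plus a critical assignment are critical, so solved subtrees are never revisited.
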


We discuss further related work in \cref{sec:relwork}.
\section{Technical overview}\label{sec:overview-alternative}
This section presents an overview of our main techniques: \cref{sec:ov-pcom} covers the $\Pc$-completeness reduction, and \cref{sec:ov-alg} outlines the intuition behind our PRAM and MPC algorithms.

\subsection{Hardness of Tree Edge Ranking}\label{sec:ov-pcom}
Our reduction is from the alternating monotone circuit value problem with bounded fan-in and fan-out (\AMtwoCVP{}), a known $\Pc$-complete problem. It goes through an intermediate allocation problem, \GreedyCover{}, and then to tree edge ranking (\cref{sec:pcomplete}).

To simplify exposition, we start with a simpler case: a rooted tree in which all edges except those incident to the root have already been ``optimally'' ranked. We are left with the task of assigning ranks to these root-incident edges to preserve ``optimality''.

We define optimality using visible ranks. A rank is \emph{visible} if no larger rank lies between an edge carrying it and the root. Only ranks visible from a subtree's root can conflict with ranks outside it. We compare visible sets (sets of visible ranks) by the sum of their \emph{weights}, where a visible rank $a$ adds $2^{a-1}$ to the total weight. Note that comparison between total weights reflects the lexicographic order of the visible sets; hence, minimizing total weight also minimizes the maximum rank. The minimum-weight visible set of a rooted tree $T$ is called its \emph{critical set} $\crit(T)$, with \emph{critical weight} $\wvar{w}(T)$, and a ranking attaining it is \emph{critical}. Therefore, finding a critical ranking, i.e., achieving a ranking of minimum weight, also provides an optimal ranking.

Moreover, assuming that child subtrees are critically ranked, one can find a critical ranking by ranking the root-incident edges without modifying any rankings inside the subtrees. Referred to as \emph{labeling}, this procedure is crucial to sequentially computing an optimal ranking in the algorithms of~\cite{delaTorreGS95,LamY01}.

The reduction is based on showing that deciding whether the critical weight of the entire tree exceeds $2^{K} -1$ is P-complete, even when a critical ranking of each subtree is provided. This is clearly a subproblem of determining the existence of a tree edge ranking with rank at most $K$.

\paragraph{\TER{} and \GreedyCover{}.}
We translate the labeling problem, i.e., critically ranking root-incident edges, into a combinatorial allocation problem. Given a tree edge ranking, each visible rank is visible in a unique \emph{branch}, a child subtree together with its edge incident to the root. Thus, $[K]$ supports a valid ranking of the tree if and only if it can be partitioned among the branches so that each branch admits a valid ranking whose visible set is contained in its part.

Although certifying each branch may appear to repeat the same question, it simplifies because of the branch's root-incident edge, which hides all smaller visible ranks in its subtree: a branch $B$ admits a valid ranking with visible set contained in $X$ if and only if the weight of $X$ is at least $\wvar{w}(B)$. Indeed, if $C=\crit(B)$ is not already contained in $X$, the highest rank $c$ where $X$ and $C$ differ lies in $X\setminus C$, and raising the root-incident edge of $B$ to $c$ hides all lower ranks of $C$ while matching $X$ above $c$. We therefore define the \GreedyCover{} problem: given an integer $K \ge 1$ and $d \ge 1$ positive integer \emph{requirements} $\wvar{q}_1, \ldots, \wvar{q}_d$, decide whether $\{2^{K-1}, \ldots, 2^{0}\}$ can be partitioned (allocated) into $d$ subsets whose weights sum to at least the corresponding requirements. This problem captures \TER{} at the root when each requirement is the critical weight of the corresponding branch.

Conversely, in \cref{sec:demands}, we show that an arbitrary \GreedyCover{} instance reduces to \TER{}. Note that a star graph of $k$ edges, rooted at an arbitrary leaf, has critical weight $2^{k-1}$. Moreover, identifying the roots of stars of $k_1 > \cdots > k_t > 0$ edges yields critical weight
\[
2^{k_1 - 1} + \cdots + 2^{k_t - 1}.
\]
Adding an edge above this root to turn the tree into a branch increases its critical weight by one: assigning that edge the smallest rank absent from the critical set turns on the lowest $0$-bit of the weight and hides all smaller visible ranks, clearing every trailing $1$-bit just as in binary addition. Therefore, an arbitrary $\ell$-bit requirement $\wvar{q}$ is realized as the critical weight of an $O(\ell^2)$-vertex branch of height at most three via the binary expansion of $\wvar{q}-1$. Identifying the roots of these branches yields an equivalent \TER{} instance of diameter at most six, constructed in $\NC$ with polynomial blowup.

\paragraph{\GreedyCover{} and \AMtwoCVP{}.}
We establish $\Pc$-completeness by encoding circuit evaluation in the greedy procedure for \GreedyCover{}.
At \emph{clock} $e=K-1,\ldots,0$, the procedure gives weight $2^e$ to the largest residual requirement.
A residual below $2^e$ is covered and disappears; otherwise, subtracting $2^e$ leaves the rest for later clocks.
If some residual is at least $2^{e+1}$, or two are at least $2^e$, it rejects because the remaining weights cannot meet them; a \GreedyCover{} instance is feasible if and only if this procedure never rejects.

To see how an input bit can determine what survives, start at clock $8$, let $1\le a,b<2^4$, and fix three circuit requirements
\[
  2^7+a,\qquad 2^6+2^4,\qquad 2^4+b.
\]
Encode `true' by the additional requirement $2^8$ and `false' by the two requirements $\{2^5+1,2^5\}$.
\Cref{fig:overview-signal} shows the two runs.
On true, the signal consumes $2^8$, so the next weight removes only the leading bit of $2^7+a$.
On false, $2^8$ instead covers that entire requirement, including $a$.
The later clocks reverse the roles for $2^4+b$, covering it on true but subtracting $2^4$ and leaving $b$ on false.
The extra requirement $2^6+2^4$ absorbs the weight $2^4$ on true, preventing it from covering $a$.
After clock $4$, only $a$ remains on true, and only $b$ on false. Thus $a$ and $b$ are chosen once, as part of the circuit, while the signal decides which one survives and is passed to later clocks.

In \cref{sec:am2-to-ga}, we lift this mechanism to simulate a topologically ordered \AMtwoCVP{} circuit by dedicating a fixed $18$-clock block to each node. Generalizing the example above gives a $15$-clock \emph{selector} that converts a node's output signal into input \emph{messages} for its outgoing wires; because fan-out is at most two, a constant number of message slots suffices per selector. Evaluating a gate is subtle because the greedy procedure consumes residuals destructively and two incoming messages do not add, yet the same subtract-versus-cover shift yields a $3$-clock AND-gate gadget: at clocks $e,e-1,e-2$, the right input $y$ supplies two requirements with leading bits $2^{e-1}$ and $2^{e-2}$ shielding an embedded $y$-signal below $2^{e-3}$. When the left input encodes $x=1$ ($2^e$), it consumes clock $e$, so clocks $e-1$ and $e-2$ subtract those leading bits and expose $y$'s signal; when $x=0$ (a guard at $2^{e-3}+2^{e-4}$ plus a $0$-signal), clocks $e$ and $e-1$ instead cover and erase both right-input requirements while clock $e-2$ clears the guard, leaving the $0$-signal. Because AND and OR gates alternate, having every selector transmit the \emph{complemented} bit lets De Morgan's law reduce every gate to this single AND gadget, followed by a final test requirement that causes rejection precisely when the circuit evaluates to $0$.
\begin{figure}[htbp]
\centering
\begin{tikzpicture}[
  toyallocated/.style={circle,draw=black,fill=black,inner sep=0pt,
    minimum size=3.8pt,line width=.2pt},
  toycovered/.style={circle,draw=black!55,draw opacity=.6,
    fill=white,inner sep=0pt,minimum size=4pt,line width=.6pt},
  toylabel/.style={font=\footnotesize,anchor=east,inner sep=0pt},
  toyhead/.style={font=\footnotesize,inner sep=0pt},
  toytail/.style={draw=delivercol!75,fill=delivercol!9,
    text=delivercol,rounded corners=.5pt,font=\footnotesize,
    inner xsep=3pt,inner ysep=1pt,minimum width=.65cm},
]
\newcommand{\toybit}[3]{%
  \node[#3] at ({1.75+.66*(#1)},{-.48*(#2)}) {};}
\newcommand{\toytailbox}[3]{%
  \node[toytail,opacity=#3] at (5.25,{-.48*(#1)}) {$#2$};}
\newcommand{\toyaxes}[1]{%
  \node[font=\small] at (3.3,1.02) {#1};
  \foreach \col/\power in {0/8,1/7,2/6,3/5,4/4}
    \node[toyhead] at ({1.75+.66*\col},.48) {$2^{\power}$};
  \node[toyhead] at (5.25,.48) {$<2^4$};
  \draw[black!20] (4.82,.18) -- (4.82,-2.13);
  \draw[black!15] (.15,-.73) -- (5.65,-.73);
  \node[toylabel] at (1.42,-.96) {$2^7+a$};
  \node[toylabel] at (1.42,-1.44) {$2^6+2^4$};
  \node[toylabel] at (1.42,-1.92) {$2^4+b$};}

\begin{scope}
\toyaxes{true}
\node[toylabel] at (1.42,0) {$2^8$};
\toybit{0}{0}{toyallocated}
\toybit{1}{2}{toyallocated}
\toytailbox{2}{a}{1}
\toybit{2}{3}{toyallocated}
\toybit{4}{3}{toyallocated}
\toybit{3}{4}{toyallocated}
\toybit{4}{4}{toycovered}
\toytailbox{4}{b}{.4}
\node[font=\small,text=delivercol] at (3.3,-2.5) {$\longrightarrow\ \{a\}$};
\end{scope}

\begin{scope}[xshift=7.2cm]
\toyaxes{false}
\node[toylabel] at (1.42,0) {$2^5+1$};
\node[toylabel] at (1.42,-.48) {$2^5$};
\toybit{2}{0}{toyallocated}
\toybit{3}{0}{toycovered}
\node[font=\footnotesize,text=black!45] at (5.25,0) {$1$};
\toybit{3}{1}{toyallocated}
\toybit{0}{2}{toyallocated}
\toybit{1}{2}{toycovered}
\toytailbox{2}{a}{.4}
\toybit{1}{3}{toyallocated}
\toybit{2}{3}{toycovered}
\toybit{4}{3}{toycovered}
\toybit{4}{4}{toyallocated}
\toytailbox{4}{b}{1}
\node[font=\small,text=delivercol] at (3.3,-2.5) {$\longrightarrow\ \{b\}$};
\end{scope}
\end{tikzpicture}
\caption{The same circuit requirements leave $a$ or $b$ after clocks $8,\ldots,4$.
Filled marks show allocated weights; faded marks and entries are covered.}
\label{fig:overview-signal}
\end{figure}

\paragraph{Constant additive error.}
To extend our $\Pc$-hardness result to any constant additive error (\cref{sec:approx-inapprox}), we amplify the gap in $\er(T)$ via tree composition. While graph composition was used earlier to study edge rankings of general graphs~\cite{Yue96}, we show that the \emph{rooted product} of trees~\cite{Godsil_McKay_1978} provides a clean additive structure on trees matching our purpose. For rooted trees $T$ and $S$, form $T\odot S$ by attaching a copy of $S$ at every vertex of $T$, identifying that vertex with the copy's root. We prove that
\[
 \er(T\odot S)=\er(T)+\er(S).
\]
The upper bound is immediate: rank each copy of $S$ optimally, and increase every rank in an optimal ranking of $T$ by $h:=\er(S)$. The lower bound follows from a stronger identity on critical weights:
\[
\wvar{w}(T \odot S) = 2^{h}\wvar{w}(T) + \wvar{w}(S).
\]
To see why no ranking can do better at the root of $T\odot S$, note that inductively each branch $B_i$ of $T$ induces a branch in $T\odot S$ with requirement $2^h(\wvar{w}(B_i)-1)+\wvar{w}(S)+1$, while the root copy of $S$ requires $\wvar{w}(S)$. Because ranks above $h$ have weights divisible by $2^h$ and ranks in $[h]$ sum to $2^h-1$, a branch $i$ can receive less than $2^h\wvar{w}(B_i)$ from ranks above $h$ only if it takes at least $\wvar{w}(S)+1$ from $[h]$. Since $\wvar{w}(S)\ge 2^{h-1}$, branch $i$ and the root copy of $S$ cannot both fit in $[h]$, so the root copy of $S$ must hold some rank above $h$; transferring that rank to branch $i$ yields a valid allocation for $T$ above $h$ plus weight exceeding $\wvar{w}(S)$ in $[h]$.

Repeatedly applying this product amplifies the unit gap of our $\Pc$-completeness reduction to any constant integer $r \ge 1$. Indeed, for our hard tree $T$ of radius at most $3$, the $r$-fold rooted product $T^{\odot r} := T \odot \cdots \odot T$ has $|V(T)|^r = |V(T)|^{O(1)}$ vertices, diameter at most $6r$, and $\er(T^{\odot r}) = r\er(T)$. Thus, for any threshold $K \ge 0$,
\[
\er(T) \le K \iff \er(T^{\odot r}) \le rK,\qquad \er(T) > K \iff \er(T^{\odot r}) \ge rK + r,
\]
so any $\NC$ algorithm with additive error less than $r$ on $T^{\odot r}$ decides whether $\er(T) \le K$, ruling out constant additive error in $\NC$ unless $\Pc=\NC$.

\subsection{Parallel Algorithms}\label{sec:ov-alg}
We show that the hardness of tree ranking can be overcome in two complementary ways: in $\NC$ by relaxing exactness to a superconstant, sublogarithmic additive error (\cref{sec:approx}), and in MPC by compressing the information needed for exact subtree merges (\cref{sec:mpc}). Both algorithms build on a common structural primitive, which we call \emph{size-threshold decomposition} (\cref{lem:size-threshold}). This primitive can be viewed as a generalization of the separator approach of~\cite{IyerRV91}. Given an $N$-vertex rooted tree and a branching parameter $b\ge 2$, retaining all vertices whose descendant subtrees have size at least $N/b$ produces a connected \emph{retained tree} $T_0$ with at most $b$ leaves (whose subtrees are disjoint), while partitioning the rest of the tree into disjoint \emph{attached subtrees} of size less than $N/b$ hanging from $T_0$. The attached subtrees are small enough to be ranked in parallel, reducing each step to extending their rankings across the retained tree $T_0$.

\subsubsection{Approximation in NC}
Ranking the attached subtrees recursively and combining them optimally across $T_0$ requires their full critical sets, which are $\Pc$-complete to compute in parallel. In contrast, shrinking $T_0$ to a single separator vertex (e.g. centroid) as in~\cite{IyerRV91} ($b=O(1)$) avoids computing critical sets by losing one rank per level, but requires $\Theta(\log n)$ recursion levels and thus incurs $O(\log n)$ additive error. To reduce the recursion depth to $O(\log_b n)$ for a larger shrinkage factor $b$, while still losing at most one rank per level, we introduce a \emph{lower-bound replacement} step: after ranking each attached subtree $T_i$ with maximum rank $h_i$, we contract $T_i$ into its root, turning its edge to $T_0$ into a \emph{leaf edge} incident to $T_0$ with lower bound $h_i+1$. Because a rank of at least $h_i+1$ on this leaf edge exceeds every rank in $T_i$, it shields $T_i$ from the rest of the tree, so any valid ranking of the contracted tree that respects the lower bounds extends directly to $T$ and loses at most one rank across the recursion level. The problem at each level is then to compute a \emph{lower-bounded critical ranking} of $T_0$ together with its attached leaf edges. Since $T_0$ has at most $b$ leaves, its root, leaves, and vertices with at least two children in $T_0$ partition $T_0$ into $O(b)$ paths. Using a logic similar to the well-known rake-and-compress, we can rank $T_0$ bottom-up in $O(b)$ steps by alternating two $\NC$ subroutines: combining branches at a path endpoint and extending a ranking along a path.

\paragraph{Combining branches at a path endpoint.}
At a path endpoint $u\in T_0$, we are given the critical rankings of its $d\le b$ child branches $B_1,\ldots,B_d$ in $T_0$ and the lower bounds on the leaf edges attached at $u$. To rank the edges incident to $u$ with minimum visible weight, each branch $B_i$ needs a subset of ranks of total weight at least its critical weight $\wvar{w}(B_i)$, each leaf edge needs a single rank at least its lower bound, and all assigned ranks must be distinct. Searching directly over the ranks of the $d$ branches in $T_0$ would take $n^{\Omega(d)}$ work, which is superpolynomial for $d=\Theta(\log n)$. Instead, we build a dynamic programming table that scans candidate ranks $h$ from $2n$ down to $1$ (since $n$ edges with lower bounds at most $n$ never need ranks of $2n$ or higher). Each DP entry $(h,j,\mathcal{B})$ is indexed by the current rank $h$, the index $j$ of the next leaf edge to rank, and the subset $\mathcal{B}\subseteq[d]$ of branches in $T_0$ that are not yet satisfied; it stores the minimum weight needed to complete the allocation using ranks in $[h]$.

At step $h$, we either leave rank $h$ unused, assign it to the $j$-th leaf edge (if $h$ exceeds its lower bound), or assign it to an unfinished branch $B_i\in\mathcal{B}$. Sorting the leaf edges in decreasing order of their lower bounds ensures that we can always process them in order, so the single index $j$ suffices to track which leaf edges remain. Likewise, the subset $\mathcal{B}$ suffices to track the $d$ branches in $T_0$ without recording the weight each branch has accumulated from ranks above $h$. This is because every rank above $h$ has weight a multiple of $2^h$, whereas all remaining ranks $1,\ldots,h$ together sum to $2^h-1$. Among all multiples of $2^h$, there is at most one value from which a branch $B_i$ is neither already satisfied (at least $\wvar{w}(B_i)$) nor impossible to complete (more than $2^h-1$ short of $\wvar{w}(B_i)$). Thus, knowing that $B_i\in\mathcal{B}$ is still unfinished at step $h$ fixes its accumulated weight and determines whether rank $h$ finishes it. Since there are only $2^d n^{O(1)}\le 2^b n^{O(1)}$ states $(h,j,\mathcal{B})$, computing a minimum-weight allocation at $u$ reduces to a single-source shortest-path computation on a DAG of size $2^{O(b)}n^{O(1)}$, taking $O(\log^2 n)$ time and $2^{O(b)}n^{O(1)}$ work.

\paragraph{Extending a ranking along a path.}
Along a path $u=v_0,\ldots,v_m=v$ in $T_0$, we are given the critical ranking of $T_v$ at the bottom endpoint $v$ and lower bounds on the leaf edges attached along the path. Our goal is to rank the path and leaf edges to minimize the visible weight at the top endpoint $u$. Since the path can have length $m=\Omega(n)$, advancing vertex by vertex along the path is too slow in $\NC$. Instead, we build a dynamic programming table over subpaths $[s,t]=v_s,\ldots,v_t$ and rank limits $h\le 2n$ (which, as before, bounds all ranks needed by $n$ edges with lower bounds at most $n$). Each entry $([s,t],h)$ stores the minimum visible weight at $v_s$ of a valid ranking of $[s,t]$ and its leaf edges using ranks in $[h]$ (ignoring leaf edges with bounds above $h$).

To compute $([s,t],h)$, the recurrence guesses the path edge $v_iv_{i+1}$ on $[s,t]$ that receives the largest rank $k\le h$. Once $k$ is fixed, we greedily assign the leaf edges on $[s,t]$ whose lower bounds (or visible ranks from $T_v$) exceed $k$ the smallest available distinct ranks in $(k,h]$. Indeed, because no path edge has rank above $k$, no two edges on $[s,t]$ can share a rank in $(k,h]$. Moreover, a leaf edge is ranked above $k$ if and only if its bound exceeds $k$, since otherwise swapping its rank with $k$ would raise the maximum path-edge rank without increasing the visible weight. Removing the path edge $v_iv_{i+1}$ of rank $k$ then splits $[s,t]$ into an upper subpath $[s,i]$ (closer to $v_s$) and a lower subpath $[i+1,t]$ (closer to $v_t$), both restricted to ranks at most $k-1$. Because rank $k$ on $v_iv_{i+1}$ hides all smaller ranks in the lower subpath from $v_s$, the lower subproblem $([i+1,t],k-1)$ only needs to be \emph{feasible} (admit some valid ranking), whereas the visible weight below $k$ at $v_s$ comes entirely from the upper subproblem $([s,i],k-1)$.

Evaluating this $O(n^3)$-state table by increasing subpath length would still take $\Omega(m)$ sequential steps, so we evaluate it in two parallel phases of $O(\log^2 n)$ time. First, $O(\log n)$ doubling rounds of graph reachability identify all feasible states $([i+1,t],k-1)$. Once feasibility of the lower subproblem is known, each transition depends on only a single subproblem $([s,i],k-1)$, which reduces computing the optimal weights to a single-source shortest-path computation on an $O(n^3)$-vertex DAG.

\smallskip
Alternating these two subroutines bottom-up across the $O(b)$ paths and endpoints of $T_0$ solves each recursion level in $O(b\log^2 n)$ time and $2^{O(b)}n^{O(1)}$ work. Choosing $b=\lceil\log_2 n\rceil$ keeps the work polynomial while bounding both the recursion depth and the total additive error by $\lfloor\log_b n\rfloor\le \log_2 n/\log_2\log_2 n$ in $O(\log^4 n)$ total time.

\subsubsection{Exact ranking in MPC}
In MPC, local computation allows a single machine to compute an exact critical ranking of a subtree once the required information fits into its local memory $S=\Theta(n^{3/4+\delta})$. In an $N$-vertex subtree whose attached subtrees have already been critically ranked, extending their rankings across a $k$-vertex retained tree $T_0$ naively requires gathering all their critical sets, which can take $\Omega(N)$ space. We overcome this memory bottleneck in two steps. First, we design a \emph{compression and batched labeling scheme} that encodes $T_0$ and all attached critical sets in $O(N^{3/4}k^{1/4})$ space, and computes the exact critical ranking directly on this compressed representation within the same space bound. Second, we use a multi-stage bottom-up decomposition with carefully chosen size thresholds so that each stage's subproblems fit into $O(S)$ memory and can be solved in parallel across machines.

\paragraph{Compressing critical sets.}
To compress the critical sets of the attached subtrees, we start from Lam and Yue's \emph{segment representation}~\cite[Lemma~5.1]{LamY01}, which stores the critical set of an $s$-vertex subtree in $O(\sqrt{s})$ words as a list of maximal intervals of consecutive ranks (\emph{segments}). When many small subtrees are attached to $T_0$, however, even their segment representations can sum to $\Omega(N)$ words. To compress them further, we group subtrees attached to the same retained vertex $v\in T_0$ by the maximum rank $\mu$ of their critical sets. Within each group $(v,\mu)$, we keep \textit{one set of minimum weight} in segment representation, and replace all other branches by their count, i.e., we treat each as a \emph{star branch of type $\mu$} whose critical set is $[\mu]$. Indeed, because the visible sets of the branches at $v$ must be pairwise disjoint, at most one branch in a $(v,\mu)$-group can receive a rank at most $\mu$ on its edge to $v$ (which leaves $\mu$ visible), and an exchange argument shows that an optimal ranking at $v$ can always choose a minimum-weight branch in the group as its sole exception. Every other branch in the group must receive a rank strictly above $\mu$ on its edge to $v$, making it indistinguishable from a branch with critical set $[\mu]$.

We next bound the total size of this compressed representation across all $k$ retained vertices of $T_0$ together. A subtree whose critical set has maximum rank $\mu$ must contain at least $\mu$ edges. Consequently, $g_v$ attached subtrees at $v$ with distinct maximum ranks $\mu_1<\cdots<\mu_{g_v}$ contain at least $1+2+\cdots+g_v=\Omega(g_v^2)$ vertices in total. Thus a retained vertex $v$ whose attached subtrees have $N_v$ vertices in total has at most $g_v=O(\sqrt{N_v})$ groups. Since the subtrees attached to the $k$ retained vertices are disjoint and have total size at most $N$, applying Cauchy--Schwarz across them bounds the total number of explicit critical sets kept across $T_0$ by $O(\sqrt{kN})$. Applying Cauchy--Schwarz a second time to their individual $O(\sqrt{s})$ segment bounds leads to the space bound of $O(\sqrt{N\sqrt{kN}})=O(N^{3/4}k^{1/4})$.

\paragraph{Batched labeling on compressed sets.}
Once the compressed instance is gathered onto a single machine, our goal is to rank $T_0$ bottom-up optimally. To this end, we  adapt Lam and Yue's sequential \Labeling{} procedure~\cite{LamY01} to work with our compressed representation. At a retained vertex $v$, given the full critical sets of its children, the \Labeling{} procedure resolves duplicate visible ranks from highest to lowest by repeatedly assigning the smallest unused ranks above the conflicting branches. Here, the challenge is to avoid unpacking star-branch counts into individual branches, as it would cause an $\Omega(N)$ space blowup. It turns out that we can run \Labeling{} directly on the star-branch counts by tracking a single designated star branch per type $\mu$ explicitly, while resolving each group of remaining star branches of type $\mu$ in a single step of \Labeling{}. Since star branches of type $\mu$ only need ranks above $\mu$, they first participate in \Labeling{} once all conflicting ranks above $\mu$ are resolved, at which point all of them except one can receive a contiguous block of free ranks above $\mu$. To stay within the same space bound, we also compress batch-assigned ranks as segments and match them back to the star branches in sorted order at the end. Crucially, as a result of this batched labeling execution, the working space remains $O(N^{3/4}k^{1/4})$ throughout.

\paragraph{Multi-stage ranking across the tree.}
The above space bound of $O(N^{3/4}k^{1/4})$ guides the design of our MPC algorithm. The key observation is that, in a tree of height $H\le D+1$ with $N$ vertices, our threshold size decomposition applied at a subtree $T_w$ with threshold $B$ produces a retained tree $T_0$ with $k\le HN/B$ vertices. If $N^{3/4}k^{1/4}=O(S)$, standard $O(1)$-round MPC primitives suffice to gather the required information and compute a critical ranking of $T_w$ locally. Therefore, we rank the tree bottom-up in stages determined by increasing size thresholds $S=B_1<B_2<\cdots<B_L\ge n$, where stage $j$ computes a critical ranking for every maximal subtree of size at most $B_j$ from the already solved subtrees of size at most $B_{j-1}$. An initial $O(\log D)$-round preprocessing step suffices to root the tree and compute subtree sizes and preorder intervals, so that vertices can identify their stage-$j$ subtrees in $O(1)$ rounds. We distinguish between two cases based on the tree diameter.

When the diameter is small ($D\le n^{2\delta}$), we set $B_{j+1}=S(B_j/H)^{1/4}$, so that every stage-$j$ subtree has compressed size $B_j^{3/4}(HB_j/B_{j-1})^{1/4}=S$ and is solved in $O(1)$ rounds. Under this recurrence, the threshold grows as $H(S/H)^{\frac{4}{3}(1-4^{-j})}$, reaching $n$ in $L=O(\log(1/\delta))=O(1)$ stages and yielding $O(\log D)$ rounds in total. 

When the diameter is large ($D>n^{2\delta}$), a stage-$j$ subtree may have too many retained vertices to fit on one machine at once. We then double the thresholds at each stage ($B_j=2^{j-1}S$). Because a stage-$j$ subtree has at most $2B_{j-1}$ vertices and each retained vertex has subtree size greater than $B_{j-1}$, no retained vertex can have two retained children, so the retained tree $T_0$ is a simple path. We partition this path into $O(B_j^{3/4}H^{1/4}/S)$ contiguous subpaths that each satisfy the $O(S)$ compressed-memory bound and solve them bottom-up in $O(1)$ rounds per subpath, forming a geometric series across stages that sums to $O(n^{3/4}D^{1/4}/S+\log D)$ rounds.
\section{Preliminaries}\label{sec:prelim}
We introduce the terminology and basic lemmas frequently used for tree edge ranking in \cref{sec:basics}. We then introduce preliminaries specific to our results in \cref{sec:com-prel} and \cref{sec:models}.

\subsection{Tree edge ranking basics}\label{sec:basics}
The $n$-vertex input tree $T=(V,E)$ has \emph{diameter} $D$ when its longest simple path has $D$ edges. For a rooted tree $T$, let $T_v$ denote the subtree rooted at $v$ consisting of $v$ and its descendants, and let $|T_v|$ be its size. The \emph{depth} of a vertex is the number of vertices on its path from the root, and the \emph{height} $H$ of the tree is the largest depth; we have $H-1\le D\le 2(H-1)$.

A (valid) \emph{edge ranking} of $T$ is a map $\rho\colon E\to\mathbb Z_{>0}$ that assigns positive integer ranks to edges such that the path between any two distinct edges $e,f$ with $\rho(e)=\rho(f)$ contains an edge $g$ with $\rho(g)>\rho(e)$. An \emph{optimal} edge ranking achieves the least possible maximum rank, denoted by $\er(T)$.

We introduce the concepts of visibility, weight, and criticality. For an edge ranking of a rooted tree $T$, we say that a rank $r$ is \emph{visible} from the root if some edge with rank $r$ has no edge of larger rank on its path to the root (the validity condition ensures that at most one such edge exists). The \emph{visible set} of an edge ranking is the set of ranks that are visible from the root. The \textit{weight} for a set of ranks $A$ is
\[
 \wt(A)=\sum_{a\in A}2^{a-1}.
\]
The \emph{critical set} $\crit(T)$ is the unique visible set of \textit{minimum} weight over all edge rankings of $T$. A ranking whose visible set is critical is a critical ranking. Every critical ranking is optimal (\cref{prop:opt}), but an optimal ranking need not be critical because it may expose extra ranks.

\begin{proposition} \label{prop:opt} For a rooted tree $T$, $\er({T})=\max\crit(T)\le|T|-1$, regardless of the chosen root.
\end{proposition}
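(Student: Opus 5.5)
We prove the three parts of \cref{prop:opt} in turn: $\max\crit(T)=\er(T)$, the bound $\er(T)\le|T|-1$, and independence of the root. The first part uses the fact that $\wt$ orders sets of ranks lexicographically from the top. Because $\sum_{a<m}2^{a-1}=2^{m-1}-1<2^{m-1}$, a set whose maximum element is larger always has strictly larger weight.

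\textbf{Step 1: $\max\crit(T)\ge\er(T)$.} Take a critical ranking $\rho$, so its visible set is $\crit(T)$. The maximum rank of $\rho$ overall is at least $\er(T)$. The edge of largest rank has no larger rank on its path to the root, so that rank is visible. Hence $\max\crit(T)=\max\rho\ge\er(T)$.

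\textbf{Step 2: $\max\crit(T)\le\er(T)$.} Take an optimal ranking $\rho^*$ with maximum rank $\er(T)$. Every rank in its visible set $A$ is at most $\er(T)$, so $\max A\le\er(T)$. Since $\crit(T)$ has minimum weight, $\wt(\crit(T))\le\wt(A)$. By the lexicographic observation, $\max\crit(T)>\max A$ would force $\wt(\crit(T))>\wt(A)$, a contradiction. Therefore $\max\crit(T)\le\max A\le\er(T)$. In fact Steps 1 and 2 combined show that every critical ranking is optimal.

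\textbf{Step 3: $\er(T)\le|T|-1$.} Giving the $|T|-1$ edges the distinct ranks $1,\dots,|T|-1$ is trivially a valid ranking, since no two edges share a rank. Hence $\er(T)\le|T|-1$.

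\textbf{Step 4: independence of the root.} The quantity $\er(T)$ is defined purely from the unrooted tree: validity refers only to paths between edges, and optimality only to the maximum rank. Thus $\er(T)$ does not depend on the root. Steps 1–2 hold for every choice of root, so $\max\crit(T)=\er(T)$ is the same for all roots, even though $\crit(T)$ itself may change.

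The only point needing care is that $\crit(T)$ is well defined, i.e., the minimum-weight visible set is unique. This is immediate: weights are binary expansions, so distinct sets of ranks have distinct weights, and the minimum weight is attained by exactly one set. The argument has no real obstacle; the one thing to invoke correctly is the lexicographic property of $\wt$ used in Step 2.
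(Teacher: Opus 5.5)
Your proof is correct and follows the same route as the paper, which likewise argues that a set containing rank $r$ outweighs every set of smaller ranks, so a minimum-weight visible set minimizes the maximum rank, and bounds $\er(T)\le|T|-1$ by assigning distinct ranks. You additionally make explicit three points the paper leaves implicit: the largest rank of any ranking is always visible, $\er(T)$ does not depend on the root, and $\crit(T)$ is well defined by injectivity of $\wt$.
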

\begin{proof}
    For the equality, note that every set containing rank $r$ has weight at least $2^{r-1}$, whereas a set of smaller ranks has weight at most $2^{r-1}-1$. Thus a minimum-weight visible set also minimizes the maximum rank, giving $\er(T)=\max\crit(T)$. For the inequality, assign edges distinct ranks.
\end{proof}

\paragraph{Branches and admissible ranks.}
For a vertex $v$ with children $v_1,\ldots,v_d$, we call the edge $vv_i$ together with the subtree $T_{v_i}$ \emph{branch} $i$. The rank assigned to $vv_i$ is the rank of branch $i$.

Fix an edge ranking of each subtree $T_{v_i}$, with visible set $A_i$. A rank $c$ is \emph{admissible} for branch $i$ when $c\notin A_i$. The branch with rank $c$ then contributes $c$ and its visible ranks above $c$ to the visible set at $v$:
    \begin{equation}\label{eq:contrib}
     \ctr(A_i,c)=\{c\}\cup\{x\in A_i:x>c\}.
    \end{equation}
    
We now establish when admissible ranks give a valid ranking.
\begin{proposition}\label{prop:valid}
Let $v$ have children $v_1,\ldots,v_d$, and fix valid edge rankings of the subtrees $T_{v_i}$ with visible sets $A_i$. Assigning ranks $c_1,\ldots,c_d$ to the branches gives a valid edge ranking of $T_v$ \textit{if and only if} every $c_i$ is admissible for branch $i$ and the sets $\ctr(A_i,c_i)$ are pairwise disjoint.
\end{proposition}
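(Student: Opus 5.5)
The plan is to reduce validity of the combined ranking on $T_v$ to a pairwise condition on edges. Two edges with equal rank either lie in the same subtree $T_{v_i}$, or involve a branch edge $vv_i$, or lie in different branches. The subtree rankings are valid by assumption, and the path between two edges of $T_{v_i}$ stays inside $T_{v_i}$. Hence only pairs straddling different parts need checking. Throughout I will use one fact: for an edge $e$ in $T_{v_i}$, the path from $e$ to $v$ consists of the path from $e$ up to $v_i$ inside $T_{v_i}$, followed by the edge $vv_i$.

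For the forward direction, I assume the combined ranking is valid.

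\textbf{Admissibility.} Suppose $c_i\in A_i$. Then some edge $e$ of $T_{v_i}$ with rank $c_i$ has no larger rank on its path to $v_i$. The path from $e$ to $vv_i$ consists of that path together with $vv_i$, which has rank $c_i$, not larger. Since $e$ and $vv_i$ share rank $c_i$, this contradicts validity.

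\textbf{Disjointness.} Suppose $r\in\ctr(A_i,c_i)\cap\ctr(A_j,c_j)$ for some $i\neq j$. Pick the witnessing edges: either the branch edge itself (when $r=c_i$), or an edge of $T_{v_i}$ visible from $v_i$ with rank $r>c_i$. In either case, every edge on the path from the witness up to $v$ has rank at most $r$. The same holds for the witness in branch $j$. The two witnesses are distinct and share rank $r$, and the path between them is the union of their two paths to $v$. That path therefore has no rank exceeding $r$, contradicting validity.

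For the converse, I assume admissibility and pairwise disjointness, and take two distinct edges $e,f$ of equal rank $r$ in different parts. There are three cases.

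\textbf{(a) $e=vv_i$ and $f\in T_{v_i}$.} If the path from $f$ to $v_i$ contains a rank larger than $r$, we are done. Otherwise $r\in A_i$, which contradicts admissibility of $c_i=r$.

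\textbf{(b) $e=vv_i$ and $f$ in branch $j\neq i$.} Either $f=vv_j$ or $f\in T_{v_j}$. Assume for contradiction that the path between $e$ and $f$ has no rank above $r$. Then $r=c_i\in\ctr(A_i,c_i)$. On branch $j$'s side, the path from $f$ to $v$ has all ranks at most $r$. If $f=vv_j$, then $r=c_j$. Otherwise $f\in T_{v_j}$, and because the subtree ranking is valid, $f$ is the unique edge of rank $r$ visible from $v_j$, so $r\in A_j$. Admissibility gives $c_j\ne r$, and since $c_j\le r$ we get $c_j<r$. In both subcases $r\in\ctr(A_j,c_j)$, contradicting disjointness.

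\textbf{(c) $e\in T_{v_i}$ and $f\in T_{v_j}$ with $i\neq j$.} The same argument applies on both sides.

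The main step needing care is making this bookkeeping precise. The condition "no larger rank on the path to $v$" has to be translated correctly into membership in $\ctr(A_i,c_i)$. This uses strict inequality $x>c_i$ together with admissibility, which excludes the equality case $x=c_i$. Once that equivalence is established, each direction is a few lines.
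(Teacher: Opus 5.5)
Your proof is correct and takes the same approach as the paper. Admissibility rules out conflicts between $vv_i$ and its own subtree, and cross-branch conflicts occur exactly when two branches share a rank visible from $v$, i.e., when the sets $\ctr(A_i,c_i)$ overlap; your case analysis makes explicit the translation between ``no larger rank on the path to $v$'' and membership in $\ctr(A_i,c_i)$, which the paper's short proof leaves implicit.
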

\begin{proof}
    An edge inside $T_{v_i}$ with rank $c_i$ conflicts with $vv_i$  when it is visible from $v_i$, so we need $c_i\notin A_i$. Under this condition, branch $i$ has visible set $\ctr(A_i,c_i)$ at $v$. Two equal-rank edges in different branches conflict when both are visible from $v$, since the path between them passes through $v$. Hence the ranking is valid if and only if, in addition, the sets $\ctr(A_i,c_i)$ are pairwise disjoint.
\end{proof}
    
An assignment of admissible ranks $c_1,\ldots,c_d$ to the branches is \emph{critical} if it gives a ranking of $T_v$ that is valid and whose visible set $\bigcup_{i=1}^d  \ctr(A_i,c_i)$ has minimum weight among all valid assignments. We denote the visible set of a critical assignment by $\Lab(A_1,\ldots,A_d)$. 

The next lemma proves several useful facts and, in particular, that a critical assignment gives a critical ranking, provided that each subtree ranking is critical.
\begin{lemma}\label{lem:dom}
Let $A$ and $C$ be sets of ranks, and let $v$ have children $v_1,\ldots,v_d$. We have the following:
\begin{enumerate}[label=(\alph*),leftmargin=*]
\item If $\wt(A)<\wt(C)$, the rank $c=\max(A\triangle C)$ satisfies $c\in C\setminus A$ and $\ctr(A,c)\subseteq C$.
\item For every positive rank $c\notin A$, we have $\wt(\ctr(A,c))\ge\wt(A)+1$.
\item Critical rankings of the $T_{v_i}$ together with a critical assignment of ranks to the branches form a critical ranking of $T_v$, i.e., $\crit(T_v)=\Lab(\crit(T_{v_1}),\ldots,\crit(T_{v_d}))$.
\end{enumerate}
\end{lemma}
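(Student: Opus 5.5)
We prove the three parts in order, with (a) and (b) as elementary facts about binary weights and (c) as an exchange argument built on them.

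\textbf{Part (a).} Let $c=\max(A\triangle C)$. Ranks above $c$ lie in both sets or in neither, so they contribute equally to $\wt(A)$ and $\wt(C)$. If $c\in A\setminus C$, then $A$ would gain $2^{c-1}$ over $C$ at rank $c$. All ranks of $C$ below $c$ together weigh at most $2^{c-1}-1$, so this would force $\wt(A)>\wt(C)$, a contradiction. Hence $c\in C\setminus A$. Every $x\in A$ with $x>c$ lies outside $A\triangle C$, so $x\in C$. Together with $c\in C$, this gives $\ctr(A,c)=\{c\}\cup\{x\in A:x>c\}\subseteq C$.

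\textbf{Part (b).} The elements of $A$ below $c$ have total weight at most $\sum_{j<c}2^{j-1}=2^{c-1}-1$. Replacing them by $c$ itself therefore gives
\[
\wt(\ctr(A,c))=\wt(A)-\wt(\{x\in A:x<c\})+2^{c-1}\ge \wt(A)+1 .
\]

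\textbf{Part (c).} Set $C_i=\crit(T_{v_i})$. Take any valid ranking of $T_v$, and let $A_i$ be the visible set of $T_{v_i}$ and $c_i$ the rank of branch $i$. By \cref{prop:valid}, $c_i\notin A_i$ and the sets $\ctr(A_i,c_i)$ are pairwise disjoint.

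We build admissible ranks $c'_i$ for the critical sets $C_i$ such that $\ctr(C_i,c'_i)\subseteq \ctr(A_i,c_i)$. If $c_i\notin C_i$, we keep $c'_i=c_i$. Because $\wt(C_i)\le\wt(A_i)$, part (a) applied to $C_i$ and $A_i$ at ranks above $c_i$ should give $\{x\in C_i:x>c_i\}\subseteq \ctr(A_i,c_i)$. This needs care when the two sets coincide above $c_i$. The cleaner route is the following. Set $X=\ctr(A_i,c_i)$. By part (b), $\wt(X)\ge\wt(A_i)+1>\wt(C_i)$. Apply part (a) to the pair $C_i$ and $X$: the rank $c'=\max(C_i\triangle X)$ satisfies $c'\notin C_i$ and $\ctr(C_i,c')\subseteq X$. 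We then set $c'_i=c'$, which handles both cases uniformly.

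The resulting contributions are subsets of pairwise disjoint sets, so they are themselves pairwise disjoint. By \cref{prop:valid}, the critical rankings of the subtrees combined with the ranks $c'_i$ form a valid ranking of $T_v$. Its visible set is contained in the original visible set, so its weight is no larger. Hence the minimum visible weight over all rankings of $T_v$ is attained by an assignment on top of the critical subtree rankings. A critical assignment is, by definition, minimum over such assignments, which gives $\crit(T_v)=\Lab(C_1,\ldots,C_d)$.

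The main obstacle is choosing the replacement rank in (c), and the combination of (a) with (b) resolves it as shown.
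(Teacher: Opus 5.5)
Your proof is correct and follows the paper's argument essentially step by step. Parts (a) and (b) compare binary weights. Part (c) uses minimality of $\crit(T_{v_i})$ and (b) to get $\wt(\ctr(A_i,c_i))>\wt(\crit(T_{v_i}))$, then applies (a) to the pair $\crit(T_{v_i}),\ctr(A_i,c_i)$ to obtain a replacement rank whose contribution is a subset of the old one.

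One small remark concerns your abandoned first route, keeping $c_i'=c_i$ when $c_i\notin\crit(T_{v_i})$. It is not merely delicate but false in general. For example, take $T_{v_i}$ to be a two-edge path with $A_i=\{3\}$, $\crit(T_{v_i})=\{2\}$ and $c_i=1$; then $\ctr(\crit(T_{v_i}),1)=\{1,2\}\not\subseteq\{1,3\}=\ctr(A_i,1)$. So you should drop that paragraph rather than say the final choice ``handles both cases uniformly.''
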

\begin{proof}
(a)~At the largest rank $c$ where $A$ and $C$ differ, the set with larger
weight must include $c$. Since the sets agree above $c$, we have $\ctr(A,c)\subseteq C$.

(b)~Let $c\notin A$. Then $A$ is the disjoint union of $\{x\in A:x<c\}$ and $\{x\in A:x>c\}$, and $\ctr(A,c)$ is the disjoint union of $\{c\}$ and $\{x\in A:x>c\}$, so
\[
 \wt(\ctr(A,c))-\wt(A)=2^{c-1}-\wt(\{x\in A:x<c\}) \ge 1.
\]

(c)~By definition, $\wt(\crit(T_v))\le\wt(\Lab(\crit(T_{v_1}),\ldots,\crit(T_{v_d})))$. Conversely, consider any valid edge ranking of $T_v$ where child $i$ has visible set $A_i'$ and edge $vv_i$ has rank $c_i'$. By \cref{prop:valid}, the sets $C_i = \ctr(A_i',c_i')$ are pairwise disjoint, and the visible set of $T_v$ is $V = \bigcup_{i=1}^d C_i$. Minimality and (b) give $\wt(\crit(T_{v_i}))\le\wt(A_i')<\wt(C_i)$, so (a) yields an admissible rank $c_i\in C_i\setminus\crit(T_{v_i})$ satisfying $\ctr(\crit(T_{v_i}),c_i)\subseteq C_i$. The sets $\ctr(\crit(T_{v_i}),c_i)\subseteq C_i$ are pairwise disjoint, so by \cref{prop:valid} they form a valid ranking of $T_v$ with visible set $V_{\mathrm{new}}\subseteq V$ and $\wt(V_{\mathrm{new}})\le\wt(V)$. By definition of $\Lab$, \[\wt(\Lab(\crit(T_{v_1}),\ldots,\crit(T_{v_d})))\le\wt(V_{\mathrm{new}})\le\wt(V).\] Minimizing over all valid rankings of $T_v$ gives $\wt(\Lab(\crit(T_{v_1}),\ldots,\crit(T_{v_d})))\le\wt(\crit(T_v))$, forcing equality of weights and hence, as $\wt$ is injective, of the sets.
\end{proof}

The above lemma shows that a branch with child weight $\wvar{a}_i$ requires weight $\wvar{a}_i+1$: every set of ranks of at least this weight contains the visible set of a valid branch ranking. In particular, the child weights alone determine $\Lab$:
\[
 \wt(\Lab(A_1,\ldots,A_d))=\min\sum_{i=1}^{d}\wvar{x}_i
\]
where the minimum is over positive integers $\wvar{x}_i\ge\wt(A_i)+1$ whose binary representations have pairwise disjoint $1$-bit positions (e.g., $\wvar{x}_1 = 10 = 1010_2$ and $\wvar{x}_2 = 5 = 0101_2$ for $\wt(A_1)=9$ and $\wt(A_2)=4$). The resulting minimum-weight visible set can be read off from the $1$-bits of the $\wvar{x}_i$'s.

\subsection{Complexity result preliminaries}\label{sec:com-prel}

\paragraph{Boolean circuits.}
A Boolean circuit is a finite directed acyclic multigraph whose vertices are
inputs or gates. Inputs have indegree zero. Each gate $v$ has an ordered
list of $k\in\{0,1,2\}$ incoming wires and a function
$f_v\colon\{0,1\}^k\to\{0,1\}$; a gate with $k=0$ is a constant.
The \emph{fan-in} and \emph{fan-out} of a vertex are its indegree and
outdegree, counting parallel wires separately. In particular, a binary
gate has distinct left and right input slots even if both wires have the
same source. An input assignment determines the value $c_v$ of each
vertex in topological order: inputs take their assigned bits, and a gate
with predecessors $u_1,\ldots,u_k$ takes
$c_v=f_v(c_{u_1},\ldots,c_{u_k})$.
An ordered list of designated sinks specifies the circuit's outputs.
Its \emph{size} is its number of vertices, and its \emph{depth} is the
maximum number of edges on a directed path. A circuit family
$(C_n)_{n\ge1}$ is \emph{logspace-uniform} if a deterministic transducer
outputs $C_n$ from $1^n$ using $O(\log n)$ workspace~\cite{GreenlawHR95}.

\paragraph{Complexity classes.}
For $k\ge1$, the class $\NC^k$ consists of decision problems computed
by uniform, bounded-fan-in Boolean circuits of polynomial size
and depth $O(\log^k n)$. We write $\NC=\bigcup_{k\ge1}\NC^k$.
The class $\Pc$ consists of decision problems solvable in deterministic
polynomial time, and $\NC\subseteq\Pc$~\cite{GreenlawHR95,AroraB09}.

An \emph{$\NC^1$ many-one reduction} from $A$ to $B$ computes a
polynomial-length encoding $f(x)$ by a uniform family of
bounded-fan-in circuits of polynomial size and depth $O(\log |x|)$,
with $x\in A$ if and only if $f(x)\in B$.
These reductions compose. We use them
throughout: a problem is $\Pc$-complete if it belongs to $\Pc$ and every
problem in $\Pc$ reduces to it in this sense~\cite{GreenlawHR95}.

\paragraph{PRAM.}\label{par:pram}
A parallel random-access machine (PRAM) consists of synchronous
processors sharing a random-access memory. Unless specified otherwise,
we use $O(\log n)$-bit words and the \emph{common concurrent-read,
concurrent-write} (common CRCW) model. Multiple processors may read
the same memory cell simultaneously. They may also write to the same
cell simultaneously, provided that they all write the same value.
Polylogarithmic PRAM time with polynomially many processors characterizes
$\NC$, although this equivalence need not preserve the exponent
$k$~\cite{GreenlawHR95}.
Prefix sums of $n$ words take $O(\log n)$ PRAM
time~\cite{LadnerF80}. Binary addition, subtraction, comparison, and
prefix sums are also in $\NC^1$~\cite[Section~3]{Cook85}.

We also use two standard path primitives on directed acyclic graphs
with $n^{O(1)}$ vertices. Reachability takes $O(\log n)$ CRCW time;
minimum-cost paths with nonnegative integer costs of $n^{O(1)}$ bits
take $O(\log^2 n)$ time. Both return path witnesses and use work
polynomial in the graph size and cost bit length~\cite{KarpR90}.

The underlying operation combines pairs of known paths through every
possible intermediate vertex. Each round doubles the path length
covered, so $O(\log n)$ rounds suffice. This is called \emph{matrix
squaring}: it uses Boolean operations for reachability and minimum
and addition for path costs. A Boolean round takes constant CRCW
time. A cost round takes $O(\log n)$ time for additions and comparisons
on costs stored across multiple words; pairwise comparisons select
minima in parallel. Keeping an outgoing edge that reaches the target,
or attains the minimum cost, gives the witnesses. Acyclicity ensures
these choices form paths.

\paragraph{Circuit value.}
The \emph{circuit value problem} (CVP) asks whether a designated output of
a given circuit evaluates to $1$ on a given input assignment. It is
$\Pc$-complete~\cite{Ladner75}. We use the
\emph{Alternating Monotone Fanin 2, Fanout 2 CVP} (\AMtwoCVP{}),
with binary AND and OR gates and fan-out at most two.
Every wire between gates joins opposite types, inputs feed only OR gates,
and the designated output is an OR sink. This restriction remains
$\Pc$-complete under $\NC^1$ reductions with the vertices given in
topological order~\cite[Theorems~6.2.1 and~6.2.3]{GreenlawHR95}.
We number them $1,\ldots,m$ and place the designated sink last; moving
it to the end of a given topological order is an $\NC^1$ relabeling.

\subsection{Algorithmic result preliminaries}\label{sec:models}
A common ingredient to our algorithms is a size-threshold decomposition that allows us to identify small subproblems and the bulk of the tree that joins them. For a threshold $\tau\ge1$, let $T_0$ be the tree induced by the vertices $v$ whose subtrees satisfy $|T_v|\ge\tau$. We call $T_0$ the \emph{retained tree} and its vertices \emph{retained}. The components outside $T_0$ are the \emph{attached subtrees}: the subtrees $T_u$ with $|T_u|<\tau$ whose parent lies in $T_0$. Their rankings can be computed separately and then joined through $T_0$.

\begin{lemma}[Size-threshold decomposition]\label{lem:size-threshold}
Let $T$ have $n\ge \tau$ vertices and height at most $H$. The retained tree $T_0 \subseteq T$ contains the root of $T$, has at most $n/\tau$ leaves and at most $Hn/\tau$ vertices. Each attached subtree of $T$ has fewer than $\tau$ vertices and is connected to $T_0$ by exactly one edge.
\end{lemma}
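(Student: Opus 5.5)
The plan is to verify the four claims of \cref{lem:size-threshold} directly from the definition of $T_0$ as the set of vertices $v$ with $|T_v|\ge\tau$. Three facts follow immediately from the definition.

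\emph{Connectivity and containment of the root.} Since $|T_r|=n\ge\tau$ for the root $r$, the root is retained. Subtree sizes are monotone along root paths: if $u$ is an ancestor of $v$ then $T_v\subseteq T_u$, so $|T_u|\ge|T_v|$. Hence every ancestor of a retained vertex is retained. The retained set is therefore closed upward, and it induces a connected subtree containing $r$.

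\emph{Attached subtrees.} Let $u\notin T_0$. Its ancestors are not all outside $T_0$, since $r\in T_0$. Take the topmost non-retained vertex $u'$ on the path from $u$ to $r$; its parent is retained. By upward closure, every descendant of $u'$ is non-retained, so $T_{u'}$ lies entirely outside $T_0$. Thus the components of $T\setminus T_0$ are exactly the subtrees $T_{u'}$ with $u'$ non-retained and parent retained. Each such subtree has $|T_{u'}|<\tau$ and meets $T_0$ only through the edge from $u'$ to its parent, since $T_{u'}$ is a full subtree.

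\emph{Counting leaves.} Let $\ell_1,\ldots,\ell_p$ be the leaves of $T_0$. No $\ell_i$ is an ancestor of another, since an ancestor of a retained vertex has a retained child and is not a leaf of $T_0$. So the subtrees $T_{\ell_i}$ are pairwise disjoint subsets of $V(T)$, each of size at least $\tau$. This gives $p\tau\le n$, that is, at most $n/\tau$ leaves.

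\emph{Counting vertices.} A rooted tree is the union of its root-to-leaf paths. Each such path in $T_0$ has at most $H$ vertices, because depth in $T_0$ equals depth in $T$ and is at most $H$. Hence $|T_0|\le pH\le Hn/\tau$.

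I do not expect a genuine obstacle here. The only point needing care is that the leaves of $T_0$, as opposed to the leaves of $T$, have disjoint subtrees. This is exactly where upward closure is used, and it is the step I would state most carefully.
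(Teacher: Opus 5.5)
Your proof is correct. It matches the paper's argument for the root, for connectivity and the attached subtrees (both via upward closure of the retained set), and for the leaf count (disjoint subtrees $T_\ell$ at the leaves of $T_0$, each of size at least $\tau$).

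The only difference is the vertex bound.
\begin{itemize}
\item The paper counts by depth level: retained vertices of equal depth have disjoint subtrees of size at least $\tau$. So each of the at most $H$ levels holds at most $n/\tau$ retained vertices.
\item You instead cover $T_0$ by its at most $n/\tau$ root-to-leaf paths, each with at most $H$ vertices.
\end{itemize}
The two arguments are equally elementary. Yours reuses the leaf bound and directly gives $|T_0|\le H$ times the number of leaves of $T_0$. The paper's gives a per-depth bound that does not go through the leaf count. Either suffices for the lemma.
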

\begin{proof}
Subtree sizes increase toward the root, so retained vertices form a
connected tree containing the root. Each component outside is a subtree $T_u$ with $|T_u|<\tau$, attached by the edge from $u$ to its parent. For distinct leaves $v$ of $T_0$, the subtrees $T_v$ are disjoint and each has at least $\tau$ vertices, so there are at most $n/\tau$ such leaves. Retained vertices of equal depth have disjoint subtrees, so each of the at most $H$ depth levels contains at most $n/\tau$ retained vertices.
\end{proof}

\paragraph{MPC Primitives.} In MPC, basic computing tasks such as sorting, prefix sums, and aggregation by key take $O(1)$ deterministic MPC rounds when $S=n^{\Omega(1)}$~\cite{Goodrich99,GoodrichSZ11}.
A useful primitive will be the following grouped selection.

\begin{corollary}\label{lem:select}
Let $S\ge n^{3/4}$. Consider key-value pairs of total size $O(n)$, where each value occupies $O(\sqrt{n})$ words and values with the same key are totally ordered by a locally computable comparison. For each key, one can compute a minimum value and the number of pairs in $O(1)$ deterministic MPC rounds, using $O(S)$ local memory and $O(n)$ total memory.
\end{corollary}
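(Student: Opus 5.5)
The plan is to reduce the grouped selection to standard $O(1)$-round MPC sorting and aggregation primitives. The only subtlety is that a single value may occupy up to $\Theta(\sqrt n)$ words, so we must ensure no machine ever needs to hold more than $O(S)$ words. We also need any per-key reduction to proceed in $O(1)$ rounds even when a key has many pairs.

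\textbf{Counts.} First, tag each pair with its key and its total size in words. Sort all pairs by key using the deterministic $O(1)$-round sorting of~\cite{Goodrich99,GoodrichSZ11}, treating each value as a contiguous block of words that is kept together. Because a value has $O(\sqrt n)\le O(S)$ words, each block fits on one machine. A block split across a machine boundary can be moved wholly to one of the two neighbors, at most doubling the load. Counts then follow from aggregation by key (prefix sums over the sorted order) in $O(1)$ rounds.

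\textbf{Minima: the easy case.} For the minimum, run a tournament over the sorted sequence. On each machine, compute locally the minimum value of each key segment that it holds, using the locally computable comparison. Each machine now holds at most two boundary keys whose segments extend to other machines; keys lying entirely inside one machine are finished.

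\textbf{Minima: the main obstacle.} A key spanning many machines leaves one local candidate per machine, each of size $O(\sqrt n)$. Sending all candidates of a key to one machine could overflow, since there may be $\Theta(n/S)$ machines per key. The total candidate data is $O((n/S)\sqrt n)$ words. With $S\ge n^{3/4}$, a single machine can absorb about $S/\sqrt n\ge n^{1/4}$ candidates. We therefore combine candidates in a tree of fan-in $f=\Theta(S/\sqrt n)$. Each round, groups of $f$ consecutive candidates of the same key are sent to one machine, which keeps the minimum. This shrinks the count by a factor $f\ge n^{1/4}$ per round. Since there are at most $n/S\le n^{1/4}$ candidates per key, one or two rounds suffice; in general the number of rounds is $O(\log_f n)=O(1)$ because $S=n^{\Omega(1)}$. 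The routing of groups is computed by prefix sums over candidate indices. Total memory stays $O(n)$, since each round only forwards the surviving candidates.

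Finally, broadcast each key's minimum together with its count back to the position of the key's first pair, again by prefix-sum routing. This gives $O(1)$ rounds overall, with $O(S)$ local and $O(n)$ total memory.
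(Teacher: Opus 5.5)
Your proposal is correct and follows essentially the same route as the paper's proof sketch. Both sort the pairs by key, replace each machine-sized chunk of a group by its local minimum, and use the shrinkage factor $\Omega(S/\sqrt{n})=\Omega(n^{1/4})$ (which comes from $S\ge n^{3/4}$, not merely from $S=n^{\Omega(1)}$) to finish in $O(1)$ rounds, with counts obtained by aggregation.
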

\begin{proof}[Proof sketch]
Sort the pairs by key and split each group into chunks of $\Theta(S)$ words, each stored on one machine; a group that fits on one machine is finished locally. Replacing every chunk by its minimum shrinks the total size of a group's candidates by a factor $\Omega(S/\sqrt{n})=\Omega(n^{1/4})$, so after $O(1)$ rounds every group fits on one machine. Counts are computed by aggregation.
\end{proof}

We can root the tree and compute depths, subtree sizes, and preorder numbers deterministically in $O(\log D)$ MPC rounds~\cite{BalliuLMOU23,GuptaLMPSSSUV23}. Preorder numbers then allow us to identify subtrees by their intervals.

\begin{lemma}[\cite{GuptaLMPSSSUV23}]\label{lem:preorder}
For a tree $T_u$, $O(\log D)$ MPC rounds suffice to compute preorder numbers $\pre(v)$ of all vertices $v \in T_u$ with local memory $S = n^{\Omega(1)}$ and $O(n)$ total memory. Then, $T_w=\{v:\pre(w)\le\pre(v)<\pre(w)+|T_w|\}$ for every $w \in T_u$.
\end{lemma}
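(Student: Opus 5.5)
The statement is the cited result \cref{lem:preorder}: preorder numbers of $T_u$ are computable in $O(\log D)$ MPC rounds, and every subtree $T_w$ is exactly the preorder interval $[\pre(w),\pre(w)+|T_w|)$. The plan treats the two parts separately. For the round bound I take the cited MPC machinery of \cite{BalliuLMOU23,GuptaLMPSSSUV23}, as the paper does. That machinery roots the tree and computes depths and subtree sizes $|T_v|$ in $O(\log D)$ deterministic rounds with $S=n^{\Omega(1)}$ and $O(n)$ total memory. From these quantities, preorder numbers follow by a top-down prefix computation.

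\textbf{Computing preorder numbers.} Fix an order on the children of each vertex, say by identifier, and sort children by parent key. A grouped prefix sum over the sorted lists then gives each child $c$ of $v$ an offset
\[
\mathrm{off}(c)=1+\sum_{c'<c}|T_{c'}|,
\]
where the sum runs over earlier siblings $c'$. Sorting and prefix sums take $O(1)$ rounds. Define $\pre(u)=0$ and $\pre(c)=\pre(v)+\mathrm{off}(c)$. Unrolling this recursion, $\pre(v)$ is the sum of the offsets along the root-to-$v$ path in $T_u$, a path of length at most $D$. Summing along all root-to-vertex paths can be done by pointer doubling on the parent pointers. Each vertex keeps an accumulated sum and a $2^i$-th ancestor pointer, and both are updated by one lookup per round, so $O(\log D)$ rounds suffice.

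\textbf{The main obstacle.} Pointer doubling needs ancestor tables, and these must stay within $O(n)$ total memory. I avoid storing tables: each vertex keeps only its current jump pointer and its current partial sum, and overwrites both in every round. Each such round is one sort-and-join, costing $O(1)$ MPC rounds. If that were still problematic, I would defer to the cited work's exponentiation technique, which stays within linear total memory.

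\textbf{The interval characterization.} I argue by induction on height, using the preorder definition: $w$ comes first, followed by the blocks of its children's subtrees in order. The children's blocks occupy consecutive ranges starting at $\pre(w)+1$, with total length $|T_w|-1$. By induction each child's subtree occupies its own interval, so all of $T_w$ occupies $[\pre(w),\pre(w)+|T_w|)$. For the converse, vertices outside $T_w$ lie in blocks disjoint from this interval, because at their lowest common ancestor with $w$ the two sibling blocks occupy disjoint consecutive ranges. Hence $T_w$ is exactly the stated interval.
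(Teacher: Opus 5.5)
Your proposal is correct and takes a more self-contained route than the paper, which gives no argument at all: the lemma is imported wholesale from \cite{GuptaLMPSSSUV23}, and rooting, depths, and subtree sizes are likewise attributed to \cite{BalliuLMOU23,GuptaLMPSSSUV23}. You black-box only rooting and subtree sizes and derive the rest yourself.

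\begin{itemize}
\item The sibling offsets come from a grouped prefix sum in $O(1)$ rounds.
\item The top-down accumulation of offsets along root paths is plain pointer jumping. Each vertex issues exactly one lookup per doubling step, so a sort-based join uses $O(n)$ total memory and $O(1)$ rounds per step; a segmented broadcast handles targets of high in-degree. Then $O(\log H)=O(\log D)$ steps suffice.
\item Your overwrite-in-place answer to the ancestor-table concern is right, and the fallback is unnecessary.
\end{itemize}

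Your route shows that the only part needing the heavy $O(\log D)$-round tree machinery is the bottom-up aggregation of subtree sizes. The preorder numbering and the interval identity then follow by elementary means. The paper's citation buys brevity instead.

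One small repair is needed in the converse of the interval identity. Your lowest-common-ancestor argument covers a vertex $x\notin T_w$ lying in a sibling block at $a=\mathrm{LCA}(x,w)$. It misses the case $x=a$, i.e., $x$ a proper ancestor of $w$; there $\pre(x)<\pre(w)$ because all offsets are positive. A simpler fix: your forward induction already shows that $\pre$ maps $T_w$ bijectively onto $[\pre(w),\pre(w)+|T_w|)$. Applied at the root, $\pre$ is a bijection from $T_u$ onto $[0,|T_u|)$, so no vertex outside $T_w$ can take a value in that interval.
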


\section{P-completeness of tree edge ranking}\label{sec:pcomplete}

Our main result (\cref{thm:pcomplete}) is proved by showing the following theorem. Since \TER{} is known to be in $\Pc$, it suffices to show $\Pc$-hardness:
\begin{theorem}\label{thm:pcreduction}
    There exists an $\NC^1$ many-one reduction from \AMtwoCVP{} to \TER{} for trees of diameter at most six.
\end{theorem}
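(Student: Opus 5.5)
We will compose two $\NC^1$ reductions, \AMtwoCVP{} $\to$ \GreedyCover{} $\to$ \TER{}, following the outline in \cref{sec:ov-pcom}.

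\textbf{\GreedyCover{} to \TER{}.} We first show that an instance $(K;\wvar{q}_1,\ldots,\wvar{q}_d)$ of \GreedyCover{} is feasible if and only if a certain tree of diameter at most six has $\er\le K$. The plan has three steps.
\begin{enumerate}[label=(\roman*),leftmargin=*]
\item For each requirement $\wvar{q}_i\ge 1$, write $\wvar{q}_i-1=\sum_j 2^{k_j-1}$ with $k_1>\cdots>k_t\ge1$. Glue leaf-rooted stars with $k_1,\ldots,k_t$ edges at a common root, and then add one edge above this root. This gives a branch $B_i$ of height at most three and $O(\ell^2)$ vertices.
\item Verify that $\wvar{w}(B_i)=\wvar{q}_i$. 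For the stars, the critical set of $t$ stars is $\{k_1,\ldots,k_t\}$: lower-bound each star by $[k_j]$ restricted to its top, and use \cref{lem:dom}(c) together with the fact that disjoint $1$-bit positions are forced. The extra edge adds exactly one, by \cref{lem:dom}(b) and the binary-addition argument.
\item Identify the roots of all $B_i$ to get a tree $T$ of diameter at most six. By \cref{lem:dom}(a,c) and the weight characterization of $\Lab$ at the end of \cref{sec:basics}, $\crit(T)\subseteq[K]$ holds if and only if $[K]$ can be partitioned into parts $X_i$ with $\wt(X_i)\ge\wvar{q}_i$. This is exactly feasibility of \GreedyCover{}, and by \cref{prop:opt} it is equivalent to $\er(T)\le K$.
\end{enumerate}
Each branch depends only on the bits of one $\wvar{q}_i$, so the construction is in $\NC^1$.

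\textbf{\AMtwoCVP{} to \GreedyCover{}.} We proceed in four steps.
\begin{enumerate}[label=(\roman*),leftmargin=*]
\item Prove that the greedy procedure decides \GreedyCover{}: it accepts if and only if the instance is feasible. The proof is an exchange argument: weight $2^e$ exceeds the sum of all smaller weights, so it should go to the largest residual. Rejection is correct when a residual is at least $2^{e+1}$, or when two residuals are at least $2^e$, because the remaining ranks sum to $2^{e+1}-1$.
\item Take the circuit in topological order $1,\ldots,m$ and assign node $v$ a fixed block of $18$ clocks, from $18(m-v)+17$ down to $18(m-v)$. Set $K=18m$.
\item For each node, write down explicitly a constant list of requirements.
\begin{itemize}
\item A 3-clock AND gadget uses the two incoming messages; OR gates are handled via De Morgan by transmitting complemented bits.
\item A 15-clock selector is built as in \cref{fig:overview-signal}. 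Its requirements carry low-order tails of the form $2^{e'}+\cdots$, which are aimed at the clocks of the at most two successor gates' input slots. The selector leaves exactly the tail corresponding to the complemented value.
\item Input nodes contribute either $2^{e}$ (true) or the pair $\{2^{e-3}+1,2^{e-3}\}$ (false) directly.
\item A final test requirement forces rejection exactly when the output is $0$.
\end{itemize}
\item Prove the invariant by induction over the blocks. At the start of block $v$, the multiset of residuals consists of pristine future requirements plus exactly the message encodings determined by the true values $c_u$ of already-evaluated predecessors. No residual causes rejection, and no stray residual interferes with later blocks.
\end{enumerate}
All requirements are $O(m)$-bit numbers computed locally from the node's type and neighbors, so this reduction is also in $\NC^1$.

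\textbf{Main obstacle.} The main obstacle is the inductive invariant of the circuit simulation, in particular the non-interference. We must show that the largest residual at each clock is always the one intended by the gadget, and not a tail carried over from an earlier selector or an unconsumed guard. Our first attempt is to separate scales. Each message is placed so that its leading bit sits in the target gate's 3-clock window. Every requirement belonging to a later block is either strictly smaller than $2^{e}$ at clock $e$ (so it is dormant), or it is its designated leading bit. We then verify each gadget by a finite case analysis over its two input values, checking that after its last clock the only surviving residuals are the intended messages. Composing the two reductions proves the theorem.
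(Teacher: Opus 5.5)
\textbf{Genuine gap: the circuit half is not constructed.} Your \GreedyCover{}-to-\TER{} half and your greedy-procedure lemma agree with the paper's argument. The architecture of the circuit half also matches: 18-clock blocks with $K=18m$, a 3-clock AND gadget followed by a 15-clock selector, complemented bits via De Morgan, and a final test requirement. But the proposal stops where the proof begins: you never define the signals, the messages, the selector, or the AND gadget, and building them is the substance of the reduction.

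The AND gadget needs an idea you do not supply. Residuals are consumed destructively and two messages never add. So the paper has the right input carry its own output signal hidden under leading bits $2^{e-1}$ and $2^{e-2}$. The left input then decides what happens to those bits:
\begin{itemize}
\item If the left message is $\{2^{e}\}$, it uses up clock $e$, so those bits are subtracted and the hidden $\sigH_{e-3}$ or $\sigL_{e-3}$ is exposed.
\item If the left message is a guard $2^{e-3}+2^{e-4}$ plus a ready-made $\sigL_{e-3}$, those bits are covered wholesale and the guard is cleared at clock $e-2$.
\end{itemize}
See \cref{lem:gate}. This design also makes the gate's output exactly the signal format the selector consumes, a compatibility you do not address.

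\textbf{The concrete choices you do state would fail.}
\begin{itemize}
\item \emph{Non-interference.} Your principle is false: a residual below $2^{e}$ is not dormant at clock $e$, because whenever nothing reaches $2^{e}$, Cover gives $2^{e}$ to the largest residual. Every clock needs an intended residual that dominates all foreign ones. The paper secures this with explicit cutoffs: other residuals lie below $2^{e-14}$ during a selector starting at clock $e$, below $2^{e-3}+2^{e-4}$ during a gate, and residuals reserved for later nodes are at most $2^{e_i-18}<2^{e_i-17}$. It also handles the tie between $V_4$ and the $T$-residuals when $\wvar{v}_4=0$.
\item \emph{Input nodes.} For the same reason, an input node, which has no gate, must fill its three gate clocks; the paper inserts padding $2^{e_i},2^{e_i-1},2^{e_i-2}$. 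Otherwise Cover spends those clocks on the input's own signal and selector requirements. For example, if the signal sits at clock $e_i-3$ as the selector requires, a true signal $2^{e_i-3}$ is covered outright at clock $e_i$.
\item \emph{False signal.} Your $\{2^{e-3}+1,2^{e-3}\}$ is calibrated for the 5-clock toy in the overview. With fan-out two, the selector must expose up to six $\wvar{u}$-entries and four $\wvar{v}$-entries. The $0$-signal must then absorb exactly the clocks between the covering phase and the $V$-subtraction phase, which is why the paper uses $\sigL_e=\{2^{e-9}+2^{e-10}+1,\,2^{e-9}+2^{e-10}\}$. With the selector of \cref{eq:selector}, your pair leaves two residuals $\ge 2^{e-3}$ at clock $e-3$, and the procedure rejects.
\end{itemize}
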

We split its proof into two parts, introducing \GreedyCover{} as an intermediate reduction.

\begin{definition}[\GreedyCover{}]\label{def:greedycover}
    Given an integer $K\ge1$ and $d \ge 1$ positive integer \emph{requirements} $\wvar{q}_{1}, \ldots, \wvar{q}_{d}$ as input, \GreedyCover{} is a decision problem to determine if there exists a partition $X_{1}, \ldots, X_{d}$ of $[K]$, such that $\sum_{x \in X_j} 2^{x-1} \ge \wvar{q}_{j}$ for each $j = 1, \ldots, d$.
\end{definition}
We can write \GreedyCover{} as a decision variant of the optimization program below, where $\&$ denotes the bitwise AND operator.
\begin{equation}\label{eq:greedy-program}
\begin{aligned}
\text{minimize}\qquad& x_1 + \cdots + x_d \\
\text{s.t.}\qquad & x_i \ge \wvar{q}_i & i=1,\ldots, d\\
& x_i \in \mathbb{Z}_{\ge 0} & i=1,\ldots, d\\
& x_i \;\&\; x_j = 0 & 1 \le i < j \le d
\end{aligned}
\end{equation}
In the language of \cref{def:greedycover}, $x_i$ corresponds to the weighted sum $\sum_{x \in X_i} 2^{x-1}$. Since this program is always feasible for sufficiently large $x_i$'s, it attains the optimum at some positive integer. Let $\FGA(\wvar{q}_{1}, \ldots, \wvar{q}_{d})$ denote the optimum. Then \GreedyCover{} is feasible if and only if $\FGA(\wvar{q}_{1}, \ldots, \wvar{q}_{d}) \le 2^{K}-1$.
\paragraph{Roadmap.} We first reduce \GreedyCover{} to \TER{} by constructing
one branch per requirement (\cref{sec:demands}). We then encode circuit
evaluation as a \GreedyCover{} instance by analyzing its greedy execution
(\cref{sec:am2-to-ga}). Composing the two reductions proves \cref{thm:pcreduction}. Finally, we amplify the additive gap from one to any constant $r$ using a rooted product of trees. This shows that approximating $\er(T)$ within any constant additive error remains $\Pc$-hard, even for trees of constant diameter (\cref{sec:approx-inapprox}).

\subsection{GreedyAllocation to Tree Edge Ranking}\label{sec:greedy}\label{sec:demands}
In this section, we reduce \GreedyCover{} to \TER{}. The key observation is that, at the root of a tree, an edge ranking with ranks in $[K]$ is a \GreedyCover{} allocation: every branch at the root is served by its own set of visible ranks, these sets are pairwise disjoint, and a set of ranks can serve a branch if and only if its weight is at least the weight of the branch's critical set (\cref{prop:branch-requirements}). Thus, branches act as requirements and the ranks $1,\ldots,K$ as the weights $2^0,\ldots,2^{K-1}$. It then suffices to build, for every requirement $\wvar{q}$, a branch gadget $B_{\wvar{q}}$ whose vertices lie within distance three of its root and whose critical set has weight exactly $\wvar{q}$ (\cref{lem:gadget}), and to join the gadgets $B_{\wvar{q}_1},\ldots,B_{\wvar{q}_d}$ at a common root.
\begin{lemma} \label{lem:gc2ter}
    Given an input $\left\langle K; \wvar{q}_1, \ldots, \wvar{q}_{d}\right\rangle$ to \GreedyCover{} as specified in \cref{def:greedycover}, one can construct in $\NC^1$ a tree $T(\wvar{q}_1, \ldots, \wvar{q}_d)$ such that $\er(T(\wvar{q}_1, \ldots, \wvar{q}_d)) \le K$ if and only if \GreedyCover{} is feasible.
    Furthermore, $T(\wvar{q}_1, \ldots, \wvar{q}_d)$ has diameter at most six.
\end{lemma}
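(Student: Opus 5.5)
The proof has three parts: a branch-requirement fact, a gadget realizing any requirement as a critical weight, and the observation that gluing gadgets at a root and applying \cref{prop:opt} gives the equivalence.

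\textbf{Step 1: branch requirements.} The first fact is the one announced as \cref{prop:branch-requirements}. Let $B$ be a branch, i.e.\ an edge $rv$ together with a subtree $T_v$. A set $X$ of ranks contains the visible set (at $r$) of some valid ranking of $B$ if and only if $\wt(X)\ge \wt(\crit(B))$.

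For the forward direction, suppose some ranking of $B$ has visible set $Y\subseteq X$. Then $\wt(X)\ge\wt(Y)\ge\wt(\crit(B))$ by minimality.

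For the converse, if $X=\crit(B)$ we are done. Otherwise $\wt(\crit(T_v))<\wt(\crit(B))\le\wt(X)$ by \cref{lem:dom}(b). Then \cref{lem:dom}(a) gives an admissible rank $c\in X\setminus\crit(T_v)$ with $\ctr(\crit(T_v),c)\subseteq X$. Ranking $T_v$ critically and giving $rv$ rank $c$ is valid by \cref{prop:valid}, and its visible set lies in $X$.

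\textbf{Step 2: joining at a common root.} Let $T$ be the tree formed by gluing branches $B_1,\dots,B_d$ at a root $r$, with $\wt(\crit(B_j))=\wvar{q}_j$.

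Suppose $\er(T)\le K$. By \cref{prop:opt} there is a ranking with all ranks in $[K]$. By \cref{prop:valid} the visible sets $C_j$ of the branches at $r$ are pairwise disjoint. Each satisfies $\wt(C_j)\ge \wvar{q}_j$ by Step 1. Enlarge them to a partition of $[K]$, which only increases weights.

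Conversely, given a feasible partition $X_1,\dots,X_d$ of $[K]$, Step 1 gives a valid ranking of each $B_j$ with visible set inside $X_j$. We can take it with ranks in $[K]$: the inner subtree is ranked critically, so by \cref{prop:opt} its ranks are at most $\max\crit(T_v)$, which lies in the set and hence in $[K]$. These visible sets are disjoint, so \cref{prop:valid} makes the combined ranking valid, and $\er(T)\le K$.

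\textbf{Step 3: the gadget (main obstacle).} We need, for any positive $\wvar{q}$, a branch $B_{\wvar{q}}$ of height at most $3$ with $\wt(\crit(B_{\wvar{q}}))=\wvar{q}$, built in $\NC^1$. Write $\wvar{q}-1=\sum_{t}2^{k_t-1}$ with $k_1>\dots>k_s\ge1$.

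At a vertex $u$, hang $s$ stars, the $t$-th with $k_t$ edges, each rooted at a leaf: a path edge $u w_t$ followed by $k_t-1$ pendant edges at $w_t$.

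The first claim is that a single star of $k$ edges rooted at a leaf has critical set exactly $[k]$. This holds because all $k$ edges pairwise meet, so they need distinct ranks, all of them are visible, and $[k]$ is attainable. Hence its weight is $2^{k-1}$.

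The second claim is that the stars combine additively at $u$: $\crit(T_u)=\{k_1,\dots,k_s\}$, with weight $\wvar{q}-1$. To prove it, use the characterization after \cref{lem:dom}: each star requires an $x_t\ge 2^{k_t-1}$, and the $x_t$ have disjoint bits. Go through $t=1,\dots,s$ by induction. Forcing $x_1\ge2^{k_1-1}$ gives a total weight at least $2^{k_1-1}$. A short argument shows the total must be at least $\sum_t 2^{k_t-1}$, since the required amounts are distinct powers of two, each needing its own bit at or above its position. The construction $x_t=2^{k_t-1}$ attains this.

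When $\wvar{q}=1$, $T_u$ is the single vertex $u$. Adding the edge $ru$ yields weight $\wvar{q}$ by the binary-increment argument. By \cref{lem:dom}(c) with one child, the minimal admissible contribution is the smallest $c\notin A$, giving exactly $\wt(A)+1$. The height is $3$ and the size is $O(\ell^2)$.

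The hard part is making this additivity argument rigorous; I would prove that the minimum in the displayed program equals $\sum 2^{k_t-1}$ when all requirements are distinct powers of two.

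\textbf{Step 4: diameter and uniformity.} Every vertex is within distance $3$ of $r$, so the diameter is at most $6$. Binary expansion and output of the edge lists are local bit operations with prefix sums for indexing, so the construction is in $\NC^1$.
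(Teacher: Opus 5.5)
Your proposal follows the paper's proof essentially step for step. You use the realizability criterion of \cref{prop:branch-requirements}, the gadget built from leaf-rooted stars with sizes given by the binary expansion of $\wvar{q}-1$, and one top edge contributing the binary increment. You then glue the gadgets at a common root, bound every vertex within distance three of it, and use prefix sums for the $\NC^1$ construction. Your converse in Step 1 applies \cref{lem:dom}(a) directly to $\crit(T_v)$ and $X$. That is a slightly cleaner variant of the paper's argument, which instead raises the root-edge rank of a critical ranking of $B$.

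There is one false sub-claim in Step 3. A star with $k$ edges rooted at a leaf does \emph{not} have critical set $[k]$, and its edges are not all visible. A pendant edge at $w$ is visible from the root leaf only if its rank exceeds that of the root edge. Giving the root edge rank $k$ and the pendants ranks $1,\dots,k-1$ yields visible set $\{k\}$.

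Also, $\wt([k])=2^k-1$, so ``hence its weight is $2^{k-1}$'' does not follow from your set claim. The set claim is also inconsistent with your second claim $\crit(T_u)=\{k_1,\dots,k_s\}$. The weight $2^{k-1}$ you actually use downstream is correct, but it needs a different justification. The $k$ edges share $w$, so they need $k$ distinct ranks, and the maximum rank (at least $k$) is always visible; equality is attained as above. Alternatively, the star centered at $w_t$ and rooted there has critical set $[k_t-1]$, and the edge $uw_t$ adds one. The paper skips the star's critical weight entirely: each star's maximum rank is at least $k_t$ and visible from $u$, and these maxima are distinct, giving weight at least $\wvar{q}-1$ directly.

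Your ``main obstacle'' is not one. Since each $x_t\ge 2^{k_t-1}$, the bound $\sum_t x_t\ge\sum_t 2^{k_t-1}$ holds termwise. The choice $x_t=2^{k_t-1}$ is bitwise disjoint because the $k_t$ are distinct, so it attains the bound.

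One small imprecision in Step 2: $\max\crit(T_v)$ need not lie in the branch's visible set, since it is hidden when $c>\max\crit(T_v)$. In that case every rank is below $c\in[K]$ anyway. More simply, as the paper notes, the maximum rank of the branch is always visible at $r$, so it lies in $X_j\subseteq[K]$.
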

Let a tree $T$ have root $v$ and branches $B_1,\ldots,B_d$, each including its
edge incident to $v$ and rooted at $v$. We say that a set of ranks $X$ is
\emph{realizable} for a rooted tree if the tree admits a valid edge ranking
whose visible set is contained in $X$. The following proposition shows the relevance of critical sets to realizability and \GreedyCover{}.

\begin{proposition}\label{prop:branch-requirements}
A set of ranks $X$ is realizable for a branch $B$ if and only if
$\wt(X)\ge\wt(\crit(B))$. Consequently, $X$ is realizable for $T$ if and
only if it contains pairwise disjoint subsets $X_1,\ldots,X_d$ with
$\wt(X_i)\ge\wt(\crit(B_i))$ for each $i = 1, \ldots, d$.
\end{proposition}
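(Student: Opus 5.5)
The first claim has two directions; I will take each in turn and then derive the consequence from \cref{prop:valid}. Throughout, a branch $B$ consists of the root edge $vu$ together with the subtree $T_u$, so that $\crit(B)=\Lab(\crit(T_u))$. By \cref{lem:dom}(c), this is the minimum over admissible $c\notin\crit(T_u)$ of $\ctr(\crit(T_u),c)$.

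\emph{Only if.} Suppose $B$ has a valid ranking with visible set $V\subseteq X$. Then $\wt(X)\ge\wt(V)\ge\wt(\crit(B))$ by minimality of the critical set, since weights of subsets are monotone.

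\emph{If.} Suppose $\wt(X)\ge\wt(\crit(B))$, and let $A=\crit(T_u)$. If $\wt(X)=\wt(\crit(B))$, then $X=\crit(B)$ by injectivity of $\wt$, and a critical ranking of $B$ realizes it. Otherwise $\wt(X)>\wt(\crit(B))$, and by \cref{lem:dom}(b), $\wt(\crit(B))\ge\wt(A)+1$, so $\wt(A)<\wt(X)$. Applying \cref{lem:dom}(a) with $C=X$ gives the rank $c=\max(A\triangle X)$, which satisfies $c\in X\setminus A$ and $\ctr(A,c)\subseteq X$. Since $c\notin A$, the rank $c$ is admissible. So we rank $T_u$ critically and give $vu$ rank $c$; by \cref{prop:valid} with a single child, this is a valid ranking of $B$ with visible set $\ctr(A,c)\subseteq X$. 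This step is the crux: the root edge must hide the mismatched low ranks. It works because \cref{lem:dom}(a) is phrased for an arbitrary pair of sets, not only for critical ones.

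\emph{Consequence.} Suppose $T$ has a valid ranking with visible set contained in $X$. Restricting it to each branch gives a valid ranking of $B_i$ with visible set $V_i=\ctr(A_i',c_i)$. By \cref{prop:valid} these sets are pairwise disjoint, and their union lies in $X$. Setting $X_i=V_i$, the first part gives $\wt(X_i)\ge\wt(\crit(B_i))$.

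Conversely, suppose $X$ contains pairwise disjoint $X_1,\ldots,X_d$ with $\wt(X_i)\ge\wt(\crit(B_i))$. The first part yields, for each $i$, a valid ranking of $B_i$ with visible set $V_i\subseteq X_i$. The $V_i$ are pairwise disjoint because the $X_i$ are, and each root edge rank is admissible within its branch. Hence \cref{prop:valid} shows that combining these rankings gives a valid ranking of $T$. Its visible set is $\bigcup_i V_i\subseteq X$.
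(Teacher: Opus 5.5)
Your proof is correct and is essentially the paper's argument. The paper fixes a critical ranking of $B$, sets $C=\crit(B)$, and (when $C\not\subseteq X$) raises the root edge from $\min C$ to $\max(C\triangle X)$. You instead compare $X$ with $\crit(T_u)$ and let \cref{lem:dom}(a) and \cref{prop:valid} handle the verification, which is slightly cleaner. Your equality case is unnecessary, since $\wt(X)\ge\wt(\crit(B))\ge\wt(\crit(T_u))+1$ already gives the strict inequality that \cref{lem:dom}(a) needs.
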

\begin{proof}
The entire tree is realizable if and only if each branch is realizable with pairwise disjoint visible subsets. Hence it suffices to prove the single branch case.

The only-if part follows from the definition of critical sets: a visible set contained in $X$ has weight at most $\wt(X)$, and $\crit(B)$ has minimum weight among visible sets.

For the if-part, we show the existence of a ranking whose visible set is a subset of $X$. Fix a critical ranking of $B$ with the critical set $C:=\crit(B)$. If $C \subseteq X$, we are done. Otherwise, note that the rank $b$ assigned to the unique edge incident to the root is the minimum element of $C$. On the other hand, as $\wt(X)\ge\wt(C)$, $c := \max(C \triangle X)$ belongs to $X \setminus C$. As $C \setminus X$ is also nonempty, \[c = \max(C \triangle X) > \max(C \setminus X) \ge \min C = b.\]
Now increase the root edge's rank from $b$ to $c$. The new ranking is valid; as $c \notin C$ and $c > b$, no old edge of rank $c$ was originally visible from the root. The new visible set is exactly $\ctr(C, c)$, being a subset of $X$, since $c = \max(C \triangle X)$.
\end{proof}

Write $\wvar{w}(T)=\wt(\crit(T))$, with
$\wvar{w}(T)=\er(T)=0$ for a one-vertex tree, and set
$\FGA(\varnothing)=0$. The proposition gives
\begin{equation}\label{eq:approx-merge}
 \wvar{w}(T)=\FGA\bigl(\wvar{w}(B_1),\ldots,\wvar{w}(B_d)\bigr).
\end{equation}
The highest nonzero bit of $\wvar{w}(T)$ determines $\er(T)$.

Now we move to the construction of a \emph{branch gadget} $B_{\wvar{q}}$ for an arbitrary positive integer $\wvar{q}$, whose critical set has weight exactly $\wvar{q}$. Write the binary expansion of $\wvar{q}-1$ as
\[\wvar{q}=1+\sum_{j=1}^{t}2^{a_j}.\]
Start from an edge $rr_{\wvar{q}}$, rooted at $r$,
and attach a star graph with $(a_j+1)$ edges at $r_{\wvar{q}}$ by one of its leaves
for each $j$.

\begin{lemma}\label{lem:gadget}For any positive integer $\wvar{q}$,
$\wt(\crit(B_{\wvar{q}}))=\wvar{q}$.
\end{lemma}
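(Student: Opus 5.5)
We compute $\wvar{w}(B_{\wvar{q}})$ bottom-up using \eqref{eq:approx-merge} and \cref{lem:dom}. The structure is three steps, one per layer of the gadget: first the critical weight of a single star hanging from a leaf, then the weight at $r_{\wvar{q}}$ where the stars are joined, and finally the effect of the edge $rr_{\wvar{q}}$.

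\textbf{Step 1: a single star.} Consider a star with $k=a_j+1$ edges, rooted at one of its leaves $\ell$. The edge $\ell z$ to the center $z$ is the unique root edge. Below it, the center $z$ has $k-1$ leaf branches, each of critical weight $1$. Any valid ranking of the center's edges uses $k-1$ distinct ranks, and all of them are visible from $z$. So the minimum-weight option is $\{1,\ldots,k-1\}$, with weight $2^{k-1}-1$. By \cref{lem:dom}(b), the branch at $\ell$ has weight at least $2^{k-1}$. This value is attained: give $\ell z$ rank $k$, which hides all lower ranks. Hence the star contributes exactly $2^{k-1}=2^{a_j}$.

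\textbf{Step 2: joining the stars at $r_{\wvar{q}}$.} The vertex $r_{\wvar{q}}$ has branches of weights $2^{a_1},\ldots,2^{a_t}$, and the exponents $a_j$ are distinct since they come from a binary expansion. By \eqref{eq:approx-merge}, the weight at $r_{\wvar{q}}$ is $\FGA(2^{a_1},\ldots,2^{a_t})$. This optimum is at least $\sum_j 2^{a_j}$ because each $x_i\ge \wvar{q}_i$. It is attained by $x_j=2^{a_j}$, which are pairwise bit-disjoint. So the weight at $r_{\wvar{q}}$ is $\wvar{q}-1$. The degenerate case $t=0$ (so $\wvar{q}=1$) gives a one-vertex subtree of weight $0$.

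\textbf{Step 3: adding the edge $rr_{\wvar{q}}$.} The tree rooted at $r$ has the single branch consisting of $rr_{\wvar{q}}$ and the subtree at $r_{\wvar{q}}$. We must show that adding a root edge above a subtree of critical weight $\wvar{a}$ yields critical weight exactly $\wvar{a}+1$. \cref{lem:dom}(c) reduces this to minimizing $\wt(\ctr(C,c))$ over $c\notin C$, where $C$ is the critical set. \cref{lem:dom}(b) gives the lower bound $\wvar{a}+1$. For the matching upper bound, take $c$ to be the smallest rank not in $C$. Then $C$ contains all of $1,\ldots,c-1$, so
\[
\wt(\ctr(C,c)) = \wvar{a}-(2^{c-1}-1)+2^{c-1} = \wvar{a}+1.
\]
This is the binary-increment identity, and it applies equally to the star root edge in Step 1. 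With $\wvar{a}=\wvar{q}-1$ from Step 2, we conclude $\wt(\crit(B_{\wvar{q}}))=\wvar{q}$.

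The main point needing care is the lower bound in Step 1, namely that the center's edges must all carry distinct visible ranks. This follows from \cref{prop:valid} because the leaf branches have empty visible sets.
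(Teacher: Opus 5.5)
Your proof is correct and follows the paper's structure: first show that the subtree at $r_{\wvar{q}}$ has critical weight $\wvar{q}-1$, then show that the edge $rr_{\wvar{q}}$ adds exactly one, using \cref{lem:dom}(b) for the lower bound and the smallest-absent-rank (binary increment) choice for the upper bound. The only variation is in the middle step: you compute each star branch's weight $2^{a_j}$ and apply \eqref{eq:approx-merge}, whereas the paper argues directly that each star's maximum rank is at least $a_j+1$, is visible from $r_{\wvar{q}}$, and differs from the maximum ranks of the other stars.
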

\begin{proof}
In any ranking, the star with $(a_j + 1)$ edges has a maximum rank of at least $a_j + 1$, visible from $r_{\wvar{q}}$. Since those maximum ranks are distinct, any visible set at $r_{\wvar{q}}$ is of weight at least \[\sum_j2^{a_j}=\wvar{q}-1.\]
Equality is attained when we assign rank $a_j + 1$ to each star's edge incident to $r_{\wvar{q}}$.

By \cref{lem:dom}(b), adding the edge $rr_{\wvar{q}}$ increases the visible
weight by at least one. Give it the smallest rank absent from the critical set. In terms of the binary expansion of the weight, it turns on the least significant zero bit and turns off all the originally trailing ones, as the assignment shadows all the lower ranks which were originally visible. This is identical to adding $1$ to the weight; therefore,\[\wt(\crit(B_{\wvar{q}}))=(\wvar{q}-1) + 1 = \wvar{q}.\]
\end{proof}

Now we obtain $T(\wvar{q}_1, \ldots, \wvar{q}_d)$ by identifying  the roots of $B_{\wvar{q}_1}, \ldots, B_{\wvar{q}_d}$ (see also \cref{fig:rowtree}).

\begin{figure}[htbp]
\centering
\begin{tikzpicture}[
  vtx/.style={circle,fill=black,inner sep=1.2pt},
  lab/.style={font=\scriptsize},
  e/.style={thin},
  ev/.style={very thick,visc},
]
\node[vtx,label={[lab]above:$r$}] (o) at (0,0) {};
\node[vtx,label={[lab]left:$r_{12}$}] (z12) at (-3.2,-1.1) {};
\draw[ev] (o) -- node[lab,above left,text=visc]{$3$} (z12);
\node[vtx] (c1) at (-4.9,-2.3) {};
\draw[e] (z12) -- node[lab,left]{$1$} (c1);
\node[vtx] (c2) at (-3.5,-2.3) {};
\draw[e] (z12) -- node[lab,left]{$2$} (c2);
\node[vtx] (l21) at (-3.5,-3.3) {};
\draw[e] (c2) -- node[lab,left]{$1$} (l21);
\node[vtx] (c4) at (-1.7,-2.3) {};
\draw[ev] (z12) -- node[lab,right,text=visc]{$4$} (c4);
\foreach \x/\l in {-2.4/1,-1.7/2,-1.0/3}
  {\node[vtx] (q) at (\x,-3.3) {}; \draw[e] (c4) -- node[lab,fill=white,inner sep=0.5pt]{$\l$} (q);}
\node[vtx,label={[lab]right:$r_{3}$}] (z3) at (1.4,-1.1) {};
\draw[ev] (o) -- node[lab,below left,text=visc]{$1$} (z3);
\node[vtx] (c32) at (1.4,-2.3) {};
\draw[ev] (z3) -- node[lab,right,text=visc]{$2$} (c32);
\node[vtx] (l32) at (1.4,-3.3) {};
\draw[e] (c32) -- node[lab,right]{$1$} (l32);
\node[vtx,label={[lab]right:$r_{1}$}] (z1) at (3.0,-1.1) {};
\draw[ev] (o) -- node[lab,above right,text=visc]{$5$} (z1);
\begin{scope}[on background layer]
\filldraw[fill=black!4,draw=black!35]
  (o.center) -- (-5.25,-0.82) -- (-5.25,-3.58)
  -- (-0.60,-3.58) -- (-0.60,-0.82) -- cycle;
\filldraw[fill=black!4,draw=black!35]
  (o.center) -- (0.75,-0.82) -- (0.75,-3.58)
  -- (2.12,-3.58) -- (2.12,-1.50) -- cycle;
\filldraw[fill=black!4,draw=black!35]
  (o.center) -- (2.50,-1.42)
  -- (3.65,-1.42) -- (3.65,-0.82) -- cycle;
\end{scope}
\node[lab] at (-3.0,-3.88) {$B_{12}$ \textcolor{visc}{$(12)$}};
\node[lab] at (1.43,-3.88) {$B_3$ \textcolor{visc}{$(3)$}};
\node[lab] at (3.25,-1.72) {$B_1$ \textcolor{visc}{$(16)$}};
\end{tikzpicture}
\caption{The tree $T(12, 3, 1)$ with a valid edge ranking. Red edges are
visible from the root. Numbers in red parentheses give each branch's visible weight, which is at least the corresponding requirement.}
\label{fig:rowtree}
\end{figure}

\begin{proof}[Proof of \cref{lem:gc2ter}] Simply denote $T := T(\wvar{q}_1, \ldots, \wvar{q}_d)$. For the equivalence, \cref{prop:branch-requirements,lem:gadget} give that $[K]$ is realizable for $T$ if and only if \GreedyCover{} is feasible. As the maximum rank is always visible, $\er(T) \le K$ if and only if $[K]$ is realizable.

Now observe the constructed tree. If a requirement $\wvar{q}$ has $\ell$ bits, $B_{\wvar{q}}$ has at most $1+\ell(\ell+1)/2$ vertices other than $r$. Thus $|V(T)|$ is polynomial in the input length, and every vertex is within distance
three of $r$. This shows that the diameter is at most six.

For the parallel construction, compute all binary expansions
$\wvar{q}_i-1$ and the numbers of new vertices in their stars and
branches independently.
Number the common root first, then branches in input order, each with
$r_{\wvar{q}_i}$ first, stars in increasing size, and each center before its
leaves. Prefix sums of these sizes assign consecutive vertex indices,
after which every nonroot vertex independently emits its edge to its
parent. The parallel primitives in \cref{sec:com-prel} give the claimed
$\NC^1$ construction.
\end{proof}

\subsection{AM2CVP to GreedyAllocation}\label{sec:am2-to-ga}
We encode circuit evaluation as a \GreedyCover{} instance by proving the following lemma.

\begin{lemma}\label{lem:am2-to-ga}
Given an $m$-node instance $C$ of \AMtwoCVP{}, one can construct in $\NC^1$
an instance $\mathcal{G}(C)=\langle K;\wvar{q}_1,\ldots,\wvar{q}_d\rangle$ of
\GreedyCover{} that is feasible if and only if $c_m=1$. The construction
has $d=O(m)$ positive requirements of $O(m)$ bits each and $K=18m$.
\end{lemma}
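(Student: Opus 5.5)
The plan has two layers: an exact characterization of feasibility by a greedy procedure, and a block-by-block simulation of the circuit by that procedure.

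\textbf{Greedy characterization.} First I will show that an instance $\langle K;\wvar{q}_1,\ldots,\wvar{q}_d\rangle$ is feasible if and only if the following clocked process never rejects. At clocks $e=K-1,\ldots,0$, maintain residual requirements. Reject if some residual is at least $2^{e+1}$, or if two residuals are at least $2^e$. Otherwise give weight $2^e$ to the largest residual: subtract $2^e$ if the residual is at least $2^e$, and delete it if it is smaller. Soundness of rejection follows from $\sum_{i<e+1}2^i=2^{e+1}-1$. Optimality of the greedy choice follows from an exchange argument in the spirit of the DP discussion in the overview. A residual at least $2^e$ must take $2^e$, since otherwise the remaining weights cannot reach it. If all residuals are below $2^e$, the weight $2^e$ alone covers any of them, and covering the largest dominates: swap the weight sets of two requirements. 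This fixes the semantics I simulate.

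\textbf{Gadgets.} Next I will build gadgets that are analysed purely through this greedy run. Process the nodes $1,\ldots,m$ in the given topological order, and give node $v$ the clock block $[18(m-v)+1,\,18(m-v)+18]$ (shifted by $-1$). Define a \emph{signal} for a bit: true is a single residual $2^{e}$ at the top clock of the block, and false is the pair $\{2^{e-3}+1,\,2^{e-3}\}$, as in \cref{fig:overview-signal}.

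The \emph{selector} has $15$ clocks. It generalizes the three-requirement example: the residual present at the start of the block decides which of two pre-planted tails survives. Each tail is itself an encoded message residing at the clocks of a later block where a successor gate reads it. Because fan-out is at most two, the selector carries two message slots and sends the complemented bit.

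The \emph{AND gadget} has $3$ clocks, as described in the overview, with guards $2^{e-1}$, $2^{e-2}$ and $2^{e-3}+2^{e-4}$. With complemented messages, De Morgan turns each OR into an AND of complements, and vice versa. Because gates alternate, a single gadget type suffices, and inputs feed OR gates through constants planted directly.

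Each requirement is then a sum of its leading bits in one block plus a tail planted at some lower block. That gives $O(m)$ requirements with $O(m)$ bits each and $K=18m$. The final test requirement, placed at the last block, is uncoverable exactly when $c_m=0$.

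\textbf{Correctness and construction.} Correctness is an induction over blocks. The invariant is: at the top clock of block $v$, the residual multiset consists exactly of (i) the untouched planted requirements for blocks $\le v$, and (ii) the messages encoding the complemented values of $v$'s predecessors, placed in their slots. Each gadget lemma is a finite case check over the two input values, tracking which residuals are subtracted and which are covered. The $\NC^1$ construction is independent per node: write the bits of each requirement at offsets computed from node indices and wire endpoints, which only needs arithmetic and prefix sums.

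\textbf{Main obstacle.} The hard part will be the interaction of tails. A planted tail must stay strictly below the current clock and must never be the largest residual before its intended block. Otherwise greedy might cover it prematurely, or trigger a spurious rejection through two residuals at least $2^e$. I would handle this by keeping at most one live message residual per slot between blocks, and by checking that every non-message residual at clock $e$ is either an active gadget guard or smaller than every message. The precise offsets within the $18$ clocks are the part I expect to need the most care.
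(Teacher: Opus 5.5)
Your outline has the paper's architecture: greedy characterization, signals, a $15$-clock selector sending the complemented bit, the $3$-clock AND gadget with guards $2^{e-1},2^{e-2},2^{e-3}+2^{e-4}$, $18$-clock blocks, and a final test. But the selector, which is the component that actually carries the lemma, is never constructed, and the one concrete choice you commit to rules it out. In the one-clock-shift mechanism, every requirement that is subtracted rather than covered on $\sigH_e$ (the absorbers and the carriers of messages delivered on true) is covered one per clock on $\sigL_e$ before the false signal is touched. Its leading bit must therefore lie strictly between $2^e$ and the false signal's leading bit. A carrier below the false signal is subtracted on $\sigL_e$ and survives on the wrong branch. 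A carrier at the same level collides with the signal: either two residuals reach that level together and the procedure rejects, or the carrier is covered early. Your $\sigL=\{2^{e-3}+1,2^{e-3}\}$ thus leaves room only for the two levels $2^{e-1},2^{e-2}$, which is exactly the toy example. Your own AND gadget, however, needs messages of up to three residuals (the left $0$-message is a guard plus a two-residual false signal). With fan-out two, the surviving side needs six carriers and the other side four, not ``two slots''.

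This is why the paper lowers the false signal to $\{2^{e-9}+2^{e-10}+1,\,2^{e-9}+2^{e-10}\}$, beneath $T_1,T_2,U_1,\ldots,U_6$. It also makes $\mathcal M^R_e(0)$ carry exactly $\sigL_{e-3}$ as its tails $2^{e-12}+2^{e-13}+1$ and $2^{e-12}+2^{e-13}$. Finally, it balances the clock count on both branches. On $\sigH_e$ the $15$ clocks are one for the signal, eight subtractions, four covers of the $V_k$, and two for the $T$-tails $2^{e-14}$. On $\sigL_e$ they are eight covers, three to consume the signal, and four subtractions. Unused slots are kept as zero-tailed carriers so the count never varies. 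Your invariant also needs fixing: it should cover requirements planted for blocks $\ge v$ and pending messages for \emph{all} later nodes, not only $v$'s predecessors. And it is verified only through explicit cutoffs: everything reserved for a later node is at most $2^{e_i-18}$, below the selector cutoff $2^{e_i-17}$, and the selector requirements lie below the gate cutoff $2^{e_i-3}+2^{e_i-4}$.

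A second problem is the inputs. Planting input values ``directly'' while still giving every node an $18$-clock block (to get $K=18m$) leaves clocks with no designated requirement: either the whole input block or at least its three gate clocks. The greedy rule then spends each such clock on the largest live residual. In the middle of the run, that is a guard or message waiting for a later node, and the simulation breaks. The final ``allocate arbitrarily'' clause only helps once nothing positive remains. The paper fills each input block with padding $2^{e_i},2^{e_i-1},2^{e_i-2}$ followed by the signal $\sigH_{e_i-3}$ or $\sigL_{e_i-3}$, so inputs run an ordinary selector. Otherwise you must drop the input blocks and recompute $K$ and all offsets.
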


\paragraph{The greedy procedure.}
We first translate \GreedyCover{} into an equivalent run of a \emph{greedy procedure}, following~\cite{delaTorreGS95,OnakP06}. The requirements $\wvar{q}_i$ are fixed input integers. During the run, requirement $i$ has a \emph{residual} $\wvar{r}_i$, initially
$\wvar{q}_i$, recording its remaining requirement. A requirement is
\emph{covered} when its residual becomes zero and is then removed.
The \emph{clocks} $e = K-1,K-2,\ldots,0$ determine which residual is allocated the \emph{weight} $2^e$. \cref{fig:greedy} illustrates two runs of the greedy procedure, defined below.

\begin{definition}[Greedy procedure]\label{def:greedy} Given a positive integer $K$ and a list of requirements, the \emph{Greedy procedure} 
processes \emph{clocks} $e=K-1,\ldots,0$. At clock $e$,
\begin{itemize}[font=\normalfont,nosep]
\item \textup{\textsc{Reject}}. If some residual is at least $2^{e+1}$,
  or two residuals are at least $2^e$, reject.
\item \textup{\textsc{Subtract}}. Otherwise, if exactly one residual is
  at least $2^e$, subtract $2^e$ from it.
\item \textup{\textsc{Cover}}. Otherwise, if a positive residual remains,
  give $2^e$ to a requirement with the largest residual and cover it. Break ties arbitrarily.
\end{itemize}
If no positive residual remains, allocate the remaining weights arbitrarily.
\end{definition}

\begin{figure}[htbp]
\centering
\begin{tikzpicture}[
  lab/.style={font=\scriptsize},
  allocated/.style={circle,fill=black,draw=black,inner sep=1.6pt},
  covered/.style={circle,draw=black!45,draw opacity=.45,fill=white,fill opacity=1,line width=.6pt,inner sep=1.7pt},
  remaining/.style={circle,draw=visc,fill=white,line width=.8pt,inner sep=1.7pt},
]
\foreach \off/\title in {0/{(a) feasible},7.3/{(b) rejected}} {
  \node[font=\footnotesize] at (\off+3.5,.95) {\title};
  \foreach \col/\weight in {0/16,1/8,2/4,3/2,4/1}
    \node[lab] at (\off+2.4+.65*\col,.43) {$\weight$};
  \node[lab,anchor=east] at (\off+2.05,0) {$21=(\mathtt{10101})_2$};
  \node[lab,anchor=east] at (\off+2.05,-.55) {$6=(\mathtt{00110})_2$};
  \foreach \col in {0,2}
    \node[allocated] at (\off+2.4+.65*\col,0) {};
  \node[allocated] at (\off+3.05,-.55) {};
  \foreach \col in {2,3}
    \node[covered] at (\off+2.4+.65*\col,-.55) {};
  \node[allocated] at (\off+4.35,-1.10) {};
}
\node[lab,anchor=east] at (2.05,-1.10) {$2=(\mathtt{00010})_2$};
\node[allocated] at (5,0) {};
\node[lab,anchor=east] at (9.35,-1.10) {$3=(\mathtt{00011})_2$};
\node[remaining] at (12.3,0) {};
\node[remaining] at (12.3,-1.10) {};
\draw[visc!65,densely dashed,rounded corners=2pt]
  (12.08,.19) rectangle (12.52,-1.29);
\node[allocated] at (2.0,-1.85) {};
\node[lab,anchor=west] at (2.18,-1.85) {allocated weight};
\node[covered] at (6.0,-1.85) {};
\node[lab,anchor=west] at (6.18,-1.85) {bit removed by covering};
\node[remaining] at (10.65,-1.85) {};
\node[lab,anchor=west] at (10.83,-1.85) {remaining bit};
\end{tikzpicture}
\caption{Binary requirements and their allocations. In (b), two remaining
bits need the final weight $1$.}
\label{fig:greedy}
\end{figure}

\begin{lemma} \label{lem:greedy}
A \GreedyCover{} instance is feasible if and only if the corresponding greedy procedure never rejects.
\end{lemma}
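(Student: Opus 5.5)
We prove the two directions separately. The easy one is soundness: if the greedy procedure never rejects, its run is itself a feasible allocation. The substantial one is that rejection certifies infeasibility, which we prove by an exchange argument showing that each non-rejecting greedy step preserves feasibility.

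\emph{Soundness.} Each clock $e$ assigns $2^e$ to exactly one requirement, so the assigned sets are disjoint subsets of $[K]$. We maintain the invariant that $\wvar{q}_i$ minus the weight assigned to requirement $i$ is at most its residual $\wvar{r}_i$. \textsc{Subtract} keeps equality. \textsc{Cover} gives $2^e>\wvar{r}_i$, using that no residual is at least $2^e$. If the run ends without rejection, all residuals are zero after clock $0$, since otherwise clock $0$ would face a residual at least $2^0$. Leftover weights can go anywhere, so this is a feasible partition.

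\emph{Completeness, setup.} We prove the invariant that, at each clock $e$, the residual instance (residuals $\wvar{r}_i$, weights $2^{e},\dots,2^0$) is feasible whenever the original instance was.

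\emph{Feasibility implies no rejection.} Suppose the residual instance at clock $e$ is feasible. The remaining weights sum to $2^{e+1}-1$, so no residual is at least $2^{e+1}$. If two residuals were at least $2^e$, only one of them could receive $2^e$. The other would be served by weights summing to at most $2^e-1$, a contradiction.

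\emph{\textsc{Subtract} preserves feasibility.} If exactly one residual $\wvar{r}_i$ is at least $2^e$, every feasible allocation must give $2^e$ to $i$, for the same reason. Removing that weight leaves a feasible allocation for the instance with $\wvar{r}_i-2^e$.

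\emph{\textsc{Cover} preserves feasibility (the main obstacle).} Here all residuals are below $2^e$, greedy gives $2^e$ to a largest residual $\wvar{r}_i$, and requirement $i$ is removed. Take any feasible allocation $X_1,\dots,X_d$ of the residual instance and let $j$ own $2^e$.
\begin{itemize}[nosep]
\item If $j=i$, dropping $i$ and $2^e$ leaves the other parts untouched and feasible.
\item If $j\neq i$, swap roles: give $2^e$ to $i$ and give $X_i$ (all of whose elements are below $2^e$) to $j$. This works because $\wt(X_i)\ge\wvar{r}_i\ge\wvar{r}_j$.
\item If no one owns $2^e$, give $X_i$ to nobody and assign $2^e$ to $i$.
\end{itemize}
In every case we obtain a feasible allocation of the instance without $i$ using weights below $2^e$. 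Arbitrary tie-breaking is harmless, since the swap works for any largest residual.

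\emph{Conclusion.} By induction over clocks, a feasible instance never triggers \textsc{Reject}. Combined with soundness, this proves the equivalence.
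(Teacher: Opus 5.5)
Your proof is correct and uses the same argument as the paper: the bound $2^e-1$ on the remaining weight justifies \textsc{Reject}, the weight $2^e$ is forced in the \textsc{Subtract} case, and the same exchange (give $2^e$ to a largest residual and hand that requirement's old ranks to the displaced owner $j$) handles \textsc{Cover}. The only difference is organizational: you check ``never rejects $\Rightarrow$ feasible'' directly from the run via an invariant, whereas the paper carries both directions through one induction on $K$ (also, your clock-$0$ remark should say that all residuals there are below $2$, at most one equals $1$, and \textsc{Subtract} zeroes it).
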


\begin{proof}
We induct on the number $K\ge0$ of
remaining ranks. For $K=0$, feasibility
means that every residual is zero. For $K\ge1$, process the clock $e=K-1$. All weights below $2^e$ sum to $2^e-1$.

A residual of at least $2^{e+1}$ exceeds the total available weight, and two residuals of at
least $2^e$ would both need $2^e$, so rejection is correct.

Otherwise let $k$ have the largest residual $\wvar{r}_k$. If $\wvar{r}_k\ge2^e$, every feasible allocation gives
$2^e$ to $k$. Removing that weight leaves precisely the instance
with residual $\wvar{r}_k-2^e$.
If $\wvar{r}_k<2^e$, modify any feasible allocation to give
$2^e$ alone to $k$. If $2^e$ was unused or already belonged
to $k$, simply free the smaller ranks formerly assigned to $k$.
If it belonged to another requirement $j$, take it from $j$ in exchange for the ranks formerly
assigned to $k$. Since $\wvar{r}_k\ge\wvar{r}_j$, the exchanged ranks cover $\wvar{r}_j$, which preserves feasibility. In both cases, the remaining procedure never rejects if and only if the processed instance is a feasible \GreedyCover{} instance, as guaranteed by the induction hypothesis. Adding back $2^e$ and $k$ preserves the equivalence.
\end{proof}

Recall that an \AMtwoCVP{} instance consists of Boolean input values, nodes representing inputs, AND gates, or OR gates, and directed wires between nodes.

To represent Boolean values, for both inputs and intermediate evaluation results, we use clock-dependent lists of residuals called \emph{signals}. In place of a wire, we implement a \emph{selector}, which converts signals into \emph{messages}, multisets of requirements in a different format for use by subsequent gates. Finally, a gate gadget consumes messages and reproduces a signal corresponding to the evaluation of the node. As the clock decreases, the circuit finally leaves a signal after the last gate evaluation. It is tested against the last \emph{test requirement}, and the procedure rejects if the signal implies that the final evaluation of the circuit is $0$.

\paragraph{Signals.} For each clock $e\ge10$, the signals
\[
 \sigH_e=\{2^e\},\qquad
 \sigL_e=\{2^{e-9}+2^{e-10}+1,\,2^{e-9}+2^{e-10}\}
\]
represent $1$ and $0$, respectively.

\paragraph{Selectors.}
A \emph{selector} uses a signal to choose which residual list, called a \emph{message}, survives after its evaluation. It consists of twelve requirements: two $T$-requirements for its functionality, six
$U$-requirements each carrying a $\wvar{u}$-message element, and four
$V$-requirements each carrying a $\wvar{v}$-message element.
Fix $e\ge16$ and messages
$0\le\wvar{u}_j,\wvar{v}_k<2^{e-14}$ for $j\in[6]$, $k\in[4]$. Define
\begin{equation}\label{eq:selector}
\begin{aligned}
 T_1&=2^{e-1}+2^{e-14},
 &T_2&=2^{e-2}+2^{e-14},\\
 U_j&=2^{e-j-2}+\wvar{u}_j
 &&(j\in[6]),\\
 V_k&=2^{e-k-10}+\wvar{v}_k
 &&(k\in[4]).
\end{aligned}
\end{equation}
Depending on the given signal, a one-clock shift determines which message survives during clocks $e, \ldots, e-14$: $\sigH_e$ consumes clock $e$ so that the next eight clocks \emph{subtract} the leading powers of $T_1,T_2,U_1,\ldots,U_6$ (exposing the $\wvar{u}_j$'s while the $V_k$'s are covered), whereas $\sigL_e$ leaves clock $e$ free so that those eight clocks instead \emph{cover} $T_1,T_2,U_1,\ldots,U_6$ one step early (allowing the last four clocks to subtract the leading powers of $V_1,\ldots,V_4$ and expose the $\wvar{v}_k$'s).

\begin{samepage}
\begin{lemma}[Selector]\label{lem:selector}
At clock $e \ge 16$, suppose the residuals consist of the requirements in
\cref{eq:selector}, a signal $\sigH_e$ or $\sigL_e$, and other residuals
below $2^{e-14}$. Processing clocks $e,e-1,\ldots,e-14$ causes no
rejection and leaves all other residuals unchanged. Among the signal
and selector requirements, the remaining residuals are exactly the
positive entries of:
\begin{itemize}
    \item $\wvar{u}_1,\ldots,\wvar{u}_6$ on $\sigH_e$;
    \item $\wvar{v}_1,\ldots,\wvar{v}_4$ on $\sigL_e$.
\end{itemize}
\end{lemma}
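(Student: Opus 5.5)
The plan is to simulate the greedy procedure of \cref{def:greedy} clock by clock in each of the two cases, $\sigH_e$ and $\sigL_e$. At every clock I check three things: no residual is at least $2^{e'+1}$; at most one residual is at least $2^{e'}$; and whenever \textsc{Cover} fires, the largest residual is a selector or signal requirement and is \emph{strictly} larger than everything else. The last point ensures that the other residuals, which are below $2^{e-14}$, are never touched and that tie-breaking is irrelevant, except where noted. Throughout I use $\wvar{u}_j,\wvar{v}_k<2^{e-14}$.

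\textbf{Case $\sigH_e$.}
\begin{itemize}
\item At clock $e$ only the signal $2^e$ reaches $2^e$, since $T_1<2^e$. \textsc{Subtract} removes it entirely.
\item At clock $e-1$ only $T_1$ reaches $2^{e-1}$, so it drops to $2^{e-14}$. At clock $e-2$, likewise, $T_2$ drops to $2^{e-14}$.
\item For $j=1,\dots,6$, at clock $e-j-2$ exactly $U_j$ has residual at least $2^{e-j-2}$. All other selector residuals are smaller: the $V_k$ are at most $2^{e-11}+\wvar{v}_k$. So $U_j$ is reduced to $\wvar{u}_j$.
\item At clocks $e-9,\dots,e-12$ nothing reaches the clock weight. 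The largest residual is then $V_1,\dots,V_4$ in turn, because $V_k\ge 2^{e-k-10}>2^{e-14}$, which exceeds every remaining residual. Each $V_k$ is therefore covered.
\item At clock $e-13$ the two residuals equal to $2^{e-14}$, left over from $T_1$ and $T_2$, are strictly the largest. One of them is covered; the tie between these two is harmless. At clock $e-14$ the other one is the unique residual at least $2^{e-14}$ and is subtracted to $0$.
\end{itemize}
What remains from the signal and selector is exactly the positive $\wvar{u}_j$.

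\textbf{Case $\sigL_e$.}
\begin{itemize}
\item At clocks $e,e-1,\dots,e-7$ no residual reaches the clock weight. The strictly largest residuals are $T_1,T_2,U_1,\dots,U_6$ in order. Each signal element is below $2^{e-8}\le U_6$, and $V_1<2^{e-10}$. So \textsc{Cover} removes these eight requirements one clock early.
\item At clock $e-8$ the signal element $2^{e-9}+2^{e-10}+1$ is strictly largest and is covered.
\item At clock $e-9$ the other signal element $2^{e-9}+2^{e-10}$ is the unique residual at least $2^{e-9}$, so it drops to $2^{e-10}$. At clock $e-10$ it is again unique, since $V_1<2^{e-10}$, and drops to $0$.
\item For $k=1,\dots,4$, at clock $e-k-10$ exactly $V_k$ reaches $2^{e-k-10}$ and is reduced to $\wvar{v}_k$.
\end{itemize}
What remains is exactly the positive $\wvar{v}_k$.

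The main obstacle is bookkeeping rather than ideas. At each \textsc{Subtract} step I must confirm uniqueness, that is, that no other residual lies in $[2^{e'},2^{e'+1})$. At each \textsc{Cover} step I must confirm strict maximality over the untouched residuals below $2^{e-14}$ and over the already-exposed messages $\wvar{u}_j<2^{e-14}$. The delicate spots are clock $e-13$ in the $\sigH_e$ case, where the two $2^{e-14}$ residuals tie with each other but strictly beat everything else, and clocks $e-8$ to $e-10$ in the $\sigL_e$ case, where the signal's exponents $e-9,e-10$ must sit between $U_6$ and $V_1$. I would organise the proof as two short tables listing the residuals after each clock, with these inequalities cited per row.
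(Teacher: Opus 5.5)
\paragraph{Review.} Your approach is the same as the paper's: a direct clock-by-clock simulation of the greedy procedure. Your treatment of the $\sigL_e$ case is correct, and so is clocks $e,\ldots,e-11$ of the $\sigH_e$ case.

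\paragraph{The error at clock $e-12$.} The $\sigH_e$ case has a false step at clock $e-12$. You justify covering $V_1,\ldots,V_4$ in turn by ``$V_k\ge 2^{e-k-10}>2^{e-14}$.'' For $k=4$, however, $2^{e-k-10}=2^{e-14}$, so you only get $V_4\ge 2^{e-14}$. When $\wvar{v}_4=0$, the residual $V_4=2^{e-14}$ ties at clock $e-12$ with the two $T$-residuals, which also equal $2^{e-14}$, and \textsc{Cover} breaks ties arbitrarily. This has two consequences. Your invariant that \textsc{Cover} always hits a strictly largest residual fails at a clock you did not flag (you noted only the tie at $e-13$). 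Your description of clock $e-13$ (``the two residuals equal to $2^{e-14}$, left over from $T_1$ and $T_2$'') may also be wrong about which residuals remain. This case is not hypothetical: the construction pads the $\wvar{v}$-lists with zeros, so $\wvar{v}_4=0$ occurs whenever a node's outgoing wires supply fewer than four bit-$1$ message entries.

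\paragraph{The repair.} The fix is short, and it is exactly how the paper argues. After clock $e-11$, the signal and selector leave three residuals of value at least $2^{e-14}$: $V_4=2^{e-14}+\wvar{v}_4$ and two copies of $2^{e-14}$. Everything else, including the exposed $\wvar{u}_j$ and the other residuals, is strictly below $2^{e-14}$.
\begin{itemize}
\item At clocks $e-12$ and $e-13$ no residual reaches the clock weight, since all are below $2^{e-13}$. So \textsc{Cover} removes two of the three: $V_4$ first if $\wvar{v}_4>0$, in arbitrary order otherwise.
\item At clock $e-14$ the third is the unique residual at least $2^{e-14}$, and \textsc{Subtract} reduces it to $0$.
\end{itemize}
Under every tie-breaking all three disappear, nothing else is touched, and exactly the positive $\wvar{u}_j$ remain. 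Replace your separate claims for clocks $e-12$ and $e-13$ with this grouped argument, and the proof is complete.
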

\end{samepage}
\begin{proof}
On $\sigH_e$, the first clock consumes the signal and the next eight
subtract the leading powers of $T_1,T_2,U_1,\ldots,U_6$.
The next three clocks cover $V_1,V_2,V_3$, leaving
$V_4=2^{e-14}+\wvar{v}_4$ and two $T$-residuals equal to $2^{e-14}$.
The last three clocks cover $V_4$ (first if $\wvar{v}_4>0$, or tied with the $T$-residuals if $\wvar{v}_4=0$) and remove both $T$-residuals, leaving all $\wvar{u}_j$'s.

On $\sigL_e$, the first eight clocks instead \emph{cover}
$T_1,T_2,U_1,\ldots,U_6$ one step earlier. The following three weights consume the signal completely, and the last four weights subtract the $V$ leading powers, leaving all $\wvar{v}_k$'s.

The $\wvar{u}$- and $\wvar{v}$-messages, along with all other residuals,
lie below $2^{e-14}$, whereas every
residual that is subtracted from or covered is at least $2^{e-14}$. Thus neither rejection nor unexpected subtraction from other residuals occurs. See \cref{fig:selector} for the visualized procedure.
\end{proof}

\begin{figure}[htbp]
\centering
\begin{tikzpicture}[
  selallocated/.style={circle,draw=black,fill=black,inner sep=0pt,
    minimum size=3.2pt,line width=0.2pt},
  seldeleted/.style={circle,draw=black!45,draw opacity=.45,
    fill=white,fill opacity=1,inner sep=0pt,minimum size=3.4pt,
    line width=.5pt},
  selrow/.style={font=\scriptsize,anchor=east,inner sep=0pt},
  selhead/.style={font=\scriptsize,inner sep=0pt},
  seltail/.style={draw=delivercol!75,fill=delivercol!9,
    text=delivercol,rounded corners=.5pt,font=\scriptsize,
    inner xsep=2pt,inner ysep=.4pt,minimum width=1.50cm},
]
\newcommand{\selbit}[3]{%
  \node[#3] at ({1+.32*(#1)},{-.32*(#2)}) {};}
\newcommand{\selunit}[1]{%
  \node[seldeleted] at (7.05,{-.32*(#1)}) {};}
\newcommand{\seltailbox}[3]{%
  \node[seltail,opacity=#3] at (6.45,{-.32*(#1)}) {$#2$};}
\newcommand{\selaxes}[1]{%
  \node[font=\small] at (3.65,1.00) {#1};
  \node[selhead,anchor=east] at (.72,.48) {bit};
  \foreach \t/\label in {0/e,5/{e-5},10/{e-10}} {
    \node[selhead] at ({1+.32*\t},.48) {$\label$};
  }
  \node[selhead] at (6.15,.48) {$<e-14$};
  \node[selhead] at (7.05,.48) {$0$};
  \foreach \t in {0,...,14} {
    \draw[black!20,line width=.3pt]
      ({1+.32*\t},.23) -- ({1+.32*\t},.31);
  }
  \draw[black!20,line width=.3pt] (7.05,.23) -- (7.05,.31);
  \foreach \x in {2.44,4.04,5.66} {
    \draw[black!22,line width=.35pt] (\x,.16) -- (\x,-4.33);
  }
  \foreach \y in {-.48,-1.12,-3.04} {
    \draw[black!15,line width=.3pt] (.84,\y) -- (7.25,\y);
  }
  \foreach \j in {1,2} {
    \node[selrow] at (.72,{-.32*(\j+1)}) {$T_{\j}$};
  }
  \foreach \j in {1,...,6} {
    \node[selrow] at (.72,{-.32*(\j+3)}) {$U_{\j}$};
  }
  \foreach \j in {1,...,4} {
    \node[selrow] at (.72,{-.32*(\j+9)}) {$V_{\j}$};
  }}

\begin{scope}
\selaxes{(a) $\sigH_e$}
\node[selrow,font=\tiny] at (.72,0) {signal};
\selbit{0}{0}{selallocated}
\selbit{1}{2}{selallocated}
\selbit{13}{2}{selallocated}
\selbit{14}{2}{seldeleted}
\selbit{2}{3}{selallocated}
\selbit{14}{3}{selallocated}
\foreach \j in {1,...,6} {
  \selbit{\j+2}{\j+3}{selallocated}
  \seltailbox{\j+3}{\wvar{u}_{\j}}{1}
}
\foreach \j in {1,...,4} {
  \selbit{\j+10}{\j+9}{seldeleted}
  \selbit{\j+8}{\j+9}{selallocated}
  \seltailbox{\j+9}{\wvar{v}_{\j}}{.25}
}
\end{scope}

\begin{scope}[xshift=7.75cm]
\selaxes{(b) $\sigL_e$}
\node[selrow,font=\tiny] at (.72,0) {signal 1};
\node[selrow,font=\tiny] at (.72,-.32) {signal 2};
\selbit{9}{0}{seldeleted}
\selbit{10}{0}{seldeleted}
\selbit{8}{0}{selallocated}
\selunit{0}
\selbit{9}{1}{selallocated}
\selbit{10}{1}{selallocated}
\selbit{1}{2}{seldeleted}
\selbit{14}{2}{seldeleted}
\selbit{0}{2}{selallocated}
\selbit{2}{3}{seldeleted}
\selbit{14}{3}{seldeleted}
\selbit{1}{3}{selallocated}
\foreach \j in {1,...,6} {
  \selbit{\j+2}{\j+3}{seldeleted}
  \selbit{\j+1}{\j+3}{selallocated}
  \seltailbox{\j+3}{\wvar{u}_{\j}}{.25}
}
\foreach \j in {1,...,4} {
  \selbit{\j+10}{\j+9}{selallocated}
  \seltailbox{\j+9}{\wvar{v}_{\j}}{1}
}
\end{scope}
\end{tikzpicture}
\caption{Selector allocation; marks are as in \cref{fig:greedy}.
Opaque green boxes show selected tails; faded tails are covered.}
\label{fig:selector}
\end{figure}

\paragraph{Messages.}
Now we define the messages coming from the selectors. For $e\ge13$ and $b\in\{0,1\}$, the left and right input messages at
clock $e$ are
\begin{equation}\label{eq:ports}
\begin{aligned}
 \mathcal M_e^L(b)
 &=\begin{cases}
   \{2^e\},&b=1,\\
   \{2^{e-3}+2^{e-4}\}\cup\sigL_{e-3},&b=0,
  \end{cases}\\[2pt]
 \mathcal M_e^R(b)
 &=\begin{cases}
   \{2^{e-1}+2^{e-3},\,2^{e-2}\},&b=1,\\
   \{2^{e-1}+2^{e-12}+2^{e-13}+1,\,
      2^{e-2}+2^{e-12}+2^{e-13}\},&b=0.
  \end{cases}
\end{aligned}
\end{equation}
The union in \eqref{eq:ports} is a multiset union. Note that $\mathcal{M}_{\ast}^{\ast}(0)$ has at most three requirements, and $\mathcal{M}_{\ast}^{\ast}(1)$ has at most two. We place the bit-$0$ entries in the $\wvar{u}$-slots and the bit-$1$ entries in the $\wvar{v}$-slots, so that the selector of a node whose signal encodes $z$ delivers messages for $1-z$. Since a node has at most two outgoing wires, there are enough message slots to carry them all. Complementing the transmitted bit, although a bit tricky, avoids simulating both AND and OR gates. For each node $i$, let $c_i$ denote its actual evaluation in the circuit. Define:
\[
 z_i:=\begin{cases}
 c_i,&i\text{ is an input or an AND gate},\\
 1-c_i,&i\text{ is an OR gate}.
 \end{cases}
\]
At an AND gate $i$, its OR predecessors $j,k$ satisfy
$1-z_j=c_j$ and $1-z_k=c_k$. At an OR gate, its input or AND
predecessors satisfy $z_j=c_j$ and $z_k=c_k$. Using De Morgan's law, every gate satisfies
\[
 z_i=(1-z_j)\wedge(1-z_k).
\]
Therefore, we evaluate $z_i$ instead, simulating only AND gates and using selectors to carry flipped bits. The messages are therefore designed for AND evaluation.

\paragraph{The AND gate.}
Unlike the selector, the AND gate consumes only two input messages and produces the result after three clocks.

\begin{samepage}
\begin{lemma}[AND gate]\label{lem:gate}
Let $e\ge13$ and $x,y\in\{0,1\}$. At clock $e$, suppose the residuals consist of
$\mathcal M_e^L(x)$, $\mathcal M_e^R(y)$, and other residuals below
$2^{e-3}+2^{e-4}$. Processing clocks $e, e-1, e-2$ causes no rejection
and leaves all other residuals unchanged. Among the input messages,
the remaining residuals at clock $e-3$ are exactly the signal $\sigH_{e-3}$ if $x=y=1$, or the signal $\sigL_{e-3}$ otherwise.
\end{lemma}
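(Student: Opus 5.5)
The plan is a direct trace of the greedy procedure of \cref{def:greedy} through clocks $e,e-1,e-2$, in two cases according to $x$ and with sub-cases for $y$. At each clock I will check three things: that no residual is at least twice the current weight, that at most one residual is at least the current weight, and which residual the procedure acts on. Every allocation will fall on a message residual, so the other residuals are never touched. This holds because those residuals lie below $2^{e-3}+2^{e-4}<2^{e-2}$, and at every clock the chosen message residual is either at least the current weight or strictly larger than every other residual. The hypothesis $e\ge 13$ ensures that $\sigL_{e-3}$ is defined, since $e-3\ge 10$. Both $\sigL_{e-3}$ entries are below $2^{e-11}$, so they are also far below every residual that is acted on.

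\textbf{Case $x=1$.} Here $\mathcal M_e^L(1)=\{2^e\}$, and this is the unique residual at least $2^e$: the $\mathcal M^R_e(y)$ entries are below $2^{e-1}+2^{e-2}<2^e$, and nothing is at least $2^{e+1}$. So clock $e$ subtracts, which covers it. At clock $e-1$, the first right entry $2^{e-1}+\cdots$ is the unique residual at least $2^{e-1}$, since the second right entry is below $2^{e-1}$. It is subtracted, leaving $2^{e-3}$ if $y=1$ and $2^{e-12}+2^{e-13}+1$ if $y=0$. At clock $e-2$, the second right entry is the unique residual at least $2^{e-2}$ and is subtracted. This leaves $0$ if $y=1$, so that entry is covered, and $2^{e-12}+2^{e-13}$ if $y=0$. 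I must check here that the leftover $2^{e-3}$ from the previous clock is below $2^{e-2}$, which it is. The surviving residuals are then $\{2^{e-3}\}=\sigH_{e-3}$ when $y=1$. When $y=0$ they are $\{2^{(e-3)-9}+2^{(e-3)-10}+1,\;2^{(e-3)-9}+2^{(e-3)-10}\}=\sigL_{e-3}$.

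\textbf{Case $x=0$.} Here the left message is the guard $g=2^{e-3}+2^{e-4}$ together with $\sigL_{e-3}$. At clock $e$ no residual reaches $2^e$, so the procedure covers the strictly largest residual, namely the first right entry, which is at least $2^{e-1}$. At clock $e-1$ nothing reaches $2^{e-1}$, since the second right entry is below $2^{e-1}$. So it covers the largest residual, the second right entry, which is at least $2^{e-2}>g$. At clock $e-2$ nothing reaches $2^{e-2}$, because $g<2^{e-2}$. The procedure covers the largest residual, which is $g$, because all other residuals are strictly below $g$ by hypothesis. What remains of the messages is exactly $\sigL_{e-3}$, for either value of $y$.

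The main point of care is the strictness of the comparisons, so that tie-breaking never selects a non-message residual and no rejection is triggered. This is where the hypothesis ``other residuals below $2^{e-3}+2^{e-4}$'' is used. It matters most at clock $e-2$ in the case $x=0$, which is the only place where the guard has to be the strict maximum. I would double-check each of these inequalities explicitly rather than anything else.
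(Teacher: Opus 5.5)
Your proposal is correct and follows the same route as the paper: a direct trace of the greedy procedure through clocks $e,e-1,e-2$, split by $x$. You make explicit the strict comparisons that the paper's short proof leaves implicit. One cosmetic slip: when $e=13$ the larger $\sigL_{e-3}$ entry $2^{e-12}+2^{e-13}+1$ equals $2^{e-11}$, so you should say ``at most $2^{e-11}$''; this is harmless, since the argument only uses that these entries lie below the guard $2^{e-3}+2^{e-4}$.
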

\end{samepage}
\begin{proof}
If $x=1$, the first weight consumes the left input. The next two subtract
$2^{e-1}$ and $2^{e-2}$ from the right input, leaving
$\sigH_{e-3}$ if $y=1$, or $\sigL_{e-3}$ if $y=0$.
If $x=0$, the three weights cover the two right residuals and
$2^{e-3}+2^{e-4}$, leaving $\sigL_{e-3}$.
The two right residuals have leading powers $2^{e-1}$ and $2^{e-2}$,
both above $2^{e-3}+2^{e-4}$, while their tails and all other residuals lie below this bound, so no other residual is subtracted or covered. See \cref{fig:gate-bits} for visualized executions.
\end{proof}

\begin{figure}[htbp]
\centering
\begin{tikzpicture}[
  gbhead/.style={font=\scriptsize,inner sep=0pt},
  gbrowlab/.style={font=\scriptsize,anchor=east,inner sep=0pt},
  gballoc/.style={circle,draw=black,fill=black,inner sep=0pt,minimum size=3.2pt},
  gbremoved/.style={circle,draw=black!45,draw opacity=.45,fill=white,
    fill opacity=1,line width=.65pt,inner sep=0pt,minimum size=3.4pt},
  gbremain/.style={circle,draw=visc,fill=white,fill opacity=1,
    line width=.85pt,inner sep=0pt,minimum size=3.4pt},
]
\newcommand{\gbpanel}[1]{%
  \node[font=\footnotesize,anchor=west] at (0.05,0.96) {#1};
  \fill[black!3] (0.76,0.16) rectangle (2.90,-1.87);
  \draw[black!40,line width=.5pt] (2.95,0.62) -- (2.95,-1.87);
  \node[gbhead,anchor=east,text=black!60] at (.60,.43) {bit};
  \foreach \x/\bit in {1.05/e,1.80/{e-1},2.55/{e-2},3.35/{e-3},4.10/{e-4},5.35/{e-12},6.25/{e-13},7.35/0} {
    \node[gbhead] at (\x,.43) {$\bit$};}
  \foreach \x in {4.70,6.85} {
    \node[gbhead,text=black!45] at (\x,.43) {$\cdots$};}
}
\newcommand{\gbrow}[2]{%
  \node[gbrowlab] at (.60,{-.42*(#1)}) {$#2$};
}
\newcommand{\gbempty}[2]{%
  \node[gbrowlab,text=black!20] at (.60,{-.42*(#1)}) {$#2$};
}
\newcommand{\gbmark}[3]{%
  \node[#3] at (#1,{-.42*(#2)}) {};
}

\begin{scope}
\gbpanel{$(x,y)=(1,1):\quad\sigH_{e-3}$}
\gbrow{0}{\ell_1}\gbempty{1}{\ell_2}\gbempty{2}{\ell_3}
\gbrow{3}{r_1}\gbrow{4}{r_2}
\gbmark{1.05}{0}{gballoc}
\gbmark{1.80}{3}{gballoc}\gbmark{2.55}{4}{gballoc}
\gbmark{3.35}{3}{gbremain}
\end{scope}

\begin{scope}[xshift=8.1cm]
\gbpanel{$(x,y)=(1,0):\quad\sigL_{e-3}$}
\gbrow{0}{\ell_1}\gbempty{1}{\ell_2}\gbempty{2}{\ell_3}
\gbrow{3}{r_1}\gbrow{4}{r_2}
\gbmark{1.05}{0}{gballoc}
\gbmark{1.80}{3}{gballoc}\gbmark{2.55}{4}{gballoc}
\foreach \x in {5.35,6.25,7.35} {\gbmark{\x}{3}{gbremain}}
\foreach \x in {5.35,6.25} {\gbmark{\x}{4}{gbremain}}
\end{scope}

\begin{scope}[yshift=-3.2cm]
\gbpanel{$(x,y)=(0,1):\quad\sigL_{e-3}$}
\gbrow{0}{\ell_1}\gbrow{1}{\ell_2}\gbrow{2}{\ell_3}
\gbrow{3}{r_1}\gbrow{4}{r_2}
\gbmark{2.55}{0}{gballoc}
\foreach \x in {3.35,4.10} {\gbmark{\x}{0}{gbremoved}}
\foreach \x in {5.35,6.25,7.35} {\gbmark{\x}{1}{gbremain}}
\foreach \x in {5.35,6.25} {\gbmark{\x}{2}{gbremain}}
\gbmark{1.05}{3}{gballoc}
\foreach \x in {1.80,3.35} {\gbmark{\x}{3}{gbremoved}}
\gbmark{1.80}{4}{gballoc}
\gbmark{2.55}{4}{gbremoved}
\end{scope}

\begin{scope}[shift={(8.1,-3.2)}]
\gbpanel{$(x,y)=(0,0):\quad\sigL_{e-3}$}
\gbrow{0}{\ell_1}\gbrow{1}{\ell_2}\gbrow{2}{\ell_3}
\gbrow{3}{r_1}\gbrow{4}{r_2}
\gbmark{2.55}{0}{gballoc}
\foreach \x in {3.35,4.10} {\gbmark{\x}{0}{gbremoved}}
\foreach \x in {5.35,6.25,7.35} {\gbmark{\x}{1}{gbremain}}
\foreach \x in {5.35,6.25} {\gbmark{\x}{2}{gbremain}}
\gbmark{1.05}{3}{gballoc}
\foreach \x in {1.80,5.35,6.25,7.35} {\gbmark{\x}{3}{gbremoved}}
\gbmark{1.80}{4}{gballoc}
\foreach \x in {2.55,5.35,6.25} {\gbmark{\x}{4}{gbremoved}}
\end{scope}
\end{tikzpicture}
\caption{Gate inputs after the three shaded weights. Columns are bit
exponents; $\ell_i$ and $r_i$ index the left and right residuals.
Marks are as in \cref{fig:greedy}.}
\label{fig:gate-bits}
\end{figure}

\paragraph{Putting everything together.}\label{sec:map}
Let $C$ be an $m$-node instance of \AMtwoCVP{}, with its input
assignment. Set
\begin{equation}\label{eq:params}
 e_i=18(m-i+1)-1,\qquad K=18m.
\end{equation}
For each nonoutput node $i<m$, list its outgoing wires by destination,
with the left slot before the right. On a wire to node $j$, use
$\mathcal M_{e_j}^L$ or $\mathcal M_{e_j}^R$ according to its input slot:
append the bit-$0$ entries to the $\wvar{u}$-list and the bit-$1$ entries
to the $\wvar{v}$-list, counting parallel wires separately.
Pad the lists with zeroes to lengths six and four, and insert the
selector requirements \eqref{eq:selector} at clock $e_i-3$.

For each input node $i$, also insert the three padding requirements
$2^{e_i},2^{e_i-1},2^{e_i-2}$ and the signal $\sigH_{e_i-3}$ if
its assigned bit is $1$, or $\sigL_{e_i-3}$ if it is $0$.

For the output node $m$, insert the test requirement $2^{e_m-3}=2^{14}$
instead of a selector. These requirements, with multiplicities, define
$\mathcal G(C)=\langle K;\wvar{q}_1,\ldots,\wvar{q}_d\rangle$.

\begin{proof}[Proof of \cref{lem:am2-to-ga}]

The blocks start at $e_1=K-1$ and satisfy $e_{i+1}=e_i-18$.
At node $i$, three clocks process its gate or input padding, and the
next fifteen process its selector if $i<m$.
The selector requirements lie below $2^{e_i-3}$, and every residual
reserved for a later node is at most $2^{e_i-18}$, below the selector
cutoff $2^{e_i-17}$. The test also lies below the output gate's cutoff.
Thus \cref{lem:gate,lem:selector} apply in node order: each node leaves
the signal for $z_i$, then its selector delivers the messages for
$1-z_i$ on its outgoing wires. Zero message entries disappear, so only
the output signal and test remain, with no earlier rejection.
If $z_m=1$, their two requirements of $2^{e_m-3}$ cause rejection.
Otherwise, clocks $e_m-3,e_m-4,e_m-5$ clear the test and low signal.
By \cref{lem:greedy}, feasibility is equivalent to $z_m=0$, hence to
$c_m=1$ because the output is an OR gate.

Each node contributes constantly many requirements, each at most
$2^{K-1}$ and hence of at most $K=18m$ bits. Parallel comparisons
identify the outgoing wires. Every requirement has constantly many
$1$-bits whose positions are determined by node indices and input bits;
prefix sums of their encoded lengths assign output positions.
The primitives in \cref{sec:com-prel} therefore give an $\NC^1$
construction. The source restrictions can also be checked in $\NC^1$.
\end{proof}

\begin{proof}[Proof of \cref{thm:pcreduction}]
Composing \cref{lem:am2-to-ga,lem:gc2ter} gives an $\NC^1$ reduction
to trees of diameter at most six. The tree has $O(m^3)$ vertices,
since there are $O(m)$ requirements of $O(m)$ bits each.
\end{proof}

\subsection{Constant additive gaps}\label{sec:approx-inapprox}
We showed that it is $\Pc$-hard to distinguish $\er(T)\le K$ from $\er(T)\ge K+1$. We now prove that tree edge ranking remains $\Pc$-hard for any constant additive gap $r$, i.e., unless $\NC=\Pc$, no $\NC$ algorithm approximates $\er(T)$ within a constant additive error. Namely, we prove the following extension of \Cref{thm:pcreduction}.

\begin{theorem}[Constant additive gap]\label{thm:approx-gap}
    Fix an integer $r\ge1$. Given a tree $T$ and an integer $K\ge0$,
    promised that either
    \[
     \er(T)\le K \qquad\text{or}\qquad \er(T)\ge K+r,
    \]
    deciding which inequality holds is $\Pc$-complete under
    $\NC^1$ many-one reductions. The hardness remains for trees of diameter at most $6r$.
\end{theorem}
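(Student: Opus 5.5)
The plan follows the rooted-product amplification sketched in the overview. Membership in $\Pc$ is immediate, since $\er(T)$ is computable in linear time~\cite{LamY01}. For hardness, start from the instance $(T,K)$ produced by \cref{thm:pcreduction}. Root $T$ at the common root $r$ of the gadgets, so every vertex is within distance three of $r$ and $T$ has height at most four. Output the pair $(T^{\odot r}, rK)$, where $T^{\odot r}$ is the $r$-fold rooted product (each copy of $T$ rooted at its own $r$). The key identity to establish is
\[
\er(T\odot S)=\er(T)+\er(S)
\]
for rooted trees $T,S$. Given it, induction gives $\er(T^{\odot r})=r\,\er(T)$. Then $\er(T)\le K$ gives $\er(T^{\odot r})\le rK$, while $\er(T)\ge K+1$ gives $\er(T^{\odot r})\ge rK+r$. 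Thus the promise problem with threshold $rK$ and gap $r$ decides the original instance. Every vertex of $T^{\odot r}$ lies within distance $3r$ of the root, so the diameter is at most $6r$. The vertex count is $|V(T)|^r$, polynomial for fixed $r$. Vertices can be indexed by $r$-tuples of vertices of $T$, with each edge determined coordinatewise, so the construction is in $\NC^1$.

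The upper bound $\er(T\odot S)\le\er(T)+\er(S)$ is direct. Let $h=\er(S)$. Rank every copy of $S$ optimally with ranks in $[h]$, and rank the $T$-edges by an optimal ranking of $T$ shifted up by $h$. Validity holds as follows. Two equal ranks inside distinct $S$-copies are separated by a $T$-edge of rank above $h$, since the path between them leaves one copy through its root and so uses a $T$-edge. Two equal ranks on $T$-edges are handled by validity of the ranking of $T$.

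For the lower bound I will prove the stronger statement
\[
\wvar{w}(T\odot S)=2^h\,\wvar{w}(T)+\wvar{w}(S)
\]
by induction on $|T|$. I use \eqref{eq:approx-merge} together with \cref{lem:dom}(b), in the form: a branch has critical weight equal to its child's weight plus one, as forced at the root. This is the same mechanism by which the extra root edge acts as "$+1$" in the proof of \cref{lem:gadget}.

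Let $T$ have root children $v_1,\ldots,v_d$. The root of $T\odot S$ then has the branches of the root copy of $S$, together with $d$ branches. Each of these $d$ branches is an edge over $T_{v_i}\odot S$, whose weight by induction is $2^h\wvar{w}(T_{v_i})+\wvar{w}(S)$. The requirement of branch $i$ is therefore $2^h(\wvar{w}(B_i)-1)+\wvar{w}(S)+1$, where $B_i$ is the $i$-th branch of $T$.

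For "$\le$", take an optimal allocation for $T$, shift it above $h$, and let the root copy of $S$ use its critical set inside $[h]$.

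For "$\ge$", which I expect to be the main obstacle, take any allocation $X_0,X_1,\ldots,X_d$, where $X_0$ serves the root copy of $S$. The ranks above $h$ contribute multiples of $2^h$, and $[h]$ contributes in total at most $2^h-1$. Hence if $X_i$ receives less than $2^h\wvar{w}(B_i)$ from ranks above $h$, it receives exactly $2^h(\wvar{w}(B_i)-1)$ from them, and at least $\wvar{w}(S)+1$ from $[h]$. Since $\wvar{w}(S)\ge 2^{h-1}$, such an $X_i$ and the part of $X_0$ inside $[h]$ cannot both fit there. So either no branch is short, or $X_0$ contains a rank above $h$.

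I will handle this with an exchange, possibly repeated. Move the high rank of $X_0$ to a short branch $i$. This makes branch $i$ no longer short, and frees its low ranks, which have weight more than $\wvar{w}(S)$, for $S$. Each exchange weakly decreases total weight, or at least preserves it. Once no branch is short, the high parts of the allocation form an allocation for $T$ of weight at least $2^h\wvar{w}(T)$, and the low parts have weight at least $\wvar{w}(S)$.

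The delicate points are two. First, there may be several short branches. I plan to argue that at most one short branch can coexist with $X_0$, since two short branches would need more than $2\cdot 2^{h-1}$ from $[h]$. Second, I must check that the exchange never increases weight. Finally, since $\wvar{w}(S)<2^h$, the identity gives $\max\crit(T\odot S)=\max\crit(T)+h$, and \cref{prop:opt} then yields the additive formula for $\er$.
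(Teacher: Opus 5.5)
Your proposal is correct and is essentially the paper's proof. The paper uses the same $r$-fold rooted product and the same strengthened identity $\wvar{w}(T\odot S)=2^h\wvar{w}(T)+\wvar{w}(S)$, proved by induction by splitting each branch's allocated weight into multiples of $2^h$ plus a part from $[h]$. As in your plan, it shows that at most one branch can be short (two would need at least $2\wvar{w}(S)+2>2^h-1$ from $[h]$), so a single reassignment of a high rank from the root copy of $S$ suffices. After that reassignment only the numeric bound $\wt(X_0')>\wvar{w}(S)$ is needed, not realizability of $X_0'$ for $S$, and this is exactly how the paper closes the argument via $L_i>\wvar{w}(S)$.
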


Our approach is to amplify the gap with a rooted product of trees under which edge ranking numbers add up: the $r$-fold product of a hard instance turns a gap of one into a gap of $r$, while the size grows to the $r$-th power and the maximum distance from the root grows by a factor of $r$.

For rooted trees $T$ and $S$,
their \emph{rooted product} $T\odot S$ attaches a separate copy of $S$
to each vertex of $T$ by identifying that vertex with the copy's root.
The product is rooted at the root of $T$; the copy of $S$ attached
there is the \emph{root copy}.

\begin{lemma}\label{lem:approx-product}
For any rooted trees $T,S$, with $h=\er(S)$,
\begin{equation}\label{eq:approx-product}
 \wvar{w}(T\odot S)=2^h\wvar{w}(T)+\wvar{w}(S),
 \qquad
 \er(T\odot S)=\er(T)+\er(S).
\end{equation}
\end{lemma}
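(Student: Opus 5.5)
We argue by induction on the height of $T$, using the merge identity \eqref{eq:approx-merge} and the fact that a branch consisting of an edge $vv_i$ above a rooted tree $T'$ has critical weight $\wvar{w}(T')+1$. This fact follows from \cref{lem:dom}(b) together with \cref{prop:branch-requirements}: giving $vv_i$ the smallest rank absent from $\crit(T')$ adds exactly one. The second identity follows from the first. Indeed, $\wvar{w}(S)\le 2^h-1$ because $\max\crit(S)=h$, so the highest nonzero bit of $2^h\wvar{w}(T)+\wvar{w}(S)$ is $h$ plus the highest nonzero bit of $\wvar{w}(T)$ when $T$ is nontrivial, and is that of $\wvar{w}(S)$ otherwise. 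By \cref{prop:opt} this gives $\er(T\odot S)=\er(T)+h$. The case $h=0$, where $S$ is a single vertex, is trivial. So we assume $h\ge1$, which gives $2^{h-1}\le\wvar{w}(S)\le 2^h-1$.

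For the base case, $T$ is a single vertex, so $T\odot S=S$ and $\wvar{w}(T)=0$. For the inductive step, let $T$ have root $v$ with children $v_1,\ldots,v_d$ and branches $B_i$. At the root of $T\odot S$ there are two kinds of branches. The first kind are the branches of the root copy of $S$; their requirements $\wvar{s}_1,\ldots,\wvar{s}_p$ satisfy $\FGA(\wvar{s}_1,\ldots,\wvar{s}_p)=\wvar{w}(S)$. The second kind are the branches formed by the edge $vv_i$ above $T_{v_i}\odot S$. By induction, the $i$-th of these has requirement
\[
\wvar{q}_i' := 2^h\wvar{w}(T_{v_i})+\wvar{w}(S)+1 = 2^h(\wvar{w}(B_i)-1)+\wvar{w}(S)+1 .
\]
By \eqref{eq:approx-merge}, $\wvar{w}(T\odot S)=\FGA(\wvar{s}_1,\ldots,\wvar{s}_p,\wvar{q}_1',\ldots,\wvar{q}_d')$, and it remains to show that this equals $2^h\FGA(\wvar{w}(B_1),\ldots,\wvar{w}(B_d))+\wvar{w}(S)$.

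For the upper bound, take an optimal allocation for $\wvar{w}(B_1),\ldots,\wvar{w}(B_d)$ and shift every rank up by $h$. Branch $i$ then receives at least $2^h\wvar{w}(B_i)\ge\wvar{q}_i'$, since $2^h\ge\wvar{w}(S)+1$. The branches of $S$ receive a critical allocation inside $[h]$.

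For the lower bound, take any feasible allocation and split each part into its high ranks ($>h$) and its low ranks ($\le h$). High weights are multiples of $2^h$, and the low ranks total $2^h-1$. Hence each branch $i$ either has high weight at least $2^h\wvar{w}(B_i)$, or has high weight exactly $2^h(\wvar{w}(B_i)-1)$ and low weight at least $\wvar{w}(S)+1$; call the latter branches deficient. If no branch is deficient, the high parts shifted down by $h$ form an allocation for $T$, so their total weight is at least $2^h\wvar{w}(T)$. The low parts given to the branches of $S$ realize a visible set for $S$ and so weigh at least $\wvar{w}(S)$. This gives the bound.

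The main obstacle is the deficient case. A deficient branch has low weight at least $\wvar{w}(S)+1>2^{h-1}$, so it must hold rank $h$; hence at most one branch $i$ is deficient. The low ranks it leaves free sum to less than $2^{h-1}\le\wvar{w}(S)$. Therefore the branches of $S$ cannot be served by low ranks alone, and some $S$-branch holds a rank above $h$. Transfer all high ranks of the $S$-branches to branch $i$. Branch $i$ then has high weight at least $2^h\wvar{w}(B_i)$, so the high parts again form a shifted allocation for $T$, of total at least $2^h\wvar{w}(T)$. The total weight of the original allocation is at least this high total plus the low weight of branch $i$, which exceeds $\wvar{w}(S)$. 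This completes the induction.
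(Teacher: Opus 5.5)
Your proof is correct and is essentially the paper's argument: you use the same split of each branch's allocation into ranks above and below $h$, the same observation that at most one branch can be deficient, and the same transfer of high ranks from the root copy of $S$ to that branch (you move all of them, while the paper moves a single one). One wording fix is needed: in the non-deficient case it is the whole allocation to the root copy's branches, not just its low part, that weighs at least $\wvar{w}(S)$, and this suffices because only the $T$-branches' high parts enter the $2^h\wvar{w}(T)$ bound.
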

\begin{proof} It suffices to show the first identity (since $\wvar{w}(S)<2^h$).
The claim is immediate if either tree is a singleton. Otherwise rank
each copy of $S$ critically
and increase the ranks of a critical ranking of $T$ by $h$.
The original edges separate the copies, giving a valid ranking of
weight $2^{h}\wvar{w}(T)+\wvar{w}(S)$.

For the lower bound, induct on $|V(T)|$. Let $B_1,\ldots,B_d$ be the
branches of $T$, and let $T_i$ be the subtree below the root edge of
$B_i$, so $\wvar{w}(T_i)=\wvar{w}(B_i)-1$.
In $T\odot S$, the corresponding branch joins $T_i\odot S$ to the
root. By induction, its requirement is
\[
 \wvar{w}(T_i\odot S)+1
 =2^h(\wvar{w}(B_i)-1)+\wvar{w}(S)+1.
\]

Fix a feasible allocation at the root of $T\odot S$, and let $x_i$
be the weight assigned to the branch through $T_i\odot S$.
The other allocated bits serve the branches of the root copy of $S$,
so their total weight is at least $\wvar{w}(S)$.
Write $x_i=2^h\wvar{H}_i+\wvar{L}_i$, where
$0\le\wvar{L}_i<2^h$.

If every $\wvar{H}_i\ge\wvar{w}(B_i)$, the high parts form a feasible
allocation for $T$, so $\sum_{i=1}^{d}\wvar{H}_i\ge\wvar{w}(T)$.
Together with the root copy's allocation, this gives total weight
at least $2^h\wvar{w}(T)+\wvar{w}(S)$.

Otherwise, there exists a branch $i$ such that $\wvar{H}_i = \wvar{w}(B_i) - 1$ and $\wvar{L}_{i} \ge \wvar{w}(S) + 1$ (since $\wvar{L}_i<2^h$). Note that there can be at most one such branch; if two distinct branches $i, i'$ are in the same condition, \[
\wvar{L}_{i} + \wvar{L}_{i'} > 2\wvar{w}(S) \ge 2^{h},
\] contradicting that $\sum_{j=1}^{d} \wvar{L}_{j} \le 1 + 2^{1} + \cdots + 2^{h-1} = 2^{h} - 1$. Now, we claim that the root copy of $S$ receives a bit of weight at least $2^h$. Otherwise, its allocated weight $\wvar{L}_S$ is in $[\wvar{w}(S), 2^h - 1]$, and is bitwise disjoint from $\wvar{L}_i$. Thus \[
2^{h} - 1 \ge \wvar{L}_i + \wvar{L}_S \ge 2\wvar{w}(S) \ge 2^{h},
\] leading to a contradiction. Therefore, pick a bit of weight $2^{h'}$, with $h'\ge h$, from the root copy's allocation and reassign it to branch $i$. After dividing the high weights by $2^h$, this gives
\[
\wvar{H}_j' := \begin{cases} \wvar{H}_j & (j \neq i) \\ \wvar{H}_j + 2^{h'-h} & (j = i) \end{cases}.
\]
These allocations remain bitwise disjoint and meet all branch requirements of $T$, so $\sum_{j=1}^{d}\wvar{H}_j'\ge\wvar{w}(T)$. The low part $\wvar{L}_i>\wvar{w}(S)$ is unchanged, and the original total allocated weight is at least
\[
2^h\sum_{j=1}^{d}\wvar{H}_j'+\wvar{L}_i
>2^h\wvar{w}(T)+\wvar{w}(S),
\]
as desired.
\end{proof}

\begin{proof}[Proof of \cref{thm:approx-gap}] The problem is obviously in $\Pc$, so we prove the $\Pc$-hardness.
Define $T^{\odot1}=T$ and
$T^{\odot(j+1)}=T^{\odot j}\odot T$. By the construction and
\cref{lem:approx-product},
\begin{equation}\label{eq:approx-power}
 |V(T^{\odot r})|=|V(T)|^r,
 \qquad \er(T^{\odot r})=r\er(T).
\end{equation}
Given the rooted tree $T$ and threshold $K$ from
\cref{thm:pcreduction}, output $(T^{\odot r},rK)$.
A yes-instance has $\er(T^{\odot r})\le rK$; a no-instance has
$\er(T^{\odot r})\ge r(K+1)=rK+r$. The construction of \cref{thm:pcreduction} places every vertex of $T$ within distance three of its root, so every vertex of $T^{\odot r}$ is within distance $3r$ of its root. Thus the hardness remains for trees of diameter at most $6r$.
\end{proof}
\section{Approximate tree edge ranking in NC}\label{sec:approx}\label{sec:approx-nc}
By \cref{thm:approx-gap}, no $\NC$ algorithm achieves a constant additive error unless $\Pc=\NC$. In this section, we obtain a tree edge ranking with an additive error of $\log_2 n/\log_2\log_2 n$ in $\NC$.

Our approach generalizes the single-separator approach of~\cite{IyerRV91} via our size-threshold decomposition (\cref{lem:size-threshold}): for a parameter $b\ge 2$, we retain a subtree $T_0$ with at most $b$ leaves such that every attached subtree $T_i$ has fewer than $n/b$ vertices. Ranking the attached subtrees recursively and combining them optimally across $T_0$, however, would require their full critical sets, which are $\Pc$-complete to compute in parallel. To avoid this bottleneck, we require the edge connecting each attached subtree to receive a rank strictly above the maximum rank $h_i$ used inside it, allowing us to contract this attached subtree into a single leaf edge incident to $T_0$ with lower bound $h_i+1$.

Our algorithm is presented in three parts. In \cref{sec:approx-replacement}, we develop a lower-bound replacement technique and show that it loses at most one rank per recursion level, while simplifying the problem (\cref{lem:approx-rounding}). In \cref{sec:approx-exact}, we solve the simplified problem, where at most $b$ branches of $T_0$ meet at any vertex and every other branch is a single leaf edge. Combining branches at path endpoints (\cref{lem:approx-branch}) and extending rankings along paths (\cref{lem:approx-path}) then takes $O(b\log^2 n)$ time and $2^{O(b)}n^{O(1)}$ work (\cref{lem:approx-pendant}). Finally, \cref{sec:approx-proof} chooses $b=\lceil\log_2 n\rceil$ to balance polynomial work against $\lfloor\log_b n\rfloor$ recursion levels, proving \cref{thm:approx-nc-main}.

\subsection{Lower-bound replacement}\label{sec:approx-replacement}
Consider a rooted tree $T$ partitioned by \cref{lem:size-threshold} into a retained tree $T_0$ and attached subtrees $T_i$, each equipped with a valid ranking of maximum rank $h_i$ (with $h_i=0$ when $|T_i|=1$). Contracting each $T_i$ into its root turns its edge to $T_0$ into a leaf edge of the contracted tree $T'$ with lower bound $h_i+1$. Because a rank of at least $h_i+1$ on this edge exceeds every rank in $T_i$, it hides $T_i$ from the rest of the tree, so any valid ranking of $T'$ that respects the lower bounds extends directly to a valid ranking of $T$ by restoring the ranked subtrees $T_i$.
The lower bounds may force larger ranks, but they cost at most one rank beyond the additive error of the attached subtrees, while keeping the maximum rank below $|T|$.

\begin{lemma}[Lower-bound replacement]\label{lem:approx-rounding}
Suppose each attached subtree $T_i$ is given a valid ranking of maximum rank $h_i$ satisfying $\er(T_i)\le h_i\le\er(T_i)+E$ for a nonnegative integer $E$. Among valid rankings of $T'$ that respect the lower bounds $h_i+1$, any ranking with the smallest maximum rank extends the given rankings to a valid ranking of $T$ of maximum rank at most $\er(T)+E+1$. Moreover, if $h_i\le |T_i|-1$ for every $i$, this maximum rank is at most $|T|-1$.
\end{lemma}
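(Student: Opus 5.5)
The plan has three parts: validity of the extension, the bound $\er(T)+E+1$ via an explicit competitor on $T'$, and the bound $|T|-1$.

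\emph{Validity.} Let $\rho'$ be a valid ranking of $T'$ with $\rho'(e_i)\ge h_i+1$ on each leaf edge $e_i$ (the edge from $T_i$'s root to its parent in $T_0$). Restore every $T_i$ with its given ranking. Take two edges $f,g$ of equal rank.
\begin{itemize}[nosep]
\item If both lie in $T'$ or both in the same $T_i$, validity is inherited.
\item If $f\in T_i$ and $g\notin T_i$, the path between them passes through $e_i$, and $\rho'(e_i)>h_i\ge\rho(f)$.
\end{itemize}
So the extension is valid, and its maximum rank is $\max_{e\in T'}\rho'(e)$, since every $h_i$ is below $\rho'(e_i)$.

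\emph{The bound $\er(T)+E+1$.} It suffices to exhibit one valid ranking of $T'$ that respects the lower bounds and has maximum rank at most $\er(T)+E+1$. Fix an optimal ranking $\sigma$ of $T$ with maximum $R=\er(T)$, and define $\rho'$ on $T'$ by
\[
\rho'(e)=\sigma(e)+E+1 \quad\text{for every edge } e \text{ of } T'.
\]
Edges of $T_0$ and leaf edges are edges of $T$, and $T'$ is a subtree of $T$ (the roots of the $T_i$ are kept). Shifting all ranks by a constant preserves validity on a subtree, because paths in $T'$ are paths in $T$. It remains to check the lower bounds.

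\emph{Lower bounds (main step).} I need $\sigma(e_i)+E+1\ge h_i+1$, i.e.\ $\sigma(e_i)\ge h_i-E$. Since $h_i-E\le\er(T_i)$, it is enough to show $\sigma(e_i)\ge\er(T_i)$. This is false in general: $\sigma$ restricted to $T_i$ can place ranks above $\sigma(e_i)$. So this naive shift likely fails, and I would instead first normalize $\sigma$ so that each $e_i$ has rank exceeding every rank inside $T_i$.

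\emph{Normalization, first attempt.} Keep $\sigma$ on edges outside the $T_i$. Replace $\sigma$ on $T_i\cup\{e_i\}$ by a ranking that uses only the ranks of the branch $B_i=\{e_i\}\cup T_i$ visible at the parent $p_i$ under $\sigma$. Call this visible set $X_i$; its maximum $c$ satisfies $c\ge\er(T_i)$ because the maximum rank of the branch is visible. Rank $T_i$ optimally with ranks in $[\er(T_i)]$ and give $e_i$ the rank $c$. The visible set becomes $\{c\}\subseteq X_i$ if $c>\er(T_i)$, which keeps the ranking valid and preserves the maximum. The problem is the tight case $c=\er(T_i)$. There I would instead use the $+1$ slack: raise $e_i$ to rank $R+1$. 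The danger is that several leaf edges would then share rank $R+1$ without a separator.

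\emph{Normalization, fix.} Shift all non-$T_i$ edges by $+1$, so the maximum is $R+1$ and rank $1$ is freed. Then each branch $B_i$ can use the visible set $\{x+1:x\in X_i\}$. That set has weight $2\,\wt(X_i)\ge\wt(X_i)+1\ge\wvar{w}(B_i)$, so by \cref{prop:branch-requirements} it is realizable for $B_i$. I would then show it is realizable with $e_i$ carrying the maximum $c+1>\er(T_i)$, using the ranking "$T_i$ optimal in $[\er(T_i)]$, $e_i$ gets $c+1$". Finally shift by $E$ more, which covers $h_i\le\er(T_i)+E$. The total maximum is then $R+E+1$.

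\emph{The bound $|T|-1$.} Rank $T'$ by giving its edges distinct ranks. Sort the leaf edges by lower bound and assign them the ranks $|T_0|,\dots$ in that order, and give the $T_0$ edges the ranks $1,\dots,|T_0|-1$. Sorted, the $j$-th largest lower bound is at most the sum of the corresponding $|T_i|$, which fits within the ranks up to $|T|-1$ because the $T_i$ are disjoint. Hence the optimum over $T'$ has maximum at most $|T|-1$.

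I expect the normalization step to be the main obstacle.
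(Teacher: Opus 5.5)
Your validity check for the extension and your $\er(T)+E+1$ bound are correct. However, the normalization you flag as the main obstacle is unnecessary: the competitor only has to be a ranking of $T'$, and $T_i$ does not appear in $T'$. Your first attempt already works once you drop $T_i$. Keep $\sigma$ on $T_0$ and give each leaf edge $e_i$ the maximum rank $c_i$ of its branch $B_i$ under $\sigma$. In $T'$ the branch is just $e_i$, whose visible set is $\{c_i\}\subseteq X_i$, so your subset argument gives validity: any separator lies on the $T_0$ part of the path. This holds even in the tight case $c_i=\er(T_i)$. Since $c_i\ge\er(T_i)\ge h_i-E$, shifting by $E+1$ meets the bounds. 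This is exactly the paper's proof. Your ``fix'' (shift by $1$, re-rank $T_i$ under $e_i$ at rank $c_i+1$, then shift by $E$) is also correct and gives the same bound. The detour through weights and \cref{prop:branch-requirements} is superfluous, though, since your explicit branch ranking already has visible set $\{c_i+1\}$.

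The real gap is in the $|T|-1$ part. As written, the $T_0$ edges get ranks $1,\ldots,|T_0|-1$ and the leaf edges get the consecutive ranks $|T_0|,|T_0|+1,\ldots$. These ranks never exceed $|T'|-1$, so the bounds can fail. For example, let the root $r$ have two children, each the center of a star with $k$ leaves, and take $T_0=\{r\}$ (threshold $\tau=k+2$). Then $h_i=k=\er(T_i)=|T_i|-1$ and the bounds are $k+1$, but your assignment gives the two leaf edges ranks $1$ and $2$. Your counting sentence gestures at the right inequality but does not define an assignment that works.

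A correct assignment: list the attached subtrees in any order and give the $j$-th leaf edge rank $|T_0|-1+\sum_{j'\le j}|T_{j'}|$. These ranks are strictly increasing, at least $|T_0|$, at least $|T_j|\ge h_j+1$, and end at exactly $|T|-1$. The paper instead reuses the branch-maximum map. Applied to any bijection $E(T)\to[|T|-1]$, it gives $e_i$ the maximum of the $|T_i|$ distinct ranks on $B_i$, which is at least $|T_i|\ge h_i+1$.
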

\begin{proof}
Any valid ranking of $T$ induces a valid ranking of $T'$ by assigning each contracted leaf edge the maximum rank in its branch (the subtree $T_i$ together with its edge to $T_0$): any separator between equal ranks must be an edge of $T_0$, since a contracted branch contains no rank larger than its maximum. Applying this to an optimal ranking of $T$ gives each contracted leaf edge a rank at least $\er(T_i)$; increasing every rank by $E+1$ then respects all lower bounds $h_i+1$ with maximum rank at most $\er(T)+E+1$. Similarly, if $h_i\le |T_i|-1$ for every $i$, applying this to any bijection from $E(T)$ to $[|T|-1]$ assigns each contracted leaf edge the maximum of $|T_i|$ distinct ranks, which is at least $|T_i|\ge h_i+1$, yielding a valid ranking of $T'$ with maximum rank at most $|T|-1$. Restoring each $T_i$ preserves the maximum rank.
\end{proof}

\subsection{Lower-bounded critical ranking}\label{sec:approx-exact}
The replacement produces a rooted tree, again denoted by $T$, formed by a retained tree $T_0$ with at most $b$ leaves together with leaf edges incident to vertices of $T_0$. To solve $T$ bottom-up, we extend criticality to rankings that respect the lower bounds.
\begin{definition}[Lower-bounded critical ranking]\label{def:lower-bound-ranking}
For a tree $T$ with lower bounds on leaf edges, let $\crit(T)$ be the visible set of minimum weight $\wvar{w}(T)=\wt(\crit(T))$ over all valid edge rankings of $T$ that respect the lower bounds. A ranking attaining this minimum is \emph{critical}.
\end{definition}
Minimizing the visible weight also minimizes the maximum rank (\cref{prop:opt}), and since assigning distinct ranks in $[n,2n-1]$ to the edges of $T$ respects all lower bounds, every critical ranking uses ranks below $2n$ and has an $O(n)$-bit weight. Moreover, because edges of $T_0$ carry no lower bounds, the dominance argument of \cref{lem:dom}(c) still applies at every non-root $v\in T_0$: replacing the ranking of $T_v$ by a critical one and adjusting only the edge from $v$ to its parent exposes a subset of the previously visible ranks without violating any bound.

Since $T_0$ has at most $b$ leaves, its root, leaves, and vertices with at least two children in $T_0$ (fewer than $b$ in total) partition $T_0$ into $O(b)$ maximal paths whose internal vertices each have one child in $T_0$. We compute a critical ranking of $T$ bottom-up along $T_0$ by combining branches at path endpoints (\cref{lem:approx-branch}) and extending rankings along paths (\cref{lem:approx-path}).

\paragraph{Combining branches at a path endpoint.}
Consider a vertex $u\in T_0$ whose children in $T_0$ are $v_1,\ldots,v_d$ ($d\le b$) and whose leaf edges have lower bounds $\ell_1\ge\cdots\ge\ell_t$, where the branches $B_i=\{uv_i\}\cup T_{v_i}$ have already been critically ranked. Searching directly over the ranks of $uv_1,\ldots,uv_d$ would take $n^{\Omega(d)}$ work. Instead, we allocate ranks from $2n$ down to $1$: because ranks above $h$ contribute multiples of $2^h$ while the remaining ranks $1,\ldots,h$ sum to $2^h-1$, the weight still needed at rank $h$ by an unfinished branch $B_i$ is uniquely determined by $\wvar{w}(B_i)\bmod 2^h$, so it suffices to track the current rank $h$, the next leaf edge to rank, and the subset of unfinished branches ($2^d$ possibilities).

\begin{lemma}[Combining branches]\label{lem:approx-branch}
Let $u\in T_0$, let $v_1,\ldots,v_d$ ($d\le b\le\lceil\log_2 n\rceil$) be its children in $T_0$, and let each branch $B_i=\{uv_i\}\cup T_{v_i}$ have a given critical ranking. A critical ranking of $T_u$ that agrees with them on $T_{v_1},\ldots,T_{v_d}$ can be computed in $O(\log^2 n)$ time with $2^{O(d)}n^{O(1)}$ work.
\end{lemma}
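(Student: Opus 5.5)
First reduce the lemma to an allocation problem at $u$. Fix the given critical rankings of the subtrees $T_{v_i}$ and let $A_i=\crit(T_{v_i})$; then $\wvar{w}(B_i)=\wt(\crit(B_i))$ is obtained from $\wt(A_i)$ by adding one (the smallest rank absent from $A_i$). By \cref{prop:branch-requirements} and the dominance argument of \cref{lem:dom}(c), a valid ranking of $T_u$ agreeing with the given subtree rankings exists with visible set contained in $X$ if and only if $X$ contains pairwise disjoint subsets $X_1,\ldots,X_d,\{y_1\},\ldots,\{y_t\}$ with $\wt(X_i)\ge\wvar{w}(B_i)$ and $y_j\ge\ell_j$. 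The proof of \cref{prop:branch-requirements} realizes $X_i$ by raising only the edge $uv_i$, which keeps agreement on $T_{v_i}$. Hence it suffices to find a minimum-weight $X\subseteq[2n-1]$ admitting such a partition; by the remark after \cref{def:lower-bound-ranking}, ranks below $2n$ suffice.

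Next, order the leaf edges by $\ell_1\ge\cdots\ge\ell_t$. An exchange argument shows some optimal solution gives leaf edges ranks in decreasing order of index: if $y_j<y_{j'}$ with $j<j'$, swap the ranks, which is valid because $\ell_{j'}\le\ell_j\le y_j$. Then build a DAG on states $(h,j,\mathcal{B})$ with $h\in\{2n-1,\ldots,0\}$, $j\in[t+1]$ and $\mathcal{B}\subseteq[d]$, where $\mathcal{B}$ is the set of unfinished branches. The key invariant is the following. If branch $i$ is unfinished at rank $h$ and still completable, its accumulated weight $a$ from ranks above $h$ is a multiple of $2^h$ with $\wvar{w}(B_i)-2^h+1\le a<\wvar{w}(B_i)$. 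So $a=2^h\lfloor \wvar{w}(B_i)/2^h\rfloor$, provided $\wvar{w}(B_i)\bmod 2^h\ne0$; otherwise $a$ is forced and the branch is treated as finished. The remaining need is $\rho_i(h)=\wvar{w}(B_i)\bmod 2^h$.

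From state $(h,j,\mathcal{B})$ there are three kinds of edges to level $h-1$, all of which remove states that became infeasible. Skipping rank $h$ costs $0$. Giving rank $h$ to leaf $j$ is allowed when $h\ge\ell_j$ and costs $2^{h-1}$. Giving rank $h$ to branch $i\in\mathcal{B}$ costs $2^{h-1}$ and removes $i$ if $2^{h-1}\ge\rho_i(h)$; otherwise $i$ stays, which is consistent because the new remainder is $\rho_i(h)-2^{h-1}=\rho_i(h-1)$. Skipping or serving others while $\rho_i(h)>2^{h-1}-1$, i.e., $\rho_i(h-1)\ne\rho_i(h)$ impossible to meet, kills the state. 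The precise check is that branch $i$ survives a non-assignment only if $\rho_i(h)\le 2^{h-1}-1$, in which case $\rho_i(h-1)=\rho_i(h)$. Initial membership is handled the same way at $h=2n-1$, where every $\wvar{w}(B_i)<2^{2n-1}$. The answer is the minimum-cost path from $(2n-1,1,[d])$ to $(0,t+1,\varnothing)$, and its edges read off $X$ and the ranks of $uv_i$ and the leaf edges.

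Correctness is a bijection between feasible allocations (with sorted leaf ranks) and source-to-sink paths of equal cost. The main obstacle I expect is verifying the $\rho_i$ invariant rigorously, especially the boundary case where a branch finishes exactly. For complexity, the DAG has $O(n)\cdot(t+1)\cdot 2^d$ vertices, out-degree $O(d)$, and $O(n)$-bit costs. Each transition needs only comparisons of $\wvar{w}(B_i)\bmod 2^h$, which are bit extractions computable in $\NC^1$. The minimum-cost path primitive from \cref{sec:com-prel} then runs in $O(\log^2 n)$ time with $2^{O(d)}n^{O(1)}$ work, and since $d\le\lceil\log_2 n\rceil$ the graph is polynomial-size. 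Finally, the witness path yields the ranks of the edges $uv_i$ via the raising rule of \cref{prop:branch-requirements}, with $c_i=\max(X_i\triangle\crit(B_i))$ applied to the critical ranking of $B_i$, and this is computed in parallel.
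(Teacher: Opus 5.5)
Your proposal is correct and takes essentially the same route as the paper: you reduce to a rank allocation via \cref{prop:branch-requirements}, sort the leaf edges by an exchange argument, and compute a shortest path over states $(h,j,\mathcal B)$. In these states an unfinished branch's remaining need is forced to be $\wvar{w}(B_i)\bmod 2^h$, and arcs that skip a branch whose need is at least $2^{h-1}$ are pruned.

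One small touch-up concerns the formula $c_i=\max(X_i\triangle\crit(B_i))$, which applies only when $\crit(B_i)\not\subseteq X_i$. In a minimum-weight allocation the only other case is $X_i=\crit(B_i)$, and there you simply keep the given rank of $uv_i$.
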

\begin{proof}
We first reduce the problem to allocating ranks: each branch $B_i$ needs a set of ranks of weight at least $\wvar{w}(B_i)$, each leaf edge needs one rank respecting its bound, and all allocated ranks must be distinct. We then compute a minimum-weight allocation as a shortest path in a directed acyclic graph whose arcs allocate the ranks from $2n$ down to $1$ and whose states record only the current rank, the next leaf edge to rank, and the set of unfinished branches.

\paragraph{Allocating ranks.} By \cref{prop:valid}, every valid ranking of $T_u$ that respects the lower bounds exposes pairwise disjoint sets $X_i$ on the branches $B_i$ and distinct ranks $r_j\ge\ell_j$ outside them on the leaf edges, with $\wt(X_i)\ge\wvar{w}(B_i)$ since its restriction to $B_i$ respects the bounds of $B_i$. Conversely, for any such sets and ranks, \cref{prop:branch-requirements} (whose proof still holds with lower bounds) changes only the rank of each $uv_i$, which carries no bound, so that $B_i$ exposes a subset of $X_i$; this yields a valid ranking of $T_u$ that respects the lower bounds, agrees with the given rankings on $T_{v_1},\ldots,T_{v_d}$, and has visible weight at most $\sum_i\wt(X_i)+\sum_j 2^{r_j-1}$. Hence $\wvar{w}(T_u)$ is the minimum of this total weight, where all ranks may be taken from $[2n]$ since critical rankings use ranks below $2n$. In a minimizing choice, we may assume $r_1>\cdots>r_t$, since swapping two inverted leaf ranks preserves both bounds as $\ell_1\ge\cdots\ge\ell_t$, and, allocating ranks from $2n$ down to $1$, that a branch receives no further rank once its ranks reach weight $\wvar{w}(B_i)$, since discarding such a rank would lower the weight. Before rank $h$ is allocated, every rank already given to an unfinished branch $B_i$ contributes a multiple of $2^h$, while the ranks $1,\ldots,h$ have total weight $2^h-1$; hence, if $B_i$ can still reach $\wvar{w}(B_i)$, it has received weight exactly $\wvar{w}(B_i)-R_i(h)$ and still needs $R_i(h)=\wvar{w}(B_i)\bmod 2^h\in[1,2^h-1]$.

\paragraph{Minimum-weight allocation.} We therefore form a directed acyclic graph on the states $(h,j,\mathcal B)$, where $h\in[0,2n]$, $j\in[1,t+1]$ is the next leaf edge to rank, and $\mathcal B\subseteq[d]$ is the set of unfinished branches, with $R_i(h)>0$ for all $i\in\mathcal B$. For $h\ge1$, the arcs from $(h,j,\mathcal B)$ lead to level $h-1$ and leave rank $h$ unused at cost $0$, assign it to leaf edge $j$ at cost $2^{h-1}$ if $j\le t$ and $h\ge\ell_j$ (advancing $j$ to $j+1$), or assign it to a branch $i\in\mathcal B$ at cost $2^{h-1}$ (removing $i$ from $\mathcal B$ if $2^{h-1}\ge R_i(h)$). We omit the arcs that leave some $i\in\mathcal B$ without rank $h$ while $R_i(h)\ge 2^{h-1}$, since $B_i$ can then no longer reach $\wvar{w}(B_i)$; along every remaining arc, each branch that stays in $\mathcal B$ still needs exactly $R_i(h-1)$. Thus every path from $(2n,1,[d])$ to $(0,t+1,\varnothing)$ describes a feasible choice of the same weight, and conversely a minimizing choice of the above form traces such a path. The shortest-path distance between these two states, in a graph with $O(n(t+1)2^d)=2^{O(d)}n^{O(1)}$ states and $O(n)$-bit costs, is therefore $\wvar{w}(T_u)$. Via PRAM path primitives, we compute a shortest path in $O(\log^2 n)$ time with $2^{O(d)}n^{O(1)}$ work, and \cref{prop:branch-requirements} turns it into a critical ranking of $T_u$.
\end{proof}

\paragraph{Ranking along a path.}
Let $u=v_0,v_1,\ldots,v_m=v$ ($m\ge 1$) be a path in $T_0$ whose internal vertices $v_1,\ldots,v_{m-1}$ each have one child in $T_0$, and let $B=\{uv_1\}\cup T_{v_1}$ be the branch of $T_u$ containing $v$. Given a critical ranking of $T_v$, we rank the $m$ path edges and the leaf edges at $v_1,\ldots,v_{m-1}$. Because $m$ can be $\Omega(n)$, sequential extension along the path would be too slow in $\NC$. Guessing the path edge of largest rank $k$ splits the path into two subpaths ranked below $k$: the lower subpath toward $v$ is hidden by $k$ and needs only a feasible ranking, while the upper subpath toward $u$ determines the visible weight at $u$.

\begin{lemma}[Ranking along a path]\label{lem:approx-path}
Given a critical ranking of $T_v$, a critical ranking of the branch $B$ of $T_u$ containing $v$ that extends it can be computed in $O(\log^2 n)$ time with polynomial work.
\end{lemma}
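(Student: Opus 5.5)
We follow the outline in the technical overview: a dynamic program over subpaths and rank limits, evaluated in two parallel phases. Index the path $u=v_0,\ldots,v_m=v$. For $0\le s<t\le m$ and $h\le 2n$, let $\mathrm{OPT}([s,t],h)$ be the minimum visible weight at $v_s$ over valid rankings of the path edges $v_sv_{s+1},\ldots,v_{t-1}v_t$, the leaf edges at $v_{s+1},\ldots,v_{t-1}$, and the fixed critical ranking of $T_v$ when $t=m$. These rankings respect the lower bounds and use only ranks in $[h]$ on the new edges; the value is $\infty$ if none exists. When $t=m$, the visible set of $T_v$ acts as a collection of fixed ranks hanging at $v_m$. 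A critical ranking of $B$ has weight $\mathrm{OPT}([0,m],2n)$, since critical rankings use ranks below $2n$. The dominance argument of \cref{lem:dom}(c), valid at non-root vertices of $T_0$, lets us keep the given critical ranking of $T_v$ fixed.

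\emph{Recurrence.} Fix a valid ranking of $[s,t]$ and let $v_iv_{i+1}$ be the path edge of maximum rank $k\le h$. No path edge has rank above $k$, so any leaf edge with rank in $(k,h]$ is visible from $v_s$, and all such ranks are pairwise distinct. An exchange shows that a leaf edge may be assumed to get rank above $k$ exactly when its bound exceeds $k$. Otherwise we swap its rank with the path maximum; this keeps validity and does not increase the visible weight. We also check that the swap respects the bounds. The fixed visible ranks of $T_v$ above $k$ likewise persist, and they must avoid the ranks just assigned. By a further exchange, the forced edges take the smallest free distinct ranks in $(k,h]$, processed greedily in order of their bounds. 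This is a sorting and prefix-sum computation, and it returns a weight $G(s,t,k,h)$ or $\infty$. Removing $v_iv_{i+1}$ leaves two pieces. The lower piece $[i+1,t]$ is hidden from $v_s$ by rank $k$, so it only needs $\mathrm{OPT}([i+1,t],k-1)<\infty$. The upper piece contributes exactly $\mathrm{OPT}([s,i],k-1)$ to the visible set below $k$. The rank $k$ itself must not collide with the upper piece's visible set. This holds automatically because the upper piece uses ranks at most $k-1$. The result is
\[
\mathrm{OPT}([s,t],h)=\min_{i,k}\bigl\{2^{k-1}+G(s,t,k,h)+\mathrm{OPT}([s,i],k-1)\;:\;\mathrm{OPT}([i+1,t],k-1)<\infty\bigr\},
\]
together with the base cases of single vertices and the option of leaving $h$ unused.

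\emph{Parallel evaluation.} The table has $O(n^3)$ states, but evaluating it by increasing length is sequential in $m$. We therefore first compute feasibility of every state as reachability. Feasibility of $([s,t],h)$ is an AND of two feasibility conditions: one on $([s,i],k-1)$ and one on $([i+1,t],k-1)$. This AND-structure is the main obstacle, since plain reachability handles only ORs. We plan to avoid it with a direct characterization. Feasibility with ranks in $[h]$ should be monotone and decided greedily, for instance by a well-known bound on ranking a caterpillar with lower bounds, which can be checked by prefix computations. Alternatively, the AND can be resolved by a doubling recursion over $k$: there are only $O(n)$ levels, each combining two independent lower-level tables. Since all states at level $k-1$ are computed simultaneously, we get $O(\log n)$ time per level. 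However, $2n$ levels would be too slow, so we must show that only $O(\log n)$ levels matter, or fall back to the greedy characterization; this is the step to resolve first. Once feasibility is known, each transition depends on the single subproblem $([s,i],k-1)$ with a known additive cost. Computing the optimal weights is then a single-source shortest-path problem on a DAG with $O(n^3)$ vertices and $O(n)$-bit costs, which takes $O(\log^2 n)$ time with polynomial work by the primitives of \cref{sec:com-prel}. The path witness, together with the greedy assignments, yields the ranking.
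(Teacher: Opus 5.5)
\textbf{There is a genuine gap.} Your dynamic program matches the paper's: split at the path edge of largest rank $k$, give greedy ranks above $k$ to the leaf edges whose constraint lies in $(k,h]$, require only feasibility of the lower part, and finish with a DAG shortest path once feasibility is known. The gap is at the step you yourself flag: computing feasibility of all entries in polylogarithmic time. Neither proposed fix is carried out.
\begin{itemize}
\item The ``well-known'' greedy characterization of caterpillar feasibility with lower bounds is neither stated nor proved. It is not clear such a characterization exists once arbitrary lower bounds interact with the fixed ranks coming from $\crit(T_v)$.
\item The recursion over $k$ processes up to $2n$ rank levels one after another. You give no argument that only $O(\log n)$ of them matter.
\end{itemize}
Without this step, evaluation falls back to $\Omega(m)$ sequential stages, which is exactly what the lemma must avoid. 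Since $m$ can be $\Omega(n)$, the $O(\log^2 n)$ bound is not established.

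\textbf{The missing idea.} Double on the \emph{number of vertices of the subpath} rather than on the rank level. Suppose that after round $r-1$ every feasible entry on at most $2^{r-1}$ vertices is marked. In round $r$, add an arc from $([s,t],h)$ to one part of a split $(i,k)$ whenever the split cost is finite and the \emph{other} part is already marked. Then mark every entry that can reach a previously marked entry.
\begin{itemize}
\item \emph{Soundness.} Along such a path every side part is feasible and the endpoint is feasible, so feasibility propagates back to the start.
\item \emph{Completeness.} Take a feasible entry on at most $2^r$ vertices. The two parts $[s,i]$ and $[i+1,t]$ of a feasible split partition its vertices, so at most one part has more than $2^{r-1}$ vertices. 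Repeatedly stepping into the larger part leaves aside only parts that are feasible, small, and hence already marked. This continues until an entry on at most $2^{r-1}$ vertices is reached, which is marked.
\end{itemize}
This turns the AND into OR-reachability. Running $\lceil\log_2(m+1)\rceil$ rounds of $O(\log n)$-time CRCW reachability gives $O(\log^2 n)$ time, after which your shortest-path phase applies unchanged.

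\textbf{Indexing fix.} As you define it, $\mathrm{OPT}([s,t],h)$ includes leaf edges only at $v_{s+1},\ldots,v_{t-1}$. A split at $v_iv_{i+1}$ then drops the leaf edges at $v_i$ and $v_{i+1}$ from both parts. Include the leaf edges at all of $v_s,\ldots,v_t$ instead; there are none at $v_0=u$, and those at $v_m$ are the fixed ranks from $T_v$.
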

\begin{proof}
By the dominance argument above, some critical ranking of $B$ extends the given ranking of $T_v$, so we keep it fixed. Edges above $v$ can only conflict with the ranks visible from $v$, namely $\crit(T_v)$. Hence, we may replace $T_v$ by one leaf edge at $v$ of fixed rank $r$ for each $r\in\crit(T_v)$. Every leaf edge on the path (at $v_1,\ldots,v_m$, with none at $v_0=u$) now carries a lower bound or a fixed rank, which we call its \emph{constraint}; all constraints are below $2n$. We solve this path instance by a dynamic program over subpaths and rank limits whose recurrence guesses the path edge of largest rank $k$. We first show that $k$ determines which leaf edges are ranked above $k$, then set up the recurrence, and finally evaluate it in $O(\log^2 n)$ time.

\paragraph{Leaf edges above the largest path-edge rank.}
Fix a subpath with at least one path edge that admits a ranking in $[h]$ together with its leaf edges, and among its minimum-weight such rankings choose one maximizing the largest path-edge rank $k$. Only path edges can separate two equal ranks, and none exceeds $k$, so every rank of at least $k$ occurs at most once. Moreover, a leaf edge with a lower bound is ranked above $k$ if and only if its bound exceeds $k$: if its bound were at most $k$ and its rank $p$ exceeded $k$, swapping its rank with that of the path edge ranked $k$ would keep the ranking valid and respect all bounds without increasing the visible weight, yet raise the largest path-edge rank to $p$, contradicting the choice of $k$. Thus the leaf edges ranked above $k$ are precisely those with constraint in $(k,h]$, and no leaf edge has constraint $k$, since rank $k$ is taken by a path edge.

\paragraph{Dynamic program.}
For $0\le s\le t\le m$ and $0\le h\le 2n$, let $\operatorname{PathDP}([s,t],h)$ be the minimum visible weight at $v_s$ of a valid ranking in $[h]$ of $[s,t]=v_s,\ldots,v_t$ and its leaf edges of constraint at most $h$, or $\infty$ if none exists. An entry is \emph{feasible} if its value is finite. Since critical rankings use ranks below $2n$, $\wvar{w}(B)=\operatorname{PathDP}([0,m],2n)$.

Whenever visible leaf edges must receive distinct ranks from an interval, we reserve their fixed ranks and assign the remaining leaf edges in increasing order of bound, giving each the smallest free rank that respects its bound. An exchange argument shows that these greedy ranks are pointwise smallest, so the greedy assignment fits whenever any assignment does and then minimizes the weight. If $p_j$ is the position, in the increasing list of unreserved ranks, of the first rank respecting the $j$-th smallest bound ($p_j=\infty$ if none exists), then the $j$-th greedy rank has position $a_j=j+\max_{1\le j'\le j}(p_{j'}-j')$, so prefix maxima compute the greedy assignment in $O(\log n)$ time with polynomial work (\hyperref[par:pram]{PRAM primitives}).

For $s=t$, all leaf edges at $v_s$ are visible, so $\operatorname{PathDP}([s,s],h)$ is the weight of their fixed and greedy ranks in $[h]$, or $\infty$ if the greedy assignment does not fit. For $s<t$, a \emph{split} $(i,k)$ ($s\le i<t$, $k\in[h]$) assigns the largest path-edge rank $k$ to $v_iv_{i+1}$. Let $\wvar{c}_k$ be $2^{k-1}$ plus the weight of the fixed and greedy ranks in $(k,h]$ of the leaf edges on $[s,t]$ with constraint in $(k,h]$, or $\infty$ if the greedy assignment does not fit or some leaf edge on $[s,t]$ has constraint $k$. Since edge $v_iv_{i+1}$ separates $[s,i]$ from $[i+1,t]$ and hides $[i+1,t]$ from $v_s$, a split with $\wvar{c}_k<\infty$ combines rankings of $([s,i],k-1)$ and $([i+1,t],k-1)$ into a valid ranking whose visible weight is $\wvar{c}_k$ plus that of the upper part. Conversely, by the characterization above, some optimal ranking of $([s,t],h)$ arises in this way. Hence $([s,t],h)$ is feasible if and only if some split $(i,k)$ has $\wvar{c}_k<\infty$ and both $([s,i],k-1)$ and $([i+1,t],k-1)$ feasible (a \emph{feasible split}), and
\[
  \operatorname{PathDP}([s,t],h)
  =\min_{\substack{s\le i<t,\;1\le k\le h \\ ([i+1,t],k-1)\text{ feasible}}}\bigl(\operatorname{PathDP}([s,i],k-1)+\wvar{c}_k\bigr).
\]

\paragraph{Parallel evaluation.}
Because a split depends on two subpaths, evaluating feasibility by subpath length would take $\Omega(m)$ stages; instead, we use $\lceil\log_2(m+1)\rceil$ doubling rounds. Round $0$ marks the feasible single-vertex entries. Round $r\ge 1$ forms a directed graph on the $O(n^3)$ entries by adding an arc from $([s,t],h)$ to either part of a split $(i,k)$ whenever $\wvar{c}_k<\infty$ and the opposite part was marked in an earlier round, and marks every entry that can reach an earlier-marked entry. Every marked entry is feasible: by induction, the endpoint and the side parts of its path are feasible, so feasibility propagates backward along the path. Conversely, by induction on $r$, every feasible entry on at most $2^r$ vertices is marked after round $r$: repeatedly moving to the larger part of a feasible split while more than $2^{r-1}$ vertices remain leaves aside only subpaths of at most $2^{r-1}$ vertices, which are already marked, tracing a path to a marked entry. Thus $O(\log n)$ rounds of CRCW reachability find all feasible entries in $O(\log^2 n)$ time.

Once feasibility is known, each term of the recurrence depends only on $([s,i],k-1)$, so the recurrence reduces to single-source shortest paths. We add an arc of cost $\wvar{c}_k$ from $([s,i],k-1)$ to $([s,t],h)$ whenever $\wvar{c}_k<\infty$ and $([i+1,t],k-1)$ is feasible, plus an arc from a source to each feasible single-vertex entry of cost equal to its value. Since arcs between entries increase subpath length, the graph is acyclic and distances equal $\operatorname{PathDP}$ values. Via the \hyperref[par:pram]{PRAM path primitives}, we compute all distances and one minimizing split per feasible entry in $O(\log^2 n)$ time and polynomial work. Both parts of a minimizing split are feasible, so following the stored splits from $([0,m],2n)$ into both parts, which reachability does in parallel, ranks every edge; restoring $T_v$ gives a critical ranking of $B$.
\end{proof}

Alternating \cref{lem:approx-branch,lem:approx-path} bottom-up along the $O(b)$ path endpoints and maximal paths of $T_0$ solves the entire tree $T$.
\begin{lemma}[Computing lower-bounded critical rankings]\label{lem:approx-pendant}
Let $T$ have at most $n$ vertices and lower bounds at most $n$, and let $T_0$ have at most $b$ leaves, where $2\le b\le\lceil\log_2 n\rceil$. Then, a lower-bounded critical ranking of $T$ can be computed in $O(b\log^2 n)$ time with $2^{O(b)}n^{O(1)}$ work.
\end{lemma}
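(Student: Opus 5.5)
The plan is to decompose $T_0$ into its $O(b)$ maximal paths and endpoints and process them bottom-up. Alternating \cref{lem:approx-branch} at the endpoints with \cref{lem:approx-path} along the paths keeps the sequential depth at $O(b)$ phases of $O(\log^2 n)$ each.

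First I identify the \emph{key vertices} of $T_0$: the root, its leaves, and the vertices with at least two children in $T_0$. Since $T_0$ has at most $b$ leaves, there are fewer than $2b$ key vertices, and removing them splits $T_0$ into fewer than $2b$ maximal paths whose internal vertices have exactly one child in $T_0$. All of this is computable in $O(\log n)$ time by standard tree primitives (Euler tour, prefix sums). Contract $T_0$ to the \emph{skeleton tree} on the key vertices. The skeleton has fewer than $2b$ vertices, so its height is $O(b)$, and I process its vertices in $O(b)$ levels of decreasing depth.

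At a skeleton leaf $v$ of $T_0$, the only branches are leaf edges with lower bounds. Its critical ranking comes from \cref{lem:approx-branch} with $d=0$, or directly from the greedy assignment of distinct ranks in decreasing order of bound. For a key vertex $u$ with key children $w_1,\ldots,w_d$, assume inductively that critical rankings of all $T_{w_i}$ are known. Then:
\begin{itemize}
\item For each $i$, \cref{lem:approx-path} is applied to the path from $u$ to $w_i$; this produces a critical ranking of the branch of $T_u$ containing $w_i$ that extends the ranking of $T_{w_i}$.
\item \cref{lem:approx-branch} is applied at $u$ with these branches and the leaf edges at $u$; this yields a critical ranking of $T_u$.
\end{itemize}
All calls within one level of the skeleton are independent and act on disjoint parts of the tree, so they run in parallel. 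The paths need no separate handling: every internal path vertex lies on exactly one path, and its leaf edges are ranked inside \cref{lem:approx-path}.

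Correctness at each step rests on the dominance argument following \cref{def:lower-bound-ranking}. Because edges of $T_0$ carry no lower bounds, for every non-root $v\in T_0$ some critical ranking of the parent structure extends any critical ranking of $T_v$. This is exactly the hypothesis both lemmas need: \cref{lem:approx-path} takes an arbitrary critical ranking of $T_v$, and \cref{lem:approx-branch} takes arbitrary critical rankings of its branches. Induction up the skeleton then gives a critical ranking of $T$ at the root.

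\textbf{Complexity.} There are $O(b)$ levels, each costing $O(\log^2 n)$ time, for $O(b\log^2 n)$ time in total. For work, \cref{lem:approx-branch} costs $2^{O(d)}n^{O(1)}\le 2^{O(b)}n^{O(1)}$ per key vertex, and \cref{lem:approx-path} is polynomial. With fewer than $4b$ calls overall, the total work is $2^{O(b)}n^{O(1)}$.

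The main obstacle is bookkeeping rather than new mathematics. The requirements are:
\begin{itemize}
\item every vertex with several children in $T_0$ must be a key vertex, so that \cref{lem:approx-branch} applies with $d\le b$;
\item the hypotheses $d\le b\le\lceil\log_2 n\rceil$ and ranks below $2n$ must hold, the latter guaranteed by lower bounds at most $n$;
\item each call must receive the critical sets it needs from the previous level in an explicit form, and these must stay $O(n)$-bit.
\end{itemize}
If the skeleton height were the concern, note that $O(b)$ levels already suffice for the stated bound, so no balancing such as rake-and-compress on the skeleton is required.
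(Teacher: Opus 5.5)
Your proposal is correct and follows the paper's proof. Split $T_0$ at its root, leaves, and branching vertices into $O(b)$ paths and endpoints, then alternate \cref{lem:approx-path} along paths with \cref{lem:approx-branch} at endpoints bottom-up, giving $O(b)$ phases of $O(\log^2 n)$ time and $O(b)\cdot 2^{O(b)}n^{O(1)}$ work. Running each skeleton level in parallel is a harmless refinement, since the paper's plain bottom-up order over the $O(b)$ paths and endpoints already meets the same bound.
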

\begin{proof}
We first apply \cref{lem:approx-branch} with $d=0$ at each leaf of $T_0$ to rank its incident leaf edges, and then alternate \cref{lem:approx-path} and \cref{lem:approx-branch} upward along $T_0$: \cref{lem:approx-path} extends each solved subtree $T_v$ along the path to its upper endpoint $u$, and \cref{lem:approx-branch} combines the resulting branches and leaf edges at $u$ into a critical ranking of $T_u$. Since $T_0$ has at most $b$ leaves and fewer than $b$ vertices with at least two children in $T_0$ (each with at most $b$ children in $T_0$), there are $O(b)$ paths and path endpoints. Processing them in bottom-up order takes $O(b\log^2 n)$ time and $O(b)\cdot 2^{O(b)}n^{O(1)}=2^{O(b)}n^{O(1)}$ work, and all DP tables and transitions have $O(\log n)$-bit indices for $b\le\lceil\log_2 n\rceil$ and are constructed uniformly in parallel.
\end{proof}

\subsection{Analysis of the recursion}\label{sec:approx-proof}
We now combine the replacement reduction (\cref{lem:approx-rounding}) and the exact solver (\cref{lem:approx-pendant}) with $b=\lceil\log_2 n\rceil$ fixed throughout the recursion. On a rooted subtree $T$ of size $N$, if $N\le b$, we compute an exact critical ranking of $T$ directly via \cref{lem:approx-pendant} (with $T_0=T$ and no lower bounds). If $N>b$, we apply \cref{lem:size-threshold} with threshold $\tau=\lceil N/b\rceil$ to retain the subtree $T_0$ of vertices with subtree size at least $N/b$. By \cref{lem:size-threshold}, $T_0$ has at most $b$ leaves and every attached subtree $T_i$ has fewer than $N/b$ vertices. We rank the attached subtrees $T_i$ recursively in parallel, contract each $T_i$ into a leaf edge of $T'$ with lower bound $h_i+1$ (where $h_i$ is the maximum rank computed for $T_i$), compute a critical ranking of $T'$ via \cref{lem:approx-pendant}, and restore the ranked subtrees $T_i$.

\begin{proof}[Proof of \cref{thm:approx-nc-main}]
Since each recursive call reduces the subtree size by a factor of at least $b$, the recursion has depth at most $\lfloor\log_b n\rfloor$. The base cases are solved exactly, with additive error $0$ and maximum rank at most $|T|-1$. By induction up the recursion using \cref{lem:approx-rounding}, each non-base call increases the additive error by at most $1$ and returns a valid ranking of maximum rank at most $|T|-1<n$, so every lower bound is at most $n$ and each contracted tree $T'$ satisfies the hypotheses of \cref{lem:approx-pendant}. Over $\lfloor\log_b n\rfloor$ levels, the resulting ranking of the $n$-vertex input tree $T$ has maximum rank at most
\begin{equation*}
 \er(T)+\lfloor\log_b n\rfloor
 \le\er(T)+\frac{\log_2 n}{\log_2\log_2 n}.
\end{equation*}

At each recursion level, rooting, computing subtree sizes, and decomposing the active subtrees take $O(\log^2 n)$ time via parallel reachability and prefix sums, and the independent contracted trees on that level are solved in parallel in $O(b\log^2 n)=O(\log^3 n)$ time by \cref{lem:approx-pendant}, for $O(\log^4 n)$ time across all $O(\log n)$ levels. Since the retained trees $T_0$ of all recursive calls are pairwise disjoint, there are at most $n$ subproblems in total, each using $2^{O(b)}n^{O(1)}=n^{O(1)}$ work. Thus, every edge receives its final rank in the unique call in which it belongs to $T_0$ or becomes a contracted leaf edge.
\end{proof}
\section{Massively parallel tree edge ranking}\label{sec:mpc}
Our MPC algorithm for tree edge ranking is presented in four parts. In \cref{sec:mpc-onepermax}, we introduce the compression of visible sets using \textit{segments} and \textit{star branches} to reduce memory consumption.
In \cref{sec:labeling}, we simulate the sequential algorithm of~\cite{LamY01} directly on these compressed representations via \textit{batched labeling}, in polynomial time and without asymptotic space overhead.
In \cref{sec:mpc-instance}, we use our size-threshold decomposition of \cref{lem:size-threshold} to solve subtrees on a single machine in $O(1)$ MPC rounds.
Finally, \cref{sec:mpc-stages} combines these ingredients to establish our main result.

\subsection{Compression of visible sets}\label{sec:mpc-onepermax}
Our construction builds on the compression technique introduced by~\cite{LamY01}. They use this technique to reduce the \textit{computation} required to find an optimal edge ranking. It turns out that the same technique, with some new ingredients, can be used to reduce the space occupied by visible ranks. 

Consider a vertex $v$ whose children carry critical sets $A_1,\ldots,A_d$, so that a critical assignment to its branches gives $\crit(T_v)=\Lab(A_1,\ldots,A_d)$ by \cref{lem:dom}(c). 
Computing an optimal ranking requires the visible sets of the children, but a visible set may contain $\Omega(n)$ ranks; e.g., a star with $n-1$ leaves rooted at its center has $\crit(T)=[n-1]$. We use two ways to represent sets of ranks succinctly, avoiding a potential $\Omega(n)$-blowup. 
\begin{itemize}
    \item \textbf{Segment representation:} A set of ranks $A$ is stored as a list of its maximal intervals of consecutive integers, called \emph{segments}. Segments correspond to maximal runs of $1$-bits in the binary representation of $\wt(A)$; we denote their number by $\nint{A}$, and their representation uses $O(\nint{A})$ words. Weight comparisons and bit operations can be performed on these lists without expanding them.
    \item \textbf{Star branches of type $\mu$:} All branches with visible set $[\mu]$ are stored by their type $\mu$ and their count, using $O(1)$ words. Leaf children, whose set is $\varnothing=[0]$, are stored as star branches of type $0$. Note that each star branch has weight $2^\mu-1$ and must receive a rank greater than $\mu$.
\end{itemize} 
For segment representation, we use the following bound on the number $\nint{A}$ of segments proved in \cite{LamY01}. 
\begin{lemma}[Lemma~5.1 of \cite{LamY01}]\label{lem:seg}
  If $[x_1,y_1],\ldots,[x_q,y_q]$ are the segments of $\crit(T)$, listed increasingly, for an $s$-vertex tree $T$, then $\sum_j x_j<s$ and $\nint{\crit(T)}<\sqrt{s}$.
   Moreover, for pairwise disjoint subtrees $T_1,\ldots,T_f$ of sizes $s_1,\ldots,s_f$ with $\sum_i s_i\le N$,
    \begin{equation}\label{eq:cs}
     \sum_{i=1}^{f}\nint{\crit(T_i)}\le\sum_{i=1}^{f}\sqrt{s_i}\le\sqrt{fN}.
    \end{equation}
\end{lemma}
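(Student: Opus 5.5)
The plan has two parts. First we prove the combinatorial inequality $\sum_j x_j<s$ directly from criticality by an exchange argument. The remaining two claims then follow by elementary counting.

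\textbf{Deriving the consequences.} Distinct maximal segments are separated by at least one missing rank. Hence $x_1\ge 1$ and $x_{j}\ge y_{j-1}+2\ge x_{j-1}+2$, so $x_j\ge 2j-1$. This gives
\[
\sum_{j=1}^q x_j\ \ge\ \sum_{j=1}^q(2j-1)=q^2 .
\]
Together with $\sum_j x_j<s$ this yields $\nint{\crit(T)}=q<\sqrt{s}$. For \eqref{eq:cs}, apply this bound to each $T_i$ separately to get $\sum_i\nint{\crit(T_i)}\le\sum_i\sqrt{s_i}$. Then Cauchy--Schwarz gives $\sum_i\sqrt{s_i}\le\sqrt{f\sum_i s_i}\le\sqrt{fN}$.

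\textbf{Proving $\sum_j x_j<s$.} Fix a critical ranking of $T$. For each edge $e$, let $\pi(e)$ be the maximum rank on the root-to-$e$ path, including $e$ itself. The edge realizing $\pi(e)$ has no larger rank above it, so $\pi(e)\in\crit(T)$. Each of the $s-1$ edges is therefore charged to exactly one segment, namely the one containing $\pi(e)$. Let $c_j$ be the number of edges charged to segment $j$; then $\sum_j c_j=s-1$. The goal is to show $c_j\ge x_j$, after which summing over $j$ gives $\sum_j x_j\le s-1<s$.

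\textbf{The exchange step.} Let $F_j$ be the set of edges with $\pi(e)\le y_j$. This set is closed downward, and it is separated from the rest of the tree by edges of rank $>y_j$ above it. Suppose $|F_j|<y_j$. Reassign the edges of $F_j$ pairwise distinct ranks in $[|F_j|]$. The result is valid: no two edges of $F_j$ share a rank, and every path from $F_j$ to another edge leaves $F_j$ through an edge of rank $>y_j$. The new visible set is contained in $[y_j-1]\cup\bigl(\crit(T)\cap(y_j,\infty)\bigr)$. Its weight is strictly smaller than before, because the original set contains $y_j$ and $\wt([y_j-1])<2^{y_j-1}$. This contradicts criticality, so $|F_j|\ge y_j$.

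\textbf{Main obstacle.} The inequality $|F_j|\ge y_j$ only controls cumulative counts. Telescoping it gives something like $y_q\le s-1$, not $c_j\ge x_j$. The hard step is to sharpen the exchange so that each segment pays separately. My first attempt is to re-rank only the $c_j$ edges charged to segment $j$, leaving the lower-charged edges below them untouched. The new ranks would be chosen greedily from the ranks in $[1,y_j]$ that are not in lower segments and that are admissible above the fixed lower-visible sets. The task is to show that if $c_j<x_j$, then enough such free ranks exist, using in particular the gap at $x_j-1$, to place everything strictly below $x_j$ or to vacate $y_j$, again lowering the weight.

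If this direct exchange becomes unwieldy, the fallback is induction on $T$ through $\crit(T_v)=\Lab(\crit(T_{v_1}),\ldots,\crit(T_{v_d}))$ (\cref{lem:dom}(c)). One would track how segment starts in the merged set arise from segment starts of the children plus the newly assigned branch ranks. Each new start would then be paid for by the edges of the branch that created it, exactly as the star gadgets in \cref{lem:gadget} pay $a_j+1$ edges for visible rank $a_j+1$.
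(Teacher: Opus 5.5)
\textbf{Gap: the inequality $\sum_j x_j<s$ is never proved.} Your derivation of $\nint{\crit(T)}<\sqrt{s}$ and of \eqref{eq:cs} from $\sum_j x_j<s$ matches the paper. Your cumulative exchange giving $|F_j|\ge y_j$ is also correct. But that is the lemma's central inequality. As you note yourself, $|F_j|\ge y_j$ only bounds prefix counts. Neither of your other two routes is carried out: the per-segment exchange (re-ranking all $c_j$ edges charged to segment $j$ and trying to vacate $y_j$) and the inductive fallback through $\Lab$ both stop at a plan. As written, the proposal proves the easy consequences but not the lemma.

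\textbf{The missing idea: shift only the segment start into the gap at $x_j-1$.} Charge only the piece hanging from the start of segment $j$; this is the ``disjoint subtree with at least $x_j$ edges'' of the paper's sketch.
\begin{itemize}
\item[(1)] Let $e$ be the visible edge of rank $x_j$, and let $E_j=\{f:\pi(f)=x_j\}$. The sets $E_j$ are pairwise disjoint, and $E_j$ is contained in the edges you charge to segment $j$.
\item[(2)] $E_j$ consists of $e$ together with the edges below $e$ whose path up to $e$ uses only ranks below $x_j$. Every other edge below $e$ is cut off from $e$ by an edge of rank $>x_j$.
\item[(3)] Suppose $|E_j|<x_j$, so $x_j\ge 2$. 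Give $e$ rank $x_j-1$, and give the at most $x_j-2$ other edges of $E_j$ distinct ranks in $[x_j-2]$.
\item[(4)] Because $x_j-1\notin\crit(T)$, every edge on the root path above $e$ has rank at most $x_j-2$; an edge of rank $x_j-1$ there would be visible. Likewise, an edge of rank $x_j-1$ outside the subtree below $e$, with no larger rank between it and $e$, would itself be visible from the root.
\item[(5)] All old and new ranks in $E_j$ are at most $x_j$, and every path from $E_j$ into a cut-off part meets an edge of rank $>x_j$. Together with (4), this shows the new ranking is valid and visibility outside $E_j$ is unchanged.
\item[(6)] The new visible set is therefore $(\crit(T)\setminus\{x_j\})\cup\{x_j-1\}$, which has strictly smaller weight. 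This contradicts criticality.
\end{itemize}
Hence $c_j\ge|E_j|\ge x_j$, and summing over $j$ gives $\sum_j x_j\le s-1<s$.
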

\begin{proof}[Proof Sketch]
   Consecutive segments are separated by a gap, so $x_j\ge x_{j-1}+2$, and $x_j\ge2j-1$. Summing up, $\sum_j x_j \ge 1+3+\cdots+(2q-1)=q^2$. It only remains to show that $\sum_j x_j < s$ for $\nint{\crit(T)}<\sqrt{s}$ to hold. This follows from the fact that we can associate each $x_j$ with a disjoint subtree that has at least $x_j$ edges by definition of critical ranking. 
    \Cref{eq:cs} then follows from Cauchy--Schwarz.
\end{proof}

For star branch representation, we show that among branches sharing the same maximum rank $\mu$, all but one branch of minimum weight can be replaced by star branches of type $\mu$.
\begin{lemma}[$\mu$-star branch replacement]\label{lem:star_rep}
For any $\mu \ge 1$, let $I\subseteq[d]$ be a set of branches whose critical sets have maximum $\mu$, that is, $\max A_i = \mu$ for all $i \in I$, and let $j \in I$ be a branch of minimum weight in $I$. 
Then, (i) some critical assignment to the branches gives every branch in $I \setminus \{j\}$ a rank strictly greater than $\mu$; and (ii) $\Lab(A_1,\ldots,A_d)$ is unchanged if every branch in $I\setminus  \{j\}$ is replaced by a star branch of type $\mu$.
\end{lemma}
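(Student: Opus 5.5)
The plan is to work with the requirement formulation: by \cref{lem:dom} and the remark after it, $\Lab(A_1,\ldots,A_d)$ is read off from a minimum-weight choice of pairwise disjoint sets $X_i$ with $\wt(X_i)\ge\wt(A_i)+1$. By \cref{prop:branch-requirements}, each such set is realizable by $\ctr(A_i,c_i)$ for an admissible $c_i$; in a minimal choice we may take $X_i=\ctr(A_i,c_i)$ exactly. The first observation is that if $c_i\le\mu$ for some $i\in I$, then $\mu\in\ctr(A_i,c_i)$: either $c_i=\mu$ is impossible since $\mu\in A_i$, so $c_i<\mu$ and $\mu\in A_i$ stays visible. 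By pairwise disjointness, at most one branch of $I$ has rank at most $\mu$ in any valid assignment.

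For (i), take a critical assignment and suppose some $i\in I\setminus\{j\}$ has $c_i<\mu$ (hence every other branch of $I$, including $j$, has rank $>\mu$). I would swap the roles of $i$ and $j$. Set $X_j' := \ctr(A_i,c_i)$ and $X_i' := X_j=\ctr(A_j,c_j)$ with $c_j>\mu$. The set $X_i'$ has minimum element $c_j>\mu\ge\max A_i$, so it equals $\{c_j\}\cup(\text{ranks of }A_j\text{ above }c_j)=\{c_j\}$. Its weight is $2^{c_j-1}>\wt(A_i)$, so it suffices for $i$; indeed, giving $i$ rank $c_j$ yields exactly $\{c_j\}$. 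For $j$ we need $\wt(\ctr(A_i,c_i))\ge\wt(A_j)+1$, which holds since $\wt(\ctr(A_i,c_i))\ge\wt(A_i)+1\ge\wt(A_j)+1$ by minimality of $j$ and \cref{lem:dom}(b). \cref{prop:branch-requirements} then gives $j$ a ranking exposing a subset of $X_j'$. The union of visible sets does not grow, so the new assignment is still critical, and now only $j$ may be below $\mu$. The step I expect to need care is that \cref{prop:branch-requirements} realizes $X_j'$ only as a superset; but a subset still keeps disjointness and weight no larger, and criticality forces equality of weights, hence the result is again critical.

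For (ii), let $\Lab'$ denote the value after replacing branches in $I\setminus\{j\}$ by stars $[\mu]$, which have weight $2^\mu-1\le\wt(A_i)$ since $\max A_i=\mu$ means $A_i\subseteq[\mu]$. Requirements only decrease, so $\wt(\Lab')\le\wt(\Lab)$. Conversely, in any valid assignment for the replaced instance, a star branch must receive a rank $c>\mu$ (admissibility) and then exposes exactly $\{c\}$. The same rank $c$ is admissible for the original $A_i$ and also exposes $\{c\}$. So every replaced assignment transfers verbatim to the original with the same visible set, giving $\wt(\Lab)\le\wt(\Lab')$, and injectivity of $\wt$ gives equality of sets. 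Part (i) is not even needed for (ii) under this argument, but it explains why the exception should be $j$, and it is used later by the batched labeling procedure.
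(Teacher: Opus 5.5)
Your part (i) follows the paper's proof: at most one branch of $I$ can take a rank $\le\mu$, and you swap so that $i$ gets $c_j>\mu$ while $j$ gets a rank $c$ with $\ctr(A_j,c)\subseteq\ctr(A_i,c_i)$. The paper obtains $c$ directly from \cref{lem:dom}(a) applied to $\wt(A_j)\le\wt(A_i)<\wt(\ctr(A_i,c_i))$. That route also makes explicit that only the rank of $vv_j$ changes. Going through \cref{prop:branch-requirements} works too, but its proof should be started from the given critical ranking of $T_{v_j}$.

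\textbf{Gap in (ii).} You claim that a star of type $\mu$ has weight $2^\mu-1\le\wt(A_i)$. The inequality is reversed. Since $\max A_i=\mu$ gives $A_i\subseteq[\mu]$, we have $\wt(A_i)\le\wt([\mu])=2^\mu-1$, strictly unless $A_i=[\mu]$. So the replacement \emph{raises} the requirement from $\wt(A_i)+1$ to $2^\mu$. Monotonicity then only gives $\wt(\Lab(A_1,\ldots,A_d))\le\wt(\Lab')$, which is the same direction your transfer argument already yields. The direction $\wt(\Lab')\le\wt(\Lab(A_1,\ldots,A_d))$ is left unproved.

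That direction is not automatic. Your argument never uses \emph{which} branches are replaced, so it would equally ``show'' that replacing $j$ is harmless. Take a single branch with $A_1=\{1,3\}$: then $\Lab(A_1)=\{2,3\}$ has weight $6$, whereas a star of type $3$ must take a rank $\ge 4$ and exposes $\{4\}$, of weight $8$.

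The missing direction is exactly where (i) is needed, contrary to your closing remark. Take the critical assignment from (i), in which every $i\in I\setminus\{j\}$ has rank $c_i>\mu$ and therefore contributes $\{c_i\}$. Giving the corresponding star the same rank is admissible and contributes the same $\{c_i\}$. Hence this assignment stays valid after the replacement, with visible set $\Lab(A_1,\ldots,A_d)$. Combined with your transfer argument, the two minima coincide, which is how the paper concludes.
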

\begin{proof}
If one branch $i \in I$ receives a rank $c \le \mu$, then $c < \mu$ because $c \notin A_i$ and $\max A_i = \mu$.  Its visible set $\ctr(A_i, c) = \{c\} \cup \{x \in A_i : x > c\}$ therefore contains $\mu$. Because visible sets must be pairwise disjoint, at most one branch in $I$ can receive rank at most $\mu$.

Now consider a critical assignment. If some branch $i \in I \setminus \{j\}$ receives a rank $c_i \le \mu$, by the above observation, the minimum-weight branch $j$ must receive a rank $c_j > \mu$, contributing simply $\{c_j\}$. 
By the choice of $j$ and \cref{lem:dom}(b), $\wt(A_j) \le \wt(A_i) < \wt(\ctr(A_i, c_i))$, so \cref{lem:dom}(a) yields an admissible rank $c \notin A_j$ with $\ctr(A_j, c) \subseteq \ctr(A_i, c_i)$. 

We now reassign ranks: branch $j$ gets rank $c$ and branch $i$ gets rank $c_j$. 
The rank $c_j$ is admissible for branch $i$ because $c_j > \mu = \max A_i$, so branch $i$ now contributes $\{c_j\}$. 
The new contributions of $j$ and $i$ are contained in the old contributions of $i$ and $j$, respectively, while all other branch contributions remain unchanged. 
Thus, contributions remain pairwise disjoint and their union cannot increase in weight, so the new assignment is also critical. 
Under this assignment, each branch in $I \setminus \{j\}$ receives rank greater than $\mu$.
For the second claim, a star branch of type $\mu$ has admissible ranks $\{c : c>\mu\}$ and contributes $\{c\}$, exactly like a branch $i\in I$ that receives a rank above $\mu=\max A_i$. 

Hence, every valid assignment after the replacement is valid before it, with the same visible set, and conversely the critical assignment from (i) remains valid after the replacement. The two minima coincide, so $\Lab(A_1,\ldots,A_d)$ is unchanged since $\wt$ is injective.
\end{proof}

We combine the segment bound of \cref{lem:seg} with the star branch replacement rule of \cref{lem:star_rep} to bound the number of explicit branches and their space consumption. See \cref{fig:grouping} for an illustration.
\begin{lemma}[Size of critical sets]\label{lem:count}
Consider a vertex $w$ and a subset of its children whose subtrees contain $N_w$ vertices in total. Let $g_w$ be the number of distinct maximum ranks among their critical sets. Then, $g_w \le \sqrt{2N_w}$.

Moreover, consider distinct vertices $w_1, \ldots, w_k$, each having a subset of children such that their subtrees are mutually disjoint. Let $N$ be the total number of vertices in all these subtrees, and let $\mathcal A'$ be the collection of critical sets kept explicitly after applying \cref{lem:star_rep} across all these children. Then,
\begin{equation}\label{eq:groups}
 |\mathcal A'| \le \sqrt{2kN},
 \qquad
 \sum_{A \in \mathcal A'} \nint{A} \le \sqrt{|\mathcal A'|N} \le 2^{1/4} N^{3/4} k^{1/4}.
\end{equation}
\end{lemma}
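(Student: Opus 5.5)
The plan has three parts. First, bound the number of distinct maxima at a single vertex. Second, sum these counts over all vertices with Cauchy--Schwarz. Third, bound the total segment count using \cref{lem:seg}.

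\emph{A single vertex $w$.} Let the distinct maxima be $\mu_1<\cdots<\mu_{g_w}$, so $\mu_i\ge i$. Pick one child subtree for each maximum. A subtree whose critical set has maximum $\mu$ has $\er=\mu$ by \cref{prop:opt}. By the bound $\er(T)\le|T|-1$ in the same proposition, such a subtree has at least $\mu+1$ vertices. These chosen subtrees are disjoint and lie among the given children. Hence
\[
N_w\ge\sum_{i=1}^{g_w}(\mu_i+1)\ge\sum_{i=1}^{g_w}(i+1)\ge g_w(g_w+1)/2\ge g_w^2/2,
\]
which gives $g_w\le\sqrt{2N_w}$. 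Leaf children (maximum $0$, the empty set) are counted as star branches of type $0$. They need no explicit set. If one wants to count them as a group anyway, the inequality still holds with $\mu_i\ge i-1$ and $\mu_i+1$ vertices each. I will check this edge case but expect no trouble.

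\emph{Summing over $w_1,\ldots,w_k$.} After applying \cref{lem:star_rep} at each $w_\ell$, every group $(w_\ell,\mu)$ keeps exactly one explicit critical set. So $|\mathcal A'|=\sum_\ell g_{w_\ell}\le\sum_\ell\sqrt{2N_{w_\ell}}$. Since the subtrees are mutually disjoint, $\sum_\ell N_{w_\ell}\le N$, and Cauchy--Schwarz over the $k$ terms yields $|\mathcal A'|\le\sqrt{2kN}$.

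\emph{Total segments.} The explicit sets in $\mathcal A'$ are critical sets of pairwise disjoint subtrees with total size at most $N$. Apply \eqref{eq:cs} of \cref{lem:seg} with $f=|\mathcal A'|$ to get $\sum_{A\in\mathcal A'}\nint{A}\le\sqrt{|\mathcal A'|N}$. Substituting the previous bound gives $\sqrt{\sqrt{2kN}\,N}=2^{1/4}N^{3/4}k^{1/4}$.

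The main subtlety is the first step. We need the per-maximum subtree size bound (at least $\mu$ edges for maximum $\mu$) and the fact that the chosen representatives are disjoint. This follows because they are distinct children's subtrees, so the rest is routine.
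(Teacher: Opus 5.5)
Your proposal is correct and matches the paper's proof step for step: a subtree whose critical set has maximum $\mu$ has at least $\mu+1$ vertices by \cref{prop:opt}, which gives $g_w\le\sqrt{2N_w}$; Cauchy--Schwarz over the $k$ vertices then bounds $|\mathcal A'|$, and \eqref{eq:cs} with $f=|\mathcal A'|$ bounds the segments. One small correction: your equality $|\mathcal A'|=\sum_\ell g_{w_\ell}$ should be $\le$, because groups of type $0$ keep no explicit set, and the inequality is all the argument needs.
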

\begin{proof}
A child whose critical set has maximum rank $\mu$ has at least $\mu$ edges ($\mu+1$ vertices). Subtrees with $g_w$ distinct maximum ranks $\mu_1<\cdots<\mu_{g_w}$ therefore have at least $1+2+\cdots+g_w$ vertices in total, so $g_w(g_w+1)/2 \le N_w$, which yields $g_w \le \sqrt{2N_w}$.

By \cref{lem:star_rep}, at most one critical set is kept per vertex $w_i$ \textit{and} distinct maximum, so vertex $w_i$ contributes at most $g_{w_i} \le \sqrt{2N_{w_i}}$ explicit sets, where $N_{w_i}$ is the total size of the subtrees of its considered children. Summing over all $k$ vertices and applying the Cauchy--Schwarz inequality,
\[
|\mathcal A'| \le \sum_{i=1}^k g_{w_i} \le \sum_{i=1}^k \sqrt{2N_{w_i}} \le \sqrt{2k \sum_{i=1}^k N_{w_i}} \le \sqrt{2kN}.
\]
The explicit sets belong to $|\mathcal A'|$ pairwise disjoint subtrees of total size at most $N$, so \cref{lem:seg} bounds their total number of segments by $\sqrt{|\mathcal A'|N}$, which is at most $(2kN)^{1/4} N^{1/2} = 2^{1/4} N^{3/4} k^{1/4}$.
\end{proof}

\begin{figure}[ht]
\centering
\begin{tikzpicture}[
  vtx/.style={circle,fill=black,inner sep=1.3pt},
  lab/.style={font=\scriptsize},
  fr/.style={draw=#1,fill=#1!15,thick},
]
\newcommand{\fringe}[4]{
  \node[vtx] at (#1,-1) {};
  \draw[fr=#2] (#1,-1) -- (#1-0.33,-1.75) -- (#1+0.33,-1.75) -- cycle;
  \node[lab,text=#2] at (#1,-2.05) {#3};
}
\begin{scope}
  \node[vtx,label={[lab]above:$v$}] (v) at (2.3,0) {};
  \fringe{0}{grpA}{$\{1,2,3,7\}\,$}{}
  \fringe{1.15}{grpA}{$\{1,5,7\}$}{}
  \fringe{2.3}{grpA}{$\{1,6,7\}$}{}
  \fringe{3.45}{grpB}{$\{1,4\}$}{}
  \fringe{4.6}{grpB}{$\{2,3,4\}$}{}
  \foreach \x in {0,1.15,2.3,3.45,4.6} \draw (v) -- (\x,-1);
  \node[font=\footnotesize] at (2.3,-2.6) {explicit visible sets};
\end{scope}
\draw[-{Stealth[length=3mm]},thick] (5.2,-0.9) -- node[above,lab,align=center]{compression}
  node[below,lab]{} (7.3,-0.9);
\begin{scope}[xshift=8.1cm]
  \node[vtx,label={[lab]above:$v$}] (w) at (1.6,0) {};
  \fringe{0}{grpA}{$\{1\raisebox{0.08em}{\scriptsize $-$}3,7\}\,\,$}{}
  \fringe{0.9}{grpB}{$\{1,4\}$}{}
  \draw (w) -- (0,-1); \draw (w) -- (0.9,-1);
  \node[draw=grpA,thick,rounded corners=2pt,lab,fill=grpA!8] (z7) at (2.25,-1.25) {$2\times[7]$};
  \node[draw=grpB,thick,rounded corners=2pt,lab,fill=grpB!8] (z4) at (3.35,-1.25) {$1\times[4]$};
  \draw (w) -- (z7.north); \draw (w) -- (z4.north);
  \node[font=\footnotesize] at (1.6,-2.6) {compressed visible sets};
\end{scope}
\end{tikzpicture}
\caption{Applying the $\mu$-star branch replacement rule (\cref{lem:star_rep}) and segment representation at $v$.}
\label{fig:grouping}
\end{figure}

\subsection{Labeling with compressed sets}\label{sec:labeling}
Lam and Yue~\cite{LamY01} give a procedure \Labeling{} that computes a critical assignment of ranks to the branches at $v$ from the critical sets of the children. Running \Labeling{} bottom-up yields a critical ranking of the tree in linear total time. 

Our goal is to compute a critical assignment to the branches using compressed visible sets (cf.~\cref{sec:mpc-onepermax}). We show that \Labeling{} can be implemented (and thus a critical ranking produced) directly on these compressed visible sets without expanding individual branches. All star branches of a type except at most one survivor are labeled in a single resolution step; the survivor may take part in several.

\paragraph{Labeling procedure.} We first recall \Labeling{}. Its input consists of a vertex $v$ and the critical sets $A_1,\ldots,A_d$ of the child subtrees at $v$.  During \Labeling{}, every branch $i$ has a label $\ell_i\ge0$, where $\ell_i=0$ means unlabeled, and a \emph{remaining set} $\hat A_i=A_i\cap(\ell_i,\infty)$. A rank is \emph{free} if it lies in no remaining set and equals no label. Initially, all branches are unlabeled. The procedure repeats the following two steps until all labels are positive.
\begin{enumerate}[leftmargin=*]
\item \emph{Conflict.} Let $\alpha$ be the largest rank contained in at least
  two remaining sets, and let $\mathcal K$ be the set of branches whose
  remaining sets contain $\alpha$. If no rank is contained in two remaining
  sets, let $\alpha=0$ and let $\mathcal K$ be the set of unlabeled branches.
\item \emph{Resolution.} Let $\beta$ be the smallest rank above $\alpha$ in
  $\bigcup_i\hat A_i$, or $\beta=\infty$ if there is none, and let $\phi$ be
  the number of free ranks in $(\alpha,\beta)$. Then, we have one of the following cases:
  \begin{enumerate}[label=(\alph*)]
      \item If $\alpha>0$ and $\phi\ge|\mathcal K|-1$, choose $i_0\in\mathcal K$ whose remaining set restricted to $[\alpha]$ has minimum weight, and give branches in $\mathcal K\setminus\{i_0\}$ the $|\mathcal K|-1$ smallest free ranks above $\alpha$, in any order.
      \item If $\alpha=0$ and $\phi\ge|\mathcal K|$, give the branches in $\mathcal K$ the $|\mathcal K|$ smallest free ranks, in any order.
      \item Otherwise, let $x$ be the smallest free rank above $\beta$, and let $j$ be the branch whose remaining set contains the largest rank below $x$; this rank lies in $[\beta,x)$. Set $\ell_j=x$.
  \end{enumerate} 
\end{enumerate}
Note that increasing $\ell_i$ removes from $\hat A_i$ all ranks at most $\ell_i$. At termination, the visible set at $v$ is $\bigcup_i(\hat A_i\cup\{\ell_i\})$ and each branch receives its final label as its rank. Lam and Yue prove that the resulting ranking is critical~\cite[Section~4]{LamY01}.

\paragraph{Batched labeling.} In a standard execution of \Labeling{}, maintaining an explicit descriptor for each of the $z_i$ star branches of type $\mu_i$ would consume $\Omega(z_i)$ space, which can be as large as $\Omega(n)$. To achieve sublinear space, we run \Labeling{} with batched assignments. The $z_i$ star branches of type $\mu_i$ form a \textit{batch}: one of them (the potential survivor) is handled explicitly, and the other $z_i-1$ receive their ranks in a single step. The assigned ranks are added to a global set of batch label intervals $Z$, which are assigned in a final matching stage.
\begin{lemma}\label{lem:batched}
Let $v$ have $d$ branches with visible sets $A_1, \ldots, A_d$. Let $m$ branches be segment-represented using $M$ segments in total, and let the other branches be star branches of $g$ distinct types $\mu_1,\ldots,\mu_g$ with counts $z_1,\ldots,z_g$.

A critical assignment of ranks to the branches of $v$ and $\Lab(A_1,\ldots,A_d)$ can be computed using $O(M+m+g)$ total space and represented using $O(M+m+g)$ segments. Total runtime is polynomial.
\end{lemma}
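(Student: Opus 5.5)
We simulate \Labeling{} on a compressed state and show that every step of the uncompressed run maps to at most one compressed step, with all data structures kept in segment form. The state consists of three parts. First, the $m$ explicit branches, each with its remaining set $\hat A_i$ stored as segments (truncating at $\ell_i$ only shortens or splits one segment, so the count grows by $O(1)$ per relabeling). Second, for each star type $\mu_i$, a record holding the count of still-unlabeled copies and one designated \emph{survivor} copy kept explicitly. Third, the global set $Z$ of batch label intervals together with the labels already assigned. The free ranks are the complement of the union of the remaining sets, the labels, and $Z$. We maintain this union as a sorted list of segments. Since every star branch has remaining set $[\mu_i]$ until labeled, a whole type contributes the single segment $[1,\mu_i]$ with multiplicity. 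Hence $\alpha$, $\beta$, $|\mathcal K|$ and $\phi$ are computable by merging segment lists, with star multiplicities entering $|\mathcal K|$ as counts.

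The key step is to show that the star copies of one type other than the survivor are labeled in one resolution. A star branch of type $\mu$ enters a conflict only when $\alpha\le\mu$. Lam and Yue's procedure processes conflicts in decreasing $\alpha$, so all $z_i$ copies of type $\mu_i$ (while $z_i\ge2$) first appear in a conflict at $\alpha=\mu_i$, jointly with the explicit branches containing $\mu_i$. We argue as follows.
\begin{itemize}
\item In case (a), $i_0$ may be chosen to be a star survivor or an explicit branch. In either case, at least $z_i-1$ star copies receive the $|\mathcal K|-1$ smallest free ranks above $\alpha$, and we insert these ranks into $Z$ as $O(1)$ new segments, since free ranks in $(\alpha,\beta)$ form at most $O(\#\text{segments in that window})$ runs, charged to existing segments.
\item Case (c) relabels a single branch $j$. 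The rank it picks up is the largest one below $x$, which is an explicit segment endpoint or $\mu_i$ of a star. Labeling one star copy here simply decrements the count. We must check that case (c) never needs to touch $\Omega(z_i)$ copies individually. It relabels one branch and then returns to a conflict step, so we argue, following the correctness proof in \cite{LamY01}, that after at most one case-(c) step on a type, the type's remaining copies are all removed by a subsequent case (a)/(b) step. This holds because the conflict at $\mu_i$ persists and the batch condition $\phi\ge|\mathcal K|-1$ eventually holds with $\mathcal K$ containing all copies.
\item Case (b) with $\alpha=0$ handles the leftover star copies of type $0$ and the unlabeled explicit branches in one batch.
\end{itemize}

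For the bounds, each resolution step creates $O(1)$ new segments and deletes or shortens at least one existing segment or type. The number of steps is therefore $O(M+m+g)$, and so are the space and the number of output segments representing $\Lab=\bigcup_i(\hat A_i\cup\{\ell_i\})\cup Z$. Each step is a polynomial-time merge. At the end, the batch ranks in $Z$ are matched to the star copies of each type in sorted order, and output implicitly as (type, interval) pairs. Correctness then follows from Lam and Yue's theorem, because our batched run is literally a valid execution of \Labeling{} with a particular tie-breaking (the order of labels within a batch is arbitrary).

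The main obstacle is the argument in the case-(c) bullet: showing that case (c) interacting with a batch can be simulated without iterating over copies, i.e.\ bounding the number of resolution steps touching each type by $O(1)$ plus charges to segments. If this fails, the fallback is to simulate a run of consecutive case-(c) steps on the same type as one arithmetic step, since they advance through consecutive free ranks above $\mu_i$ and can be emitted as one segment of $Z$.
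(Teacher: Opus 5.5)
Your overall architecture is the paper's: run \Labeling{} on segment lists, keep one designated copy per star type explicit, pool the other copies' labels in $Z$, and match at the end. However, the step you flag as the main obstacle is a genuine gap, and your account of it is wrong. You let case (c) label a copy out of a batch, and then assert without proof that this happens at most once per type before a case-(a) step removes the rest. Nothing in your argument prevents case (c) from peeling copies off one at a time. That would cost $\Omega(z_i)$ steps and labels and break the $O(M+m+g)$ bound.

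The missing observation is that case (c) never touches a batch copy. While two copies of type $\mu>0$ are unlabeled, $\mu$ lies in two remaining sets, so $\alpha\ge\mu$. Case (c) labels the unique owner of a remaining rank in $[\beta,x)$ with $\beta>\alpha\ge\mu$, and such a rank is not in $[\mu]$. A copy can enter $\mathcal K$ only when $\alpha=\mu$; it cannot be $\alpha=0$, since $\mu$ is covered twice. So for $z_i\ge2$ the copies are first labeled in a case-(a) step at $\alpha=\mu_i$, which labels all but at most one of them. All copies have the same restricted weight, so the survivor can be taken to be the designated one, and afterwards only that explicitly tracked branch can meet case (c). Type-$0$ copies have empty remaining sets and are labeled only in the final case-(b) step. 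Your fallback of merging consecutive case-(c) steps is therefore unneeded, and it is also unjustified.

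Your counting also needs repair.
\begin{itemize}
\item Truncating $\hat A_i$ at $\ell_i$ can never split a segment. A step that merely shortens a segment lowers no segment count, so ``deletes or shortens'' does not bound the number of steps. The correct fact is that every new label is free, hence outside the branch's remaining set, so each relabeling of an ordinary branch deletes a whole segment. The only exception is the single $\alpha=0$ assignment per branch. This gives at most $(M+g)+(m+g)$ ordinary relabelings.
\item A batch step need not add only $O(1)$ segments, since ordinary labels may sit inside the block $[x,y]$ and later leave holes when they increase. The paper updates $Z\leftarrow Z\cup([x,y]\setminus\Lambda)$ and charges each segment endpoint of $Z$ to one of the at most $g$ block endpoints or to an adjacent ordinary label. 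It then uses criticality at termination (a free rank $r<y$ in a block could replace the batch label $y$) to show every block ends fully labeled. Hence $Z$ finally has at most $2g+m$ segments.
\item Since your $Z$ forgets types, the sorted matching needs an exchange argument. If types $\mu<\mu'$ received labels $c>c'$, then $c'>\mu'>\mu$, so the two labels can be swapped.
\end{itemize}
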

\begin{proof}
We simulate the \Labeling{} procedure of \cite{LamY01} using our compressed representation. The proof proceeds in three parts: first, we identify the structural invariants of the labeling process (properties (P1)--(P3)); second, we show that at most one star branch per type (the survivor) undergoes relabeling or participates in subsequent conflicts; and third, we prove that batch labels can be stored collectively as segments to reduce space consumption.

\paragraph{Labeling properties.} Our proof relies on the following properties of the labeling process:
\begin{enumerate}[label=(P\arabic*),nosep,leftmargin=*]
    \item In resolution steps (a)--(b), if $(\alpha,\beta)$ contains enough free ranks, all conflicting branches except one \emph{survivor} receive the smallest free ranks above $\alpha$, in any bijection, in a single step; the survivor is a conflicting branch whose remaining set restricted to $[\alpha]$ has minimum weight, and for $\alpha=0$ there is no survivor.
    \item In resolution step (c), the branch owning the largest remaining rank below the smallest free rank $x$ above $\beta$ receives label $x$, removing at least one rank below $x$ from its remaining set.
    \item Labels only increase, remaining sets only lose their smallest ranks, a labeled branch with empty remaining set is never relabeled, and the final labeling is a critical assignment.
\end{enumerate}
Observe that (P1) and (P2) follow easily from the definition of \Labeling{}; (P3) follows from Lam and Yue's analysis~\cite[Section 4]{LamY01} combined with \cref{lem:star_rep}.

\subparagraph{At most one survivor per type.} Consider star branches of type $\mu$. While at least two star branches of type $\mu$ with $\mu>0$ are unlabeled, rank $\mu$ lies in two remaining sets, so the conflict satisfies $\alpha\ge \mu$. A (P2) step labels the owner of a remaining rank \textit{above} $\alpha$, which is never one of these star branches. Hence, branches of type $\mu$ are first labeled when a conflict $\alpha=\mu$ is resolved by a (P1) step, and all but at most one survivor must be labeled in that step since $\phi \ge |\mathcal K| - 1$. This means that we can designate one star branch per type as the only potential survivor, which the arbitrary tie-breaking in (P1) permits. After a branch is labeled above $\mu$, it has an empty remaining set and thus is never relabeled by (P3). Branches of type $0$ are labeled together in the final step with $\alpha=0$; for uniformity, we likewise designate one star branch of type $0$ to be tracked explicitly. We call these $g$ designated star branches (one per type) and the $m$ explicit branches \emph{ordinary} branches, whereas the remaining $z_i-1$ star branches of each type $\mu_i$ receive \emph{batch labels}.

\subparagraph{Compact label and remaining set representation.} We bound the space used throughout the execution. Consider first ordinary branches. The remaining sets associated with ordinary branches initially have at most $M+g$ segments. When the label of an ordinary branch changes to a rank $c'$, rank $c'$ is free and lies in no segment, so $c'$ causes the removal of an \textit{entire} segment from its remaining set; only a resolution step (b) with $\alpha=0$ may remove no segments, which occurs at most once per branch. This means that no segment is ever fragmented or occupies more than $O(1)$ words. Consequently, ordinary labels change at most $(M+g)+(m+g)$ times in total.

Consider now batch labels. Every non-ordinary star branch is labeled only once, in a (P1) step with $\alpha=\mu_i$ (which assigns $|\mathcal K|-1$ free ranks for $\mu_i>0$, or $|\mathcal K|$ for $\mu_i=0$). However, we cannot afford to store them explicitly. We store batch labels as a set of segments $Z$, \textit{without recording types}. When $\alpha=\mu_i$ is resolved (with $z_i>1$), we give the $z_i-1$ non-ordinary star branches the $z_i-1$ smallest free ranks assigned in this step, which (P1) permits (if the ordinary star branch of type $\mu_i$ is also labeled in this step, its rank is recorded in $\Lambda$).

These ranks span a block $[x,y]\subseteq(\alpha,\beta)$ that contains no remaining rank; its maximum $y$ is a batch label, and every rank in $[x,y]$ is a label after the step. We update $Z\leftarrow Z\cup([x,y]\setminus\Lambda)$, where $\Lambda$ is the set of labels of ordinary branches after the step. Thus, $Z$ is the union of at most $g$ contiguous blocks minus labels of ordinary branches, that is, each endpoint of a segment in $Z$ is an endpoint of a contiguous block $[x,y]$ or is adjacent to a label of an ordinary branch. Hence, $Z$ has $O(M+m+g)$ segments. Counts of unlabeled branches and their remaining sets also fit into $O(M+m+g)$ space.

Note that the operations of \Labeling{}, including locating $\alpha$ and $\beta$, counting and taking free ranks, comparing restricted sets, and deleting segments from remaining sets, are performed on segment lists incurring no asymptotic space overhead.

\subparagraph{Critical ranking and visible set output.} Let $\Lambda_{\mathrm{fin}}$ be the final labels of ordinary branches, so $|\Lambda_{\mathrm{fin}}|\le m+g$. At termination, every rank in a block $[x,y]$ is a label: otherwise some block contains a free rank $r<y$, and moving the batch label $y$ to $r$ keeps the assignment valid (as $r>\alpha=\mu$) while decreasing its weight, contradicting criticality (P3). Hence $Z$ is the union of at most $g$ blocks minus $\Lambda_{\mathrm{fin}}$ and consists of at most $2g+m$ segments. (During the execution, $Z$ may have $O(M+m+g)$ segments, since ordinary labels that later increase leave holes in blocks.)

We show how to match non-ordinary star branches to a batch label such that each star branch receives a label that exceeds its type. We have $g$ pairs $(\mu_i, z_i-1)$ and the segments of $Z$ such that $\sum_{i=1}^g (z_i - 1) = \sum_{[x,y]\in Z}(y-x+1)$. We sort both increasingly and match them in order, producing a valid ranking, which is critical by (P3). Note that if types $\mu<\mu'$ were matched to labels $c>c'$ by the execution, then $c'>\mu'>\mu$ and $c>c'>\mu'$, so the two labels can be exchanged, and repeating this sorts the matching. Since $Z$ has $O(m+g)$ segments and there are $O(g)$ type changes, the matching can be described using $O(m+g)$ space: compute prefix sums over the counts and note the labels where a type change occurs. 

The final visible set $\Lab(A_1,\ldots,A_d)$ has $O(M+m+g)$ segments: an explicit branch contributes at most $\nint{A_i}+1$ segments, and the star branches contribute $O(m+g)$ segments in total. Since \Labeling{} runs in polynomial time and so does each operation described, the total running time remains polynomial.
\end{proof}

\subsection{Solving a subtree on one machine}\label{sec:mpc-instance}
We use our size-threshold decomposition to divide the problem into smaller subproblems that can be solved locally. Consider a subtree $T_w$ with $N>B$ vertices. We use \cref{lem:size-threshold} with $\tau=\lfloor B\rfloor+1$ to retain each vertex $v$ with $|T_v|>B$. Consequently, attached subtrees are the maximal subtrees of size at most $B$ and our retained tree has at most $HN/B$ vertices.

We say that a subtree $T_u$ is \emph{solved} if its edges carry a critical
ranking of $T_u$. A solved subtree is never modified again. The following lemma relates the local memory size to the subtree size and the number of retained vertices, showing when $T_w$ can be solved in $O(1)$ rounds.

\begin{lemma}\label{lem:instance}
Let $S\ge n^{3/4}$. Let $T_w$ have $N>B$ vertices, $k$ of which are retained, and suppose every attached subtree of $T_w$ is solved. Let $N'$ be the number of vertices of $T_w$ outside its largest attached subtree. If $N'^{3/4}k^{1/4}=O(S)$, then $O(1)$ rounds suffice to compute a critical ranking of $T_w$ from its solved attached subtrees with $O(S)$ local memory and $O(N)$ total space.
\end{lemma}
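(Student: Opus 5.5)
The plan is to gather a compressed description of $T_w$ onto one machine, run batched labeling bottom-up over the retained tree locally, and then broadcast the resulting ranks back to the edges. The compressed description consists of the retained tree $T_0$ (with $k$ vertices), the compressed critical sets of the attached subtrees, and the star-branch counts. The only nontrivial issue is that the information about the largest attached subtree is not covered by the $N'$ bound, so it must be handled separately.

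\textbf{Step 1: compress the attached subtrees.} Each solved attached subtree $T_u$ hanging from a retained vertex $v$ knows its critical set in segment representation, with $O(\sqrt{|T_u|})=O(\sqrt n)$ words by \cref{lem:seg}. Using preorder intervals (\cref{lem:preorder}), we key each such subtree by the pair $(v,\max\crit(T_u))$. We then apply \cref{lem:select} to keep one minimum-weight critical set per key, together with the count of the group. By \cref{lem:star_rep}, every other member of a group may be replaced by a star branch of its type, so that $\Lab$ is preserved at every retained vertex.

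\textbf{Step 2: bound the size of the compressed instance.} We treat the largest attached subtree $T^\ast$ separately. Its critical set alone has fewer than $\sqrt{N}\le\sqrt n\le S$ segments, so it fits in memory. For all other attached subtrees, \cref{lem:count} applies with $k$ retained vertices and total size at most $N'$. It bounds the number of explicit sets by $\sqrt{2kN'}$ and their total segment count by $O(N'^{3/4}k^{1/4})=O(S)$. Since $k\le N'$, the tree $T_0$ itself has $k=O(S)$ vertices, and there are at most $O(k\sqrt{N'})$ star types and counts, which is covered by the same bound. Altogether the compressed instance uses $O(S)$ words, and sorting or aggregation sends it to one machine in $O(1)$ rounds.

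\textbf{Step 3: solve locally and distribute the ranks.} On the chosen machine, we process $T_0$ bottom-up. At each retained vertex $v$, we invoke \cref{lem:batched} on the explicit children (the compressed attached sets plus the already computed critical sets of retained children) together with the star batches. By \cref{lem:dom}(c), this yields $\crit(T_v)$ and a critical assignment. The main obstacle is keeping the space of the intermediate sets $\crit(T_v)$ bounded as they propagate upward. \cref{lem:batched} states that the output has $O(M+m+g)$ segments, but summing naively along a path could compound this. I would first appeal instead to \cref{lem:seg}, which bounds each $\crit(T_v)$ by $\sqrt{|T_v|}$ segments. I would then argue that we only need to retain the sets of the current frontier, which are subtrees with disjoint vertex sets, so their total size is at most $\sqrt{kN}$. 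The total size $N$ counted here includes $T^\ast$ only once, which suffices.

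The output consists of the ranks of the $T_0$ edges and the edges into attached subtrees. Batch labels are stored as segments of $Z$, matched to types in sorted order. The machine sends back the ranks of $T_0$ edges and of explicit branch edges directly. For a star batch, it sends back, for each type, a segment list together with prefix offsets. Each star-branch edge then learns its own rank in $O(1)$ rounds by sorting the group members and using its index within the group. This costs $O(N)$ total space and leaves the solved subtrees unchanged.
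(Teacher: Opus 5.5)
Your overall route matches the paper's: compress via \cref{lem:star_rep} and grouped selection, keep the largest attached subtree $T^\ast$ explicit, run \cref{lem:batched} in postorder on one machine, and map batch labels back by sorted index. However, two of your space bounds do not give $O(S)$. The first fails exactly where the lemma's separation of $N'$ from $N$ is needed.

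\textbf{Frontier bound in Step~3.} You bound the frontier (the computed sets $\crit(T_v)$ of retained vertices whose parent is not yet processed) by $\sqrt{kN}$ via \cref{eq:cs}, and say that counting $T^\ast$ once suffices. It does not. The hypothesis only controls $N'$, while $|T^\ast|$ may be $\Theta(n)$; in the large-diameter application $T^\ast=T_{h_b}$ with $|T_{h_b}|$ close to $n$. Meanwhile $k$ may be $\Theta(S)$, so $\sqrt{kN}$ can be of order $\sqrt{Sn}\gg S$ when $S=n^{3/4+\delta}<n$.

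The fix is to split the frontier. Its subtrees are pairwise disjoint, so at most one contains $T^\ast$, and that one has fewer than $\sqrt{n}\le S$ segments by \cref{lem:seg}. All the others lie in $T_w\setminus T^\ast$, which has at most $N'$ vertices, so \cref{eq:cs} bounds them by $\sqrt{kN'}\le N'^{3/4}k^{1/4}$, using $k\le N'$ (all retained vertices lie outside $T^\ast$). This gives $O(\sqrt{n}+\sqrt{kN'})=O(S)$.

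\textbf{Star-type count in Step~2.} Your claim that $O(k\sqrt{N'})$ star types is covered by $O(N'^{3/4}k^{1/4})$ is false in general; for $k=\sqrt{N'}$ the former is $N'$ and the latter is $N'^{7/8}$. The right count comes from a fact you already state. Each $(v,\mu)$-group with $\mu\ge 1$ keeps exactly one explicit set, so there are at most $\sqrt{2kN'}$ such groups. Add at most $k$ groups of single-vertex attached subtrees ($\mu=0$); these are not covered by \cref{lem:star_rep} and should simply be counted. The total is $\sqrt{2kN'}+k=O(N'^{3/4}k^{1/4})$, again using $k\le N'$.

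With these two corrections, your argument coincides with the paper's proof.
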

\begin{proof}
We gather the information required to compute a critical ranking of the retained tree of $k$ vertices at a single machine. The first step is to compress the visible sets of its attached subtrees in parallel using the segment and star-based representations from \cref{sec:mpc-onepermax}. The computation of a critical ranking is then executed locally via the batched labeling approach of \cref{sec:labeling}.

\subparagraph{Attached subtree compression.}
For each attached subtree $T_u$, we have a record consisting of the edge $vu$ to its retained parent, its segment-represented critical set $\crit(T_u)$, and the maximum rank $\mu$ of $\crit(T_u)$ (with $\mu=0$ if $\crit(T_u)=\varnothing$). Each record has size $O(\sqrt N)$ and their total size is $O(N)$ by~\cref{lem:seg}. Let $T_{u^*}$ be the largest attached subtree. Its critical set has $O(\sqrt{n})=O(S)$ segments, so we keep it explicitly and apply the bounds below to the remaining attached subtrees, which contain at most $N'$ vertices.

Leaf attached subtrees ($\mu=0$) at each retained vertex $v$ are stored by their count as star branches of type $0$, while for $\mu\ge 1$, by \cref{lem:star_rep} applied at $v$ among the roots of its remaining attached subtrees, we only need to keep explicitly a minimum-weight critical set in each $(v,\mu)$-group and replace the rest by star branches of type $\mu$. Using the grouped-selection primitive from \cref{lem:select}, the minimum (for $\mu\ge 1$) and size of each $(v,\mu)$-group can be computed in $O(1)$ rounds. Let $G$ be the number of $(v,\mu)$-groups across the remaining attached subtrees. Since all $k$ retained vertices lie outside $T_{u^*}$, we have $k\le N'$. There are at most $k$ groups with $\mu=0$ (contributing no segments), and by \cref{lem:count} applied to the $k$ retained vertices---each with its non-leaf attached child subtrees outside $T_{u^*}$, which are pairwise disjoint and contain at most $N'$ vertices---there are at most $\sqrt{2kN'}$ groups with $\mu\ge 1$, whose kept critical sets have at most $2^{1/4}N'^{3/4}k^{1/4}$ segments in total. Thus $G\le \sqrt{2kN'}+k = O(N'^{3/4}k^{1/4})$, and the \emph{compressed instance}, consisting of these $G$ group records together with $\crit(T_{u^*})$ and the parent pointers of the $k$ retained vertices, has total size $O(N'^{3/4}k^{1/4}+\sqrt{n})=O(S)$ and can be collected onto a single machine.

\subparagraph{Ranking the retained tree.}
We process retained vertices in postorder. For each retained vertex $v$, we compute $\crit(T_v)$ using \cref{lem:batched}. Its children are either roots of attached subtrees or retained vertices, whose critical sets have already been computed, and we can apply induction over the postorder proving that the ranking of $T_v$ is indeed critical by \cref{lem:dom}(c) and \cref{lem:star_rep}(ii).

It remains to show that the local memory is never exceeded throughout the computation. Since rankings of attached subtrees can be compressed into $O(S)$ space and computing the critical ranking at a retained vertex can be done without increasing space usage by \cref{lem:batched}, the sole memory violation may arise from solved retained vertices. Since the subtrees of retained vertices waiting for their parent are pairwise disjoint and at most one of them contains $T_{u^*}$, their critical sets have $O(\sqrt{n}+\sqrt{kN'})=O(S)$ segments in total by~\cref{eq:cs}, using $k\le N'$; they are discarded once the parent is processed.

\subparagraph{Mapping labels back to star branches.}
At each retained vertex $v$, the assignment computed by \cref{lem:batched} gives, for every type $\mu$, the batch labels of the star branches of type $\mu$ at $v$ as a sorted list of segments; over all retained vertices these lists have $O(G+k)=O(S)$ segments. We sort the roots of the corresponding attached subtrees arbitrarily, index them $1,2,\ldots$, and give the root with index $i$ the $i$-th smallest label of its type. Sending segment endpoints and prefix counts to the machines storing the sorted roots via a broadcast tree takes $O(1)$ rounds.
\end{proof}

\subsection{MPC Algorithm}\label{sec:mpc-stages}
We prove our main MPC result (\cref{thm:mpc}). Our algorithm processes the tree bottom-up in stages governed by increasing size thresholds $S=B_1<B_2<\cdots<B_L\ge n$, where the first stage solves maximal subtrees of size at most $B_1$ locally and each subsequent stage $j$ computes a critical ranking for every maximal subtree $T_w$ of size at most $B_j$ from the already-solved subtrees. An initial $O(\log D)$-round preprocessing step roots the tree and computes subtree sizes, depths, and preorder numbers so that vertices can identify their stage-$j$ subtrees in $O(1)$ rounds.

We distinguish two cases by the height $H$ (where $H-1\le D\le 2(H-1)$), which every machine knows after preprocessing. Recall that $S\ge n^{3/4+\delta}$. If $H\le n^{2\delta}$ (\cref{sec:mpc-a}), we choose the thresholds $B_j$ so that every stage-$j$ subtree satisfies the memory condition of \cref{lem:instance} and is solved in $O(1)$ rounds; the thresholds reach $n$ in $L=O(\log(1/\delta))=O(1)$ stages. If $H>n^{2\delta}$ (\cref{sec:mpc-b}), we instead double the thresholds at each stage. This ensures that the retained vertices of each stage-$j$ subtree form a simple path, which we partition into subpaths and solve bottom-up in $O(1)$ rounds per subpath via \cref{lem:instance}, yielding a geometric series of rounds that sums to $O\bigl(\frac{n^{3/4}D^{1/4}}{S}+\log D\bigr)$.

\subsubsection{Small diameter case}\label{sec:mpc-a}
A threshold parameter $B_j$ determines the size of the maximal subtrees processed in stage $j$, for $L$ stages:
\begin{equation}\label{eq:thresholds}
 B_1=S,\qquad B_{j+1}=S\Bigl(\frac{B_j}{H}\Bigr)^{1/4}, \qquad j \in [L-1].
\end{equation}
Note that $B_j = H (S/H)^{\frac{4}{3}(1-4^{-j})}$ is bounded by $S^{4/3}$ and that $L$ will be chosen to guarantee $B_L \ge n$, so that the entire tree is processed. In particular, since $H\le n^{2\delta}$, we have $S\ge n^{3/4+\delta}\ge n^{3/4+\delta/2}H^{1/4}$, so $B_L\ge n$ holds for  $L=O(\log(1/\delta))=O(1)$. Our small-diameter algorithm is presented next.

\subparagraph{Preprocessing and first stage.}
The first step is to root the tree, compute subtree sizes, depths, and preorder numbers via \cref{lem:preorder} in $O(\log D)$ rounds. Every vertex also learns the subtree size of its parent in one additional round and $H$ via global communication. All machines can then determine the thresholds and $L$.

The first stage processes maximal subtrees $T_w$ of size at most $B_1=S$ (where $w$ is the tree root or $|T_{\mathrm{parent}(w)}|>S$). Using preorder numbers and prefix sums over $|T_w|$, vertices with $|T_v|\le S$ identify their maximal-subtree root $w$ and pack these disjoint subtrees onto machines of $O(S)$ vertices each in $O(1)$ rounds. Each machine then computes a critical ranking for each of its assigned subtrees via the sequential algorithm of Lam and Yue~\cite{LamY01}, which uses linear space, and stores the resulting visible set $\crit(T_w)$ in segment representation.

\subparagraph{Stage $j\ge2$.}
We compute a critical ranking for each maximal subtree $T_w$ with $B_{j-1}<|T_w|\le B_j$. Each root $w$ with $B_{j-1}<|T_w|\le B_j$ recognizes itself locally since it is the tree root or $|T_{\mathrm{parent}(w)}|>B_j$. All vertices $v$ that satisfy $|T_v|\le B_j$ and $|T_{\mathrm{parent}(v)}|>B_{j-1}$ are part of this stage: $v$ is retained if $|T_v| > B_{j-1}$, otherwise it roots an attached subtree solved in a previous stage. Each vertex $v$ learns the root $w$ of its subtree (the highest ancestor with $|T_w|\le B_j$) using preorder numbers via standard $O(1)$-round MPC primitives.

We can then apply \cref{lem:instance} to solve each $T_w$-subtree. Each subtree has at most $B_j$ vertices with at most $HB_j/B_{j-1}$ retained vertices by \cref{lem:size-threshold}, so
\[
 B_j^{3/4}\left(\frac{HB_j}{B_{j-1}}\right)^{1/4} = B_j\Bigl(\frac{H}{B_{j-1}}\Bigr)^{1/4}=S
\]
by~\cref{eq:thresholds}, and \cref{lem:instance} solves each subtree in $O(1)$ rounds using $O(S)$ local memory and $O(|T_w|)$ space per subtree, ensuring linear total space since subtrees are disjoint. 

Since $B_L\ge n$, the whole tree is solved after stage $L$. By induction, assuming that each subtree processed in earlier stages is critically ranked, the final ranking is optimal by \cref{lem:instance} and the fact that a critical assignment to the branches leads to a critical ranking of the whole tree (\cref{lem:dom}(c)). Each stage takes $O(1)$ rounds, so the total round complexity is $O(\log D+L)=O(\log D)$, with $O(S)$ local memory and $O(n)$ total space.
    
\subsubsection{Large diameter case}\label{sec:mpc-b}
We now modify our algorithm from \cref{sec:mpc-a} to handle the case $H>n^{2\delta}$, obtaining a trade-off between round complexity and local memory. We use doubling thresholds
\[
 B_j = \min\{2^{j-1}S, \, n\} \qquad (j = 1, 2, \ldots, J), \qquad J = \lceil \log_2(n/S) \rceil + 1.
\]
In addition to the preprocessing of \cref{sec:mpc-a}, every vertex learns the largest subtree size among its children in $O(1)$ rounds, and whenever a vertex learns the root $w$ of its subtree, it also learns $|T_w|$ and $\mathrm{depth}(w)$.

Stage $1$ is as in \cref{sec:mpc-a}. Stage $j\ge2$ processes each maximal subtree $T_w$ with $B_{j-1} < |T_w| \le B_j$. Recall that a vertex $v \in T_w$ is retained if $|T_v| > B_{j-1}$. If a retained vertex had two retained children $a$ and $b$, then $|T_a| + |T_b| > 2B_{j-1} \ge B_j \ge |T_w|$, a contradiction. Thus, the retained vertices of $T_w$ form a simple path $w=h_1,\ldots,h_p$ with $p\le H$, where $h_{i+1}$ is the unique retained child of $h_i$ ($1 \le i < p$) and every child of $h_p$ roots an attached subtree. Each retained vertex $h_i$ knows its path index $i = \mathrm{depth}(h_i) - \mathrm{depth}(w) + 1$ and whether it is $h_p$, i.e., whether all its children have subtree size at most $B_{j-1}$.

Since the entire subtree $T_w$ may not satisfy the memory condition of \cref{lem:instance}, we partition the retained path $h_1, \ldots, h_p$ into contiguous subpaths and process them bottom-up in sub-stages. Specifically, using step sizes
\[
 \Delta N = \Bigl\lceil S\Bigl(\frac{B_j}{H}\Bigr)^{1/4}\Bigr\rceil, \qquad\Delta k = \Bigl\lceil S\Bigl(\frac{H}{B_j}\Bigr)^{3/4}\Bigr\rceil,
\]
each retained vertex $h_i$ locally computes its sub-stage index
\[
 m(i) = 1 + \Bigl\lfloor \frac{|T_w| - |T_{h_i}|}{\Delta N} \Bigr\rfloor + \Bigl\lfloor \frac{i - 1}{\Delta k} \Bigr\rfloor.
\]
Since $|T_w| - |T_{h_i}| < B_j$ and $i - 1 < p \le H$ are increasing in $i$, the index $m(i)$ partitions $h_1, \ldots, h_p$ into at most
\[
 r_j \le 1 + \frac{B_j}{\Delta N} + \frac{H}{\Delta k} \le 1 + \frac{2 B_j^{3/4} H^{1/4}}{S}
\]
subpaths $P_m = (h_a, \ldots, h_b)$, each satisfying $b - a <\Delta k$ and $|T_{h_a} \setminus T_{h_b}| = |T_{h_a}| - |T_{h_b}| < \Delta N$.
We process the subpaths $P_m = (h_a, \ldots, h_b)$ in reverse order $m = r_j, r_j - 1, \ldots, 1$ (skipping empty $P_m$), so that $\crit(T_{h_{b+1}})$ is already known when $P_m$ is reached:

\begin{enumerate}
 \item First, we apply \cref{lem:instance} to $T_{h_b}$ with $B=|T_{h_b}|-1$, so that $h_b$ is its only retained vertex. Its attached subtrees ($T_{h_{b+1}}$ if $b<p$, and the subtrees attached to $h_b$) are solved, and $N'^{3/4}k^{1/4}\le n^{3/4}\le S$ with $k=1$, so this step computes $\crit(T_{h_b})$ in $O(1)$ rounds.
 \item Second, if $a<b$, we apply \cref{lem:instance} to $T_{h_a}$ with $B=|T_{h_b}|$. Its retained vertices are $h_a,\ldots,h_{b-1}$, its largest attached subtree is the solved $T_{h_b}$, and $N'=|T_{h_a}\setminus T_{h_b}|<\Delta N$. Since
 \[
  N'^{3/4}(b-a)^{1/4}<(\Delta N)^{3/4}(\Delta k)^{1/4}=O(S),
 \]
 this step computes $\crit(T_{h_a})$ in $O(1)$ rounds using $O(S)$ local memory.
\end{enumerate}

Processing all $r_j$ sub-stages of stage $j$ across all disjoint stage-$j$ subtrees $T_w$ in parallel takes $O(r_j)$ rounds and $O(n)$ total memory. Summing over stages $j = 2, \ldots, J$ and using $B_j\le 2^{j-J+1}n$, the geometric series with ratio $2^{3/4} > 1$ gives a total round complexity of
\[
 O(\log D) + \sum_{j=2}^{J} O\Bigl(1 + \frac{B_j^{3/4} H^{1/4}}{S}\Bigr)
 = O(\log D + J) + O\Bigl(\frac{H^{1/4}}{S}\sum_{j=2}^{J} (2^{j-J+1}n)^{3/4}\Bigr)
 = O\Bigl(\frac{n^{3/4} D^{1/4}}{S} + \log D\Bigr),
\]
where the last equality uses $H\le D+1$ and $J=O(\log(n/S))=O(\log n)=O(\log H)=O(\log D)$, since $H>n^{2\delta}$ and $\delta$ is fixed. Together with \cref{sec:mpc-a}, this completes the proof of \cref{thm:mpc}.

\section*{AI Usage}
\cref{thm:mpc} was developed through interaction with ChatGPT 6 Pro. We prompted the model with our unpresented idea for $O(\log D)$-round 2-approximate edge ranking in MPC. We then considered a strongly sublinear representation of critical sets~\cite{LamY01} that led to the presented MPC algorithm. While trying to optimize the MPC algorithm for the general case, the authors tested various hypotheses about its hardness in PRAM, and among them key algorithmic ideas leading to \cref{thm:pcomplete} and \cref{thm:approx-nc-main} emerged. We have rewritten the paper for human-centered exposition. We used additional AIs for review and copyediting. The authors take full responsibility for the manuscript’s content and the correctness of results.

\appendix
\section{Further related work}\label{sec:relwork}
\paragraph{Node ranking, tree-depth, and generalized rankings.}
A closely related problem is \emph{node ranking} (also called vertex ranking or ordered coloring~\cite{KatchalskiMS95}), which assigns positive integer ranks to vertices so that every path between two vertices of equal rank contains a vertex of larger rank. On trees, sequential node ranking was solved in $O(n\log n)$ time by Iyer, Ratliff, and Vijayan~\cite{IyerRV88} and optimally in $O(n)$ time by Sch{\"a}ffer~\cite{Schaffer89}. On general graphs, the minimum node ranking number coincides with \emph{tree-depth}~\cite{BodlaenderDJKKMT98,NesetrilO06}, a central parameter in structural graph theory and distributed and parameterized algorithms~\cite{NesetrilO16distributed,BorowieckiDO23,FominFMRT24}. Despite this close connection—indeed, an edge ranking of $T$ is equivalently a node ranking of its line graph $L(T)$—node ranking on trees does not share the parallel hardness of edge ranking: parallel algorithms based on tree contraction~\cite{MillerR85} placed optimal tree node ranking in $\NC$~\cite{LiangDL90,delaTorreG91,delaTorreGP92}, culminating in an optimal $O(\log n)$-time, $O(n)$-work PRAM algorithm~\cite{KashemR04}. Structurally, assigning a single rank to a vertex separates all of its subtrees at once, whereas the edges incident to a high-degree vertex form a clique in $L(T)$ and compete for distinct visible ranks. When the maximum degree is bounded, this clique bottleneck disappears, which explains why exact edge ranking (and generalized $c$-edge-ranking~\cite{ZhouKN98}) admits $\NC$ algorithms on bounded-degree trees~\cite{delaTorreGS95} and bounded-degree partial $k$-trees~\cite{KashemZN98} (see also~\cite{KashemZN00} for generalized vertex rankings of partial $k$-trees, and~\cite{Yue96,LamY98hard} for NP-hardness on general graphs).

\paragraph{Weighted tree search and optimization variants.}
Through its equivalence to edge-query binary search on trees~\cite{linial1985searching,BFN99,OnakP06,MozesOW08,D08}, tree edge ranking has also been extensively studied in weighted settings where edges or vertices have non-uniform query costs~\cite{CicaleseJLV12,DereniowskiKUZ17,DereniowskiW24}. Unlike unweighted tree edge ranking, many weighted variants become strongly NP-hard even sequentially and are studied through approximation algorithms~\cite{CicaleseJLV12,DereniowskiKUZ17}. Related optimization problems include finding minimum-height join trees in relational query processing~\cite{IbarakiK84,DereniowskiK06} and computing spanning trees of general graphs with minimum edge ranking~\cite{MakinoUI01}.

\paragraph{$\Pc$-completeness and greedy tree procedures.}
Within $\Pc$-completeness theory~\cite{Ladner75,Goldschlager77,GoldschlagerSS82,Cook85,KarloffR89,GreenlawHR95}, showing that a specific greedy heuristic or subroutine is $\Pc$-complete does not by itself rule out fast parallel algorithms for the underlying optimization problem, as alternative non-greedy $\NC$ strategies sometimes exist (cf.\ tree precedence scheduling~\cite{HelmboldM87} or tree node ranking~\cite{delaTorreGP92}). In particular, de~la~Torre, Greenlaw, and Sch{\"a}ffer~\cite{delaTorreGS95} proved that their \textsc{FillSlots} subroutine is $\Pc$-complete, while leaving open whether tree edge ranking itself lies in $\NC$~\cite[Problem~B.1.2]{GreenlawHR95}. Our reduction settles this question directly for the decision and constant-error approximation problems on trees of constant diameter.

\paragraph{Massively parallel computation.}
The MPC model~\cite{KarloffSV10,Goodrich99,GoodrichSZ11,BeameKS17,AndoniNOY14} has been studied extensively over the past two decades (see~\cite{im2023massively} for a survey), with major progress on connectivity, shortest paths, and minimum spanning trees~\cite{AndoniNOY14,AndoniSSWZ18,BehnezhadDELM19,ACP20,FGG22,CoyC23,BalliuLMOU23,CoyCMM25,AzarmehrBJLMZ25}, local symmetry breaking~\cite{CLM+18,ghaffari2018improved,ABB+19,behnezhad2019exponentially,ghaffari2019sparsifying,CFG+19,FGG23,GP24}, string dynamic programming~\cite{ImMS17,HSS19,MEG+21}, and clustering~\cite{LackiMOS20,EpastoMMZ22}, alongside conditional round lower bounds in the strongly sublinear memory regime~\cite{RoughgardenVW18,GhaffariKU19,CDP21,NanongkaiS22}. On trees specifically, $O(\log D)$-round deterministic MPC algorithms are known for rooting, connectivity, and dynamic programming~\cite{BBD+18,BalliuLMOU23,GuptaLMPSSSUV23}. Notably, however, general MPC tree dynamic programming frameworks~\cite{BBD+18,GuptaLMPSSSUV23} require each subtree's DP state to fit in constant or polylogarithmic space so that paths and branches can be contracted locally. Because exact tree edge ranking requires passing critical sets of up to $\Omega(n)$ ranks—and combining them at high-degree vertices is $\Pc$-complete—such frameworks do not apply directly.

\begingroup
\small
\bibliographystyle{alphaurl}
\bibliography{refs}
\endgroup

\end{document}